\documentclass[11pt]{article}
\usepackage{graphicx}
\usepackage{latexsym}
\usepackage{amssymb}
\usepackage{amsmath}
\usepackage{amsthm}
\usepackage{forloop}
\usepackage[paper=letterpaper,margin=1in]{geometry}
\usepackage[ruled,vlined,linesnumbered]{algorithm2e}
\usepackage{nicefrac}



\newtheorem{theorem}{Theorem}[section]
\newtheorem{lemma}[theorem]{Lemma}

\newtheorem{corollary}[theorem]{Corollary}

\newtheorem{observation}[theorem]{Observation}

\makeatletter
\newtheorem*{rep@theorem}{\rep@title}
\newcommand{\newreptheorem}[2]{%
\newenvironment{rep#1}[1]{%
 \def\rep@title{#2 \ref{##1}}%
 \begin{rep@theorem}}%
 {\end{rep@theorem}}}
\makeatother

\newreptheorem{theorem}{Theorem}
\newreptheorem{reduction}{Reduction}

\newcommand{\defcal}[1]{\expandafter\newcommand\csname c#1\endcsname{{\mathcal{#1}}}}
\newcommand{\defbb}[1]{\expandafter\newcommand\csname b#1\endcsname{{\mathbb{#1}}}}
\newcounter{calBbCounter}
\forLoop{1}{26}{calBbCounter}{
    \edef\letter{\Alph{calBbCounter}}
    \expandafter\defcal\letter
		\expandafter\defbb\letter
}


\newcommand{\eps}{\varepsilon}
\newcommand{\ie}{{\it i.e.}}
\newcommand{\eg}{{\it e.g.}}
\newcommand{\nnR}{{\bR_{\geq 0}}}
\newcommand{\characteristic}{{\mathbf{1}}}
\newcommand{\RSet}{{\mathtt{R}}}

\begin{document}

\title{\textbf{Constrained Submodular Maximization via a Non-symmetric Technique}}
\author
{
Niv	Buchbinder\thanks{Department of Statistics and Operations Research, School of Mathematical Sciences, Tel Aviv university, Israel. Email: \texttt{niv.buchbinder@gmail.com}.}
\and
Moran Feldman\thanks{Depart. of Mathematics and Computer Science, The Open University of Israel.
Email: \texttt{moranfe@openu.ac.il}.}
}
\maketitle

\begin{abstract}
The study of combinatorial optimization problems with a submodular objective has attracted much attention in recent years. Such problems are important in both theory and practice because their objective functions are very general. Obtaining further improvements for many submodular maximization problems boils down to finding better algorithms for optimizing a relaxation of them known as the multilinear extension.

In this work we present an algorithm for optimizing the multilinear relaxation whose guarantee improves over the guarantee of the best previous algorithm (which was given by Ene and Nguyen (2016)). Moreover, our algorithm is based on a new technique which is, arguably, simpler and more natural for the problem at hand. In a nutshell, previous algorithms for this problem rely on symmetry properties which are natural only in the absence of a constraint. Our technique avoids the need to resort to such properties, and thus, seems to be a better fit for constrained problems.
\end{abstract} 
\thispagestyle{empty}
\pagenumbering{Alph}
\newpage
\setcounter{page}{1}

\pagenumbering{arabic}

\section{Introduction}

The study of combinatorial optimization problems with a submodular objective has attracted much attention in recent years. Such problems are important in both theory and practice because their objective functions are very general---submodular functions generalize, for example, cuts functions of graphs and directed graphs, the mutual information function, matroid weighted rank functions and log-determinants. More specifically, from a theoretical perspective, many well-known problems in combinatorial optimization are in fact submodular maximization problems, including: Max-Cut~\cite{GW95,H01,K72,KKMO07,TSSW00}, Max-DiCut~\cite{FG95,GW95,HZ01}, Generalized Assignment~\cite{CK05,CKR06,FV06,FGMS06}, Max-$k$-Coverage~\cite{F98,KMN99}, Max-Bisection~\cite{ABG13,FJ95} and Facility Location~\cite{AS99,CFN77a,CFN77b}. From a practical perspective, submodular maximization problems have found uses in social networks~\cite{HMS08,KKT03}, vision~\cite{BJ01,JB11}, machine learning~\cite{KSG08,KG05,KLGVF08,LB10,LB11} and many other areas (the reader is referred, for example, to a comprehensive survey by Bach~\cite{Bach13}).

The techniques used by approximation algorithms for submodular maximization problems usually fall into one of two main approaches. The first approach is combinatorial in nature, and is mostly based on local search techniques and greedy rules. This approach has been used as early as the late $70$'s for maximizing a monotone submodular function subject to a matroid constraint (some of these works apply only to specific types of matroids)~\cite{CC84,FNW78,HK78,HKJ80,J76,KH78,NW78,NWF78}. Later works used this approach to handle also problems with non-monotone submodular objective functions and different constraints~\cite{BFNS12,FMV11,FNSW11,LMNS10,LSV09}, yielding in some cases optimal algorithms~\cite{BFNS12,M04}. However, algorithms based on this approach tend to be highly tailored for the specific structure of the problem at hand, making extensions quite difficult.

The second approach used by approximation algorithms for submodular maximization problems overcomes the above obstacle. This approach resembles a common paradigm for designing approximation algorithms and involves two steps. In the first step a fractional solution is found for a relaxation of the problem, known as the \emph{multilinear relaxation}. In the second step the fractional solution is rounded to obtain an integral one while incurring a bounded loss in the objective. This approach has been used to obtain improved approximations for many problems~\cite{CCPV11,CVZ10,CVZ11,FNS11,KST13}.

Various techniques have been developed for rounding the fractional solution. These techniques tend to be quite flexible, and usually can extend to many related problem. In particular, the Contention Resolution Schemes framework of~\cite{CVZ11} yields a rounding procedure for every constraint which can be presented as the intersection of a few basic constraints such as knapsack constraints, matroid constraints and matching constraints. Given this wealth of rounding procedures, obtaining further improvements for many important submodular maximization problems (such as maximizing a submodular function subject to a matroid or knapsack constraint) boils down to obtaining improved algorithms for finding a good fractional solution, \ie, optimizing the multilinear relaxation.


\subsection{Maximizing the Multilinear Relaxation}

At this point we would like to present some terms more formally. A submodular function is a set function $f\colon 2^\cN \to \bR$ obeying $f(A) + f(B) \geq f(A \cup B) + f(A \cap B)$ for any sets $A, B \subseteq \cN$.
A submodular maximization problem is the problem of finding a set $S \subseteq \cN$ maximizing $f$ subject to some constraint. Formally, let $\cI$ be the set of subsets of $\cN$ obeying the constraint. Then, we are interested in the following problem.
\[ \begin{array}{ll}
	\max & f(A) \\
	\text{s.t.} & A \in \cI \subseteq 2^\cN
\end{array} \]


A relaxation of the above problem replaces $\cI$ with a polytope $P \subseteq [0, 1]^\cN$ containing the characteristic vectors of all the sets of $\cI$. In addition, a relaxation must replace the function $f$ with an extension function $F\colon [0,1]^\cN \to \bR$. Thus, a relaxation is a fractional problem of the following format.

\[ \begin{array}{ll}
	\max & F(x) \\
	\text{s.t.} & x \in P \subseteq [0, 1]^\cN
\end{array} \]

Defining the ``right'' extension function, $F$, for the relaxation is a challenge, as, unlike the linear case, there is no single natural candidate. The objective that turned out to be useful, and is, thus, used by multilinear relaxation is known as the \emph{multilinear extension} (first introduced by \cite{CCPV11}). The value $F(x)$ of this extension for any vector $x \in [0,1]^\cN$ is defined as the expected value of $f$ over a random subset $\RSet(x) \subseteq \cN$ containing every element $u \in \cN$ independently with probability $x_u$. Formally, for every $x\in [0,1]^\cN$,
\[
	F(x)
	=
	\bE[\RSet(x)]
	=
	\sum_{S\subseteq \cN} f(S) \prod _{u\in S}x_u \prod _{u\notin S}(1-x_u)
	\enspace.
\]


The first algorithm for optimizing the multilinear relaxation was the Continuous Greedy algorithm designed by Calinescu et al.~\cite{CCPV11}. When the submodular function $f$ is non-negative and monotone\footnote{A set function $f\colon 2^\cN \to \bR$ is monotone if $f(A) \leq f(B)$ for every $A \subseteq B \subseteq \cN$.} and $P$ is solvable\footnote{A polytope is solvable if one can optimize linear functions over it.} this algorithm finds a vector $x \in P$ such that $\bE[F(x)] \geq (1 - \nicefrac{1}{e} - o(1)) \cdot f(OPT)$ (where $OPT$ is the set maximizing $f$ among all sets whose characteristic vectors belongs to $P$).
Interestingly, the guarantee of Continuous Greedy is optimal for monotone functions even when $P$ is a simple cardinality constraint~\cite{CCPV11,NW78}.

Optimizing the multilinear relaxation when $f$ is not necessarily monotone proved to be a more challenging task. Initially, several algorithms for specific polytopes were suggested~\cite{GV11,LMNS10,V13}.
Later on, improved general algorithms were designed that work whenever $f$ is non-negative and $P$ is down-closed\footnote{A polytope $P \subseteq [0, 1]^\cN$ is down-closed if $y \in P$ implies that every vector $x \in [0, 1]^\cN$ which is coordinate-wise upper bounded by $y$ must belong to $P$ as well.} and solvable~\cite{CVZ14,FNS11}. Designing algorithms that work in this general setting is highly important as many natural constraints fall into this framework. Moreover, the restriction of the algorithms to down-closed polytopes is unavoidable as Vondr\'{a}k~\cite{V13} proved that no algorithm can produce a vector $x \in P$ obeying $\bE[F(x)] \geq c \cdot f(OPT)$ for any constant $c > 0$ when $P$ is solvable but not down-closed.



Up until recently, the best algorithm for this general setting was called Measured Continuous Greedy~\cite{FNS11}. It guaranteed to produce a vector $x \in P$ obeying $\bE[F(x)] \geq (\nicefrac{1}{e} - o(1)) \cdot f(OPT) \approx 0.367 \cdot f(OPT)$~\cite{FNS11}. The natural feel of the guarantee of Measured Continuous Greedy and the fact that it was not improved for a few years made some people suspect that it is optimal. Recently, an evidence against this conjecture was given by~\cite{BFNS14}, which described an algorithm for the special case of a cardinality constraint with an improved approximation guarantee of $0.371$. Even more recently, Ene and Nguyen~\cite{EN16} shuttered the conjecture completely. By extending the technique used by~\cite{BFNS14}, they showed that one can get an approximation guarantee $0.372$ for every down-closed and solvable polytope $P$. On the inapproximability side, Oveis Gharan and Vondr\'{a}k~\cite{GV11} proved that no algorithm can achieve approximation better than $0.478$  even when $P$ is the matroid polytope of a partition matroid. Closing the gap between the best algorithm and inapproximability result for this fundamental problem remains an important open problem.



\subsection{Our Contribution}

Our main contribution is an algorithm with an improved guarantee for maximizing the multilinear relaxation.

\begin{theorem} \label{thm:main_theorem}
There exists a polynomial time algorithm that given a non-negative submodular function $f\colon 2^\cN \to \nnR$ and a solvable down-closed polytope $P \subseteq [0, 1]^\cN$ finds a vector $x \in P$ obeying $F(x) \geq 0.385 \cdot f(OPT)$, where $OPT = \arg \max \{f(S) : 1_S \in P\}$ and $F$ is the multilinear extension of $f$.
\end{theorem}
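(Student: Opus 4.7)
The plan is to construct a continuous-time algorithm that maintains a single fractional trajectory $x(t) \in P$ and uses a \emph{non-symmetric} direction rule, in contrast to the symmetric averaging of $x(t)$ with its complement that underlies the analyses of Buchbinder et al.\ and of Ene and Nguyen. I would let $x(0) = \mathbf{0}$ and evolve $x(t)$ via $\dot x(t) = y(t) \odot (\mathbf{1} - x(t))$ for $t \in [0, T]$, where $y(t) \in P$ is a direction chosen at each time and $T \leq 1$ is a stopping time to be optimized. This matches the update rule of Measured Continuous Greedy and hence enforces the coordinate bound $x_u(t) \leq 1 - \exp(-\int_0^t y_u(s)\,ds) \leq 1 - e^{-T}$, which keeps $x(T) \in P$ since $P$ is down-closed. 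The output is $x(T)$.

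The key new ingredient is the choice of $y(t)$. Rather than selecting the plain maximizer of $\langle \nabla F(x(t)), \cdot \rangle$ on $P$, as in standard Continuous Greedy, I would pick $y(t)$ as the maximizer of a linear functional whose coefficients combine $\nabla F(x(t))$ with a time-dependent correction term that penalizes elements $u$ whose current value $x_u(t)$ is large. Since $1_{OPT} \in P$, substituting it in the direction-finding LP yields a lower bound on the linear value achieved by $y(t)$. Combined with the submodularity-based estimate $f(OPT \vee x(t)) \geq (1 - \max_u x_u(t)) \cdot f(OPT)$, this translates into a differential inequality of the form
\[
\frac{d}{dt} F(x(t)) \;\geq\; \alpha(t) \cdot f(OPT) \;-\; \beta(t) \cdot F(x(t)),
\]
where $\alpha(t)$ and $\beta(t)$ depend on the correction schedule. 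The asymmetry is exactly that the correction depends on $x(t)$ but not on its complement, which avoids the doubled solution maintained by prior techniques and, intuitively, lets the algorithm spend more ``budget'' early on elements that truly look like members of $OPT$.

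Finally, I would integrate this differential inequality, substitute the coordinate bounds coming from the dynamics, and solve a small factor-revealing optimization over the free parameters of the correction schedule and the stopping time $T$ to conclude $F(x(T)) \geq 0.385 \cdot f(OPT)$; the continuous algorithm can then be discretized in the standard way with $o(1)$ loss to yield a polynomial-time algorithm. The main obstacle is the design of the correction schedule: it must simultaneously (i) keep the direction-finding LP solvable with its maximizer in $P$, (ii) yield a differential inequality tight enough to beat the $0.372$ bound of Ene and Nguyen, and (iii) admit a factor-revealing analysis whose optimum is provably at least $0.385$. Getting all three properties to align under a single non-symmetric rule, while still matching $1_{OPT}$ to the LP in a way that preserves the submodularity-based lower bounds, is the technical heart of the proof.
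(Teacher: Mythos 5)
There is a genuine gap here: your proposal has no mechanism that can actually beat the $\nicefrac{1}{e}$ barrier. The only information about $OPT$ that your differential inequality uses is the standard estimate $F(x(t) \vee \characteristic_{OPT}) \geq (1 - \max_u x_u(t)) \cdot f(OPT)$, and under that bound alone the best guarantee obtainable from dynamics of the form $\frac{d}{dt}F(x(t)) \geq \alpha(t)\cdot f(OPT) - \beta(t)\cdot F(x(t))$ is exactly the Measured Continuous Greedy guarantee of $\nicefrac{1}{e}$: the coordinate bound $x_u(t) \leq 1 - e^{-t}$ is what produces $\alpha(t) = e^{-t}$, and any correction term that penalizes elements with large $x_u(t)$ only slows the growth of all coordinates uniformly without supplying any new lower bound on $F(x(t) \vee \characteristic_{OPT})$. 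The penalty is self-referential---it is computed from the algorithm's own trajectory---so on the symmetric hard instances for this problem it cannot distinguish elements of $OPT$ from the rest, and nothing in your sketch explains where a factor-revealing optimum of $0.385$ (rather than $0.367$) would come from. All known improvements over $\nicefrac{1}{e}$, including the one in this paper, inject a \emph{second, external} source of information about $OPT$ and argue by cases.

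Concretely, the paper runs two algorithms and combines them. First, a fractional local search (following Chekuri et al.) outputs $z \in P$ with $2F(z) \geq F(z \wedge y) + F(z \vee y) - o(1)\cdot f(OPT)$ for every $y \in P$; plugging $y = \characteristic_{OPT}$ and $y = z \wedge \characteristic_{OPT}$ shows that either $F(z)$ is already large, or both $F(z \wedge \characteristic_{OPT})$ and $F(z \vee \characteristic_{OPT})$ are small. Second, an ``aided'' Measured Continuous Greedy samples $Z \sim \RSet(z)$ and, during an initial phase $[0, t_s)$, gives the elements of $Z$ negative weights so their coordinates stay at zero; this yields the stronger bound $F(y(t) \vee \characteristic_{A}) \geq (e^{-\max\{0,t-t_s\}} - e^{-t})\max\{0, f(A) - f(A \cup Z)\} + e^{-t} f(A)$, and hence a guarantee that improves precisely when $F(z \wedge \characteristic_{OPT})$ and $F(z \vee \characteristic_{OPT})$ are small. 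Randomizing between the two outputs and optimizing $t_s$ and the mixing probability gives $0.385$. If you want to pursue a single-trajectory scheme, you would need to identify what plays the role of $z$ (the external guide) and of the case analysis tying ``guide is good'' to ``guide makes the greedy better''; without that, the approach as written collapses back to the $\nicefrac{1}{e}$ analysis.
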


Admittedly, the improvement in the guarantee obtained by our algorithm compared to the $0.372$ guarantee of~\cite{EN16} is relatively small. However, the technique underlying our algorithm is very different, and, arguably, much cleaner, than the technique underlying the previous results improving over the natural guarantee of $\nicefrac{1}{e}$~\cite{BFNS14,EN16}. Moreover, we believe our technique is more natural for the problem at hand, and thus, is likely to yield further improvements in the future. In the rest of this section we explain the intuition on which we base this belief.

The results of~\cite{BFNS14,EN16} are based on the observation that the guarantee of Measured Continuous Greedy improves when the algorithm manages to increase all the coordinates of its solution at a slow rate. Based on this observation, \cite{BFNS14,EN16} run an instance of Measured Continuous Greedy (or a discretized version of it), and force it to raise the coordinates slowly. If this extra restriction does not affect the behavior of the algorithm significantly, then it produces a solution with an improved guarantee. Otherwise, \cite{BFNS14,EN16} argue that the point in which the extra restriction affect the behavior of Measured Continuous Greedy reveals a vector $x \in P$ which contains a significant fraction of $OPT$. Once $x$ is available, one can use the technique of unconstrained submodular maximization, described by~\cite{BFNS12}, that has higher approximation guarantee of $\nicefrac{1}{2}> \nicefrac{1}{e}$, to extract from $x$ a vector $0 \leq y \leq x$ of large value. The down-closeness of $P$ guarantees that $y$ belongs to $P$ as well.

Unfortunately, the use of the unconstrained submodular maximization technique in the above approach is very problematic for two reasons. First, this technique is based on ideas that are very different from the ideas used by the analysis of Measured Continuous Greedy. This makes the combination of the two quite involved. Second, on a more abstract level, the unconstrained submodular maximization technique is based on a symmetry which exists in the absence of a constraint since $\bar{f}(S) = f(\cN \setminus S)$ is non-negative and submodular whenever $f$ has these properties. However, this symmetry breaks when a constraint is introduced, and thus, the unconstrained submodular maximization technique does not seem to be a good fit for a constrained problem.

Our algorithm replaces the symmetry based unconstrained submodular maximization technique with a local search algorithm. More specifically, it first executes the local search algorithm. If the output of the local search algorithm is good, then our algorithm simply returns it. Otherwise, we observe that the poor value of the output of the local search algorithm guarantees that it is also far from $OPT$ in some sense. Our algorithm then uses this far from $OPT$ solution to guide an instance of Measured Continuous Greedy, and help it avoid bad decisions.

As it turns out, the analysis of Measured Continuous Greedy and the local search algorithm use similar ideas and notions. Thus, the two algorithms combine quite cleanly, as can be observed from Section~\ref{sec:main_algorithm}. 
\section{Preliminaries}

Our analysis uses another useful extension of submodular functions. Given a submodular function $f\colon 2^\cN \to \bR$, its Lov\'{a}sz extension is a function $\hat{f} \colon [0, 1]^\cN \to \bR$ defined by
\[
	\hat{f}(x) = \int_0^1 f(T_\lambda(x)) d\lambda
	\enspace,
\]
where $T_\lambda(x) = \{u \in \cN : x_u < \lambda\}$. The Lov\'{a}sz extension has many important applications (see, \eg,~\cite{CE11,GLS81}), however, in this paper we only use it in the context of the following known result (which is an immediate corollary of the work of~\cite{L83}).

\begin{lemma} \label{lem:extensions_relation}
Given the multilinear extension $F$ and the Lov\'{a}sz extension $\hat{f}$ of a submodular function $f \colon 2^\cN \to \bR$, it holds that $F(x) \geq \hat{f}(x)$ for every vector $x \in [0, 1]^\cN$.
\end{lemma}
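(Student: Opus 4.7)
The plan is to reduce both extensions to a common chain of sets and then compare them term-by-term via submodularity, exploiting the fact that the multilinear extension couples elements independently whereas the Lov\'{a}sz extension couples them monotonically along a single chain.

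Concretely, I would order $\cN$ as $u_1, u_2, \ldots, u_n$ in decreasing order of $x$-value and set $S_i = \{u_1, \ldots, u_i\}$ for $0 \leq i \leq n$. The integrand defining $\hat{f}(x)$ is piecewise constant on the intervals determined by the sorted coordinates, so evaluating the integral expresses $\hat{f}(x)$ as a convex combination of $f(S_0), f(S_1), \ldots, f(S_n)$ whose coefficients are precisely the consecutive gaps of the sorted coordinates. Applying Abel summation then converts this into the ``marginal'' form
\[
\hat{f}(x) = f(\emptyset) + \sum_{i=1}^n x_{u_i} \cdot \bigl[f(S_i) - f(S_{i-1})\bigr].
\]

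Next I would put $F(x)$ into exactly the same shape. Reveal $\RSet(x)$ one element at a time in the order $u_1, \ldots, u_n$, and let $R_i = \RSet(x) \cap S_i$. Telescoping $f(R_n) - f(R_0) = \sum_{i=1}^n [f(R_i) - f(R_{i-1})]$ and taking expectations---while using that the indicator $\characteristic[u_i \in \RSet(x)]$ has expectation $x_{u_i}$ and is independent of $R_{i-1}$---yields
\[
F(x) = f(\emptyset) + \sum_{i=1}^n x_{u_i} \cdot \bE\bigl[f(R_{i-1} \cup \{u_i\}) - f(R_{i-1})\bigr].
\]
Since $R_{i-1} \subseteq S_{i-1}$ holds deterministically (both are contained in $\{u_1,\ldots,u_{i-1}\}$), submodularity gives the pointwise bound $f(R_{i-1} \cup \{u_i\}) - f(R_{i-1}) \geq f(S_i) - f(S_{i-1})$; comparing term by term against the preceding display for $\hat{f}(x)$ then produces $F(x) \geq \hat{f}(x)$.

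The only delicate point---really bookkeeping rather than a genuine obstacle---is matching the sort order of the chain to the precise convention for $T_\lambda$ used in the excerpt, so that the Abel summation produces marginal-increment coefficients with the intended sign. Everything else is one telescoping identity followed by a single appeal to submodularity; in particular, the non-negativity of $f$ plays no role, which is consistent with the lemma being stated for arbitrary submodular functions with range $\bR$.
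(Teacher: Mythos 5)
Your proof is correct, but it is a genuinely different route from the paper's: the paper gives no argument at all for this lemma, deferring to Lov\'{a}sz's work, from which the inequality is immediate because for submodular $f$ the Lov\'{a}sz extension coincides with the convex closure of $f$, i.e., $\hat{f}(x)$ is the \emph{minimum} of $\bE[f(S)]$ over all random sets $S$ with marginals $x$, and $F(x) = \bE[f(\RSet(x))]$ is one such expectation. Your argument is instead a self-contained direct proof: you put both extensions into the common form $f(\varnothing) + \sum_{i=1}^n x_{u_i} \cdot (\text{marginal of } u_i)$ over the chain $S_0 \subseteq S_1 \subseteq \dots \subseteq S_n$ determined by sorting $x$ in decreasing order---for $\hat{f}$ via the piecewise-constant integral and Abel summation, for $F$ via revealing $\RSet(x)$ element by element and using independence of the indicator of $u_i$ from $R_{i-1}$---and then compare termwise using $R_{i-1} \subseteq S_{i-1}$, $u_i \notin S_{i-1}$, submodularity, and $x_{u_i} \geq 0$. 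All of these steps check out, and you are right that non-negativity of $f$ is never needed. What your route buys is a proof from the bare definition of submodularity (and it makes transparent exactly where submodularity enters); what the citation buys is brevity and the stronger convex-closure fact. On your ``bookkeeping'' caveat: the paper's displayed definition $T_\lambda(x) = \{u \in \cN : x_u < \lambda\}$ indeed has the inequality reversed relative to its own later use (in Lemma~\ref{lem:bound1}, $y_u(t) \leq \lambda$ for all $u$ is said to give $T_\lambda(y(t) \vee \characteristic_A) = A$), so the intended threshold sets are $\{u : x_u > \lambda\}$; that is precisely the convention your decreasing-order chain matches, and under the literal (typo) convention the lemma itself would not even hold, so resolving the mismatch in favor of your reading is the right call.
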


We now define some additional notation that we use. Given a set $S \subseteq \cN$ and an element $u \in \cN$, we denote by $\characteristic_S$ and $\characteristic_u$ the characteristic vectors of the sets $S$ and $\{u\}$, respectively, and by $S + u$ and $S - u$ the sets $S \cup \{u\}$ and $S \setminus \{u\}$, respectively. Given two vectors $x, y \in [0, 1]^\cN$, we denote by $x \vee y$, $x \wedge y$ and $x \circ y$ the coordinate-wise maximum, minimum and multiplication, respectively, of $x$ and $y$.\footnote{More formally, for every element $u \in \cN$, $(x \vee y)_u = \max\{x_u, y_u\}$, $(x \wedge y)_u = \min\{x_u, y_u\}$ and $(x \circ y)_u = x_u \cdot y_u$.} Finally, given a vector $x \in [0, 1]^\cN$ and an element $u \in \cN$, we denote by $\partial_u F(x)$ the derivative of $F$ with respect to $u$ at the point $x$. The following observation gives a simple formula for $\partial_u F(x)$. This observation holds because $F$ is a multilinear function.
\begin{observation}
Let $F(x)$ be the multilinear extension of a submodular function $f\colon 2^\cN \to \bR$. Then, for every $u \in \cN$ and $x \in [0, 1]^\cN$,
\[
    (1-x_u) \cdot \partial_u F(x)
		=
    F(x \vee \characteristic_u) - F(x)
		\enspace.
\]
\end{observation}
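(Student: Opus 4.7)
The plan is to exploit the fact that $F$ is multilinear, i.e., that for any fixed values of the coordinates $\{x_v\}_{v \neq u}$, the function $F$ is affine in $x_u$. First, I would start from the explicit formula
\[
    F(x) = \sum_{S \subseteq \cN} f(S) \prod_{v \in S} x_v \prod_{v \notin S}(1 - x_v)
\]
and split the sum according to whether $u \in S$ or $u \notin S$. Pulling out the factor $x_u$ (resp.\ $1 - x_u$) from each group, this lets me write
\[
    F(x) = x_u \cdot \alpha(x_{-u}) + (1 - x_u) \cdot \beta(x_{-u}),
\]
where $\alpha$ and $\beta$ are functions that depend only on the coordinates $x_{-u} := (x_v)_{v \neq u}$.

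Next, I would compute both sides of the identity in terms of $\alpha$ and $\beta$. Differentiating with respect to $x_u$ gives $\partial_u F(x) = \alpha(x_{-u}) - \beta(x_{-u})$, and substituting $x_u = 1$ in the expression above gives $F(x \vee \characteristic_u) = \alpha(x_{-u})$. Therefore,
\[
    F(x \vee \characteristic_u) - F(x) = \alpha(x_{-u}) - x_u \alpha(x_{-u}) - (1 - x_u)\beta(x_{-u}) = (1 - x_u)\bigl(\alpha(x_{-u}) - \beta(x_{-u})\bigr),
\]
which equals $(1 - x_u) \cdot \partial_u F(x)$, as claimed.

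There is no real obstacle here; the statement is essentially a restatement of multilinearity, and the only work is to unwind the formula for $F(x)$ carefully enough to see the affine dependence on $x_u$ explicitly. An alternative, even shorter, route would be to note that for any affine function $g(t) = \alpha t + \beta$ on $[0,1]$, one trivially has $g(1) - g(t) = (1 - t) g'(t)$, and then apply this to $g(t) := F(x_{-u}, t)$ evaluated at $t = x_u$; I would mention this as a one-line alternative formulation once the multilinearity of $F$ has been invoked.
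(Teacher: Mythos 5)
Your proof is correct and is exactly the argument the paper has in mind: the paper simply asserts that the observation "holds because $F$ is a multilinear function," and your decomposition $F(x) = x_u\,\alpha(x_{-u}) + (1-x_u)\,\beta(x_{-u})$ is precisely the standard way to make that one-line justification explicit. Nothing is missing; the affine-in-$x_u$ reformulation at the end is the same idea stated even more compactly.
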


In the rest of the paper we assume, without loss of generality, that $\characteristic_{u} \in P$ for every element $u \in \cN$ and that $n$ is larger than any given constant. The first assumption is justified by the observation that every element $u$ violating this assumption can be safely removed from $\cN$ since it cannot belong to $OPT$. The second assumption is justified by the observation that it is possible to find a set $S$ obeying $\characteristic_S \in P$ and $f(S) = f(OPT)$ in constant time when $n$ is a constant.

Another issue that needs to be kept in mind is the representation of submodular functions. We are interested in algorithms whose time complexity is polynomial in $|\cN|$. However, the representation of the submodular function $f$ might be exponential in this size; thus, we cannot assume that the representation of $f$ is given as part of the input for the algorithm. The standard way to bypass this difficulty is to assume that the algorithm has access to $f$ through an oracle. We assume the standard \emph{value oracle} that is used in most of the previous works on submodular maximization. This oracle returns, given any subset $S \subseteq \cN$, the value $f(S)$. 
\section{Main Algorithm} \label{sec:main_algorithm}

In this section we present the algorithm used to prove Theorem~\ref{thm:main_theorem}. This algorithm uses two components. The first component is a close variant of a fractional local search algorithm suggested by Chekuri et al.~\cite{CVZ14} which has the following properties.

\begin{lemma}[Follows from Chekuri et al.~\cite{CVZ14}] \label{cor:old_component_tailored}
There exists a polynomial time algorithm which returns vector $x\in P$ such that, with high probability, for every vector $y \in P$,
\begin{equation}\label{ineq:auxiliary1}
	F(x) \geq \frac{1}{2}F(x \wedge y) + \frac{1}{2}F(x \vee y) - o(1) \cdot f(OPT) 
	\enspace.
\end{equation}
\end{lemma}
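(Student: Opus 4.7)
The plan is to realize the algorithm as a standard fractional local search on $P$, terminating at an approximate first-order stationary point of $F$, and then to deduce the claimed inequality from the concavity of $F$ along monotone directions. Concretely, I would start from an arbitrary $x^{(0)}\in P$ and iteratively compute $y^{*}\in\arg\max_{y\in P}\langle\nabla F(x), y\rangle$, which is a linear optimization that is feasible because $P$ is solvable; if $\langle\nabla F(x),y^{*}-x\rangle>\eps$ for a threshold $\eps$ of order $f(OPT)/\mathrm{poly}(n)$, I would update $x\leftarrow x+\delta(y^{*}-x)$ for a sufficiently small step $\delta$, and otherwise stop and return the current $x$. Coordinates of $\nabla F$ are estimated to high accuracy by independently sampling the random set $\RSet(x)$, which is the source of the ``with high probability'' caveat; a Taylor expansion of the multilinear polynomial $F$ along $y^{*}-x$ shows that for $\delta$ polynomially small each successful iteration increases $F(x)$ by $\Omega(\delta\eps)$, which together with $F\leq\max_S f(S)$ bounds the iteration count by a polynomial.

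The heart of the proof is analyzing the terminal $x$, which by construction satisfies $\langle\nabla F(x),y-x\rangle\leq\eps$ for every $y\in P$. Given such a $y$, I would write $v^{+}=(y-x)^{+}$ and $v^{-}=(x-y)^{+}$, both non-negative, so that $x\vee y=x+v^{+}$, $x\wedge y=x-v^{-}$, and $v^{+}-v^{-}=y-x$. Because $f$ is submodular, the Hessian of $F$ satisfies $\partial^{2}F/\partial x_{u}\partial x_{v}\leq 0$ for $u\neq v$, and it vanishes on the diagonal by multilinearity, so $F$ is concave along every direction whose coordinates share a common sign. Applying the tangent upper bound for concave functions along the two monotone rays gives $F(x\vee y)\leq F(x)+\nabla F(x)\cdot v^{+}$ and $F(x\wedge y)\leq F(x)-\nabla F(x)\cdot v^{-}$; summing these and invoking local optimality yields
\[
F(x\vee y)+F(x\wedge y)\;\leq\; 2F(x)+\langle\nabla F(x),y-x\rangle\;\leq\; 2F(x)+\eps,
\]
which is the desired inequality after dividing by two and choosing $\eps=o(1)\cdot f(OPT)$.

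The main obstacle is the simultaneous calibration of $\eps$, $\delta$, and the sampling accuracy: the threshold $\eps$ must be small enough to produce a meaningful $o(1)\cdot f(OPT)$ slack, yet large enough that polynomially many iterations suffice and that the Hessian-induced second-order error together with the finite-sample estimation error for $\nabla F$ are both comfortably dominated by the per-step first-order gain of order $\delta\eps$. This kind of bookkeeping is standard in the multilinear optimization literature and is precisely what is carried out in Chekuri, Vondr\'{a}k and Zenklusen~\cite{CVZ14}, from which the lemma is quoted.
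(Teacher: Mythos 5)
Your proposal is correct in substance, but it takes a different route from the paper. The paper treats the Chekuri--Vondr\'{a}k--Zenklusen result as a black box: it invokes their Lemmata~3.7 and~3.8, which already give $2F(x) \geq F(x \wedge y) + F(x \vee y) - 5M/n^{a-2}$ for the output of their fractional local search, then converts the error term into $o(1)\cdot f(OPT)$ via $M \leq n \cdot f(OPT)$ (which relies on the standing assumption $\characteristic_u \in P$ for all $u$), and finally wraps the algorithm with a hard cap on its running time, returning $\characteristic_\varnothing$ (feasible by down-closedness) in the low-probability event that the local search has not terminated. You instead sketch a self-contained argument: drive a Frank--Wolfe-style ascent to an approximate first-order stationary point and then exploit concavity of $F$ along monotone directions (non-positive off-diagonal Hessian entries by submodularity, zero diagonal by multilinearity) to get $F(x\vee y)+F(x\wedge y) \leq 2F(x) + \langle \nabla F(x), y-x\rangle \leq 2F(x)+\eps$; this is essentially a re-derivation of the content of the cited CVZ lemmas rather than a use of them, and the key inequality chain is sound. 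What your version buys is transparency about why the exchange inequality holds; what the paper's version buys is that the bookkeeping you defer (step size versus threshold versus sampling accuracy, the union bound over iterations, and guaranteed polynomial termination with a feasible output even on failure) is already done in the citation. Two small points you should still nail down: the threshold $\eps$ cannot literally be set in terms of the unknown $f(OPT)$, so you need a computable proxy such as $\max_u f(\{u\})$, which is within a factor $n$ of $f(OPT)$ under the assumption $\characteristic_u\in P$ (the same trick the paper uses via $M \leq n\cdot f(OPT)$); and you should cap the number of iterations explicitly so the algorithm is polynomial even when gradient estimates go bad --- harmless in your scheme, since your iterates always remain in $P$.
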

\begin{proof}
Let $M = \max\{f(u), f(\cN - u) : u \in \cN\}$, and let $a$ be an arbitrary constant larger than $3$. Then, Lemmata~3.7 and 3.8 of Chekuri et al.~\cite{CVZ14} imply that, with high probability, the fractional local search algorithm they suggest terminates in polynomial time and outputs a vector $x \in P$ obeying, for every vector $y \in P$,
\[
	2F(x)
	\geq
	F(x \wedge y) + F(x \vee y) - \frac{5M}{n^{a - 2}}
	\enspace.
\]
Moreover, the output vector $x$ is in $P$ whenever the fractional local search algorithm terminates.

Our assumption that $\characteristic_{u} \in P$ for every element $u \in \cN$ implies, by submodularity, that $f(S) \leq n \cdot f(OPT)$ for every set $S \subseteq \cN$. Since $M$ is the maximum over values of $f$, we get also $M \leq n \cdot f(OPT)$. Using this observation, and plugging $a = 4$, we get that there exists an algorithm which, with high probability, terminates after $T(n)$ operations (for some polynomial function $T(n)$) and outputs a vector $x \in P$ obeying
$2F(x) \geq F(x \wedge y) + F(x \vee y) - \frac{5 \cdot f(OPT)}{n}$ for every vector $y \in P$. Moreover, the output vector $x$ belongs to $P$ whenever the algorithm terminates.

To complete the lemma, we consider a procedure that executes the above algorithm for $T(n)$ operations, and return its output if it terminates within this number of operations. If the algorithm fails to terminate within this number of operations, which happens with a diminishing probability, then the procedure simply returns $1_{\varnothing}$ (which always belongs to $\cP$ since $\cP$ is down-closed). One can observe that this procedure has all the properties guaranteed by the lemma.
\end{proof}


%

The second component of our algorithm is a new auxiliary algorithm which we present and analyze in Section~\ref{sec:auxiliary_algorithm}. This auxiliary algorithm is the main technical contribution of this paper, and its guarantee is given by the following theorem.
\begin{theorem} \label{thm:auxiliary_algorithm}
There exists a polynomial time algorithm that given a vector $z \in [0, 1]^\cN$ and a value $t_s \in [0, 1]$ outputs a vector $x \in P$ obeying
\begin{align}\label{ineq:auxiliary2}
	\bE[F(x)]
	\geq
	e^{t_s - 1} \cdot [(2 - t_s - e^{-t_s} - o(1)) \cdot f(OPT) &- (1 - e^{-t_s}) \cdot F(z \wedge \characteristic_{OPT})\\&- (2 - t_s - 2e^{-t_s}) \cdot F(z \vee \characteristic_{OPT})]
	\enspace. \nonumber
\end{align}
\end{theorem}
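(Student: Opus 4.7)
The plan is to devise a variant of Measured Continuous Greedy that exploits the hint vector $z$ during an initial portion of its run. The algorithm will maintain a fractional trajectory $y(t) \in P$ starting at $y(0) = \mathbf{0}$ and evolving under the standard measured dynamics $\dot y_u(t) = v_u(t)(1 - y_u(t))$ for a direction $v(t) \in P$ chosen at time $t$; this keeps $y(t) \in P$ and guarantees the pointwise cap $y_u(t) \leq 1 - e^{-t}$. The continuous description will then be discretized into $\mathrm{poly}(n)$ time steps, which handles the polynomial running time and the $o(1)$ slack in~\eqref{ineq:auxiliary2}. The key design choice is to select $v(t)$ according to two different rules on $[0, t_s]$ (``hint-guided'' phase) and on $(t_s, 1]$ (``standard'' phase): on the standard phase $v(t)$ is the usual greedy maximizer of $\sum_u v_u (1 - y_u) \partial_u F(y)$ over $v \in P$, while on the hint-guided phase $v(t)$ additionally accounts for comparisons against $z$, $z \vee \characteristic_{OPT}$, and $z \wedge \characteristic_{OPT}$.

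The proof reduces to establishing a per-time derivative bound $\dot F(y(t)) + F(y(t)) \geq g(t)$, where $g(t)$ is an explicit time-dependent linear combination of $f(OPT)$, $F(z \wedge \characteristic_{OPT})$, and $F(z \vee \characteristic_{OPT})$. On the standard phase the argument is classical: comparing $v(t)$ to the competitor $\characteristic_{OPT} \in P$ and using submodularity gives $\dot F(y) \geq F(y \vee \characteristic_{OPT}) - F(y)$, and the cap $y_u(t) \leq 1 - e^{-t}$ combined with the standard multilinear-extension estimate $F(y \vee \characteristic_{OPT}) \geq e^{-t} f(OPT)$ yields $g(t) = e^{-t} f(OPT)$ on $(t_s, 1]$. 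On the hint-guided phase the bound must additionally produce the $F(z \wedge \characteristic_{OPT})$ and $F(z \vee \characteristic_{OPT})$ terms, which I would obtain by choosing a competitor built from $z$ and $\characteristic_{OPT}$ (exploiting the identity $\characteristic_{OPT} + z = (\characteristic_{OPT} \vee z) + (\characteristic_{OPT} \wedge z)$) and invoking Lemma~\ref{lem:extensions_relation} to lower-bound the resulting discrete chain-sum by the Lov\'asz extension $\hat f$. Integrating the ODE is then routine: $\dot F(y) + F(y) \geq g(t)$ is equivalent to $\tfrac{d}{dt}\bigl[e^t F(y(t))\bigr] \geq e^t g(t)$, so $F(y(1)) \geq \int_0^1 e^{\tau - 1} g(\tau)\, d\tau$, and plugging in the piecewise form of $g$ should reproduce the $e^{t_s - 1}$ outer factor together with the coefficients $(2 - t_s - e^{-t_s})$, $(1 - e^{-t_s})$, $(2 - t_s - 2 e^{-t_s})$ appearing in~\eqref{ineq:auxiliary2}.

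The main obstacle will be engineering the hint-guided phase so that the derivative bound on $[0, t_s]$ has precisely the right linear combination: the coefficient $e^{t_s - 1}(2 - t_s)$ that ends up multiplying $f(OPT)$ is strictly larger than what a naive phase-1 continuation of standard MCG would produce, so the phase-1 competitor must be chosen to extract extra $f(OPT)$-worth of marginal value at each moment $t \leq t_s$, at the price of paying amounts proportional to $F(z \wedge \characteristic_{OPT})$ and $F(z \vee \characteristic_{OPT})$. I expect Lemma~\ref{lem:extensions_relation} to be essential here: it converts $F(z \vee \characteristic_{OPT})$ into an integral over threshold sets $T_\lambda(z)$ of $z$, against which elements of $P$ (in particular $\characteristic_{OPT}$) can be compared at the discrete level to produce the chain inequalities that feed into the derivative bound. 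Once the phase-1 competitor is pinned down, the rest of the argument should be routine ODE manipulation, algebraic verification of the coefficients, and the standard discretization and oracle-sampling bookkeeping responsible for the $o(1)$ loss.
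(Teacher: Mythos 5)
Your scaffolding matches the paper's (a two-phase measured-continuous-greedy trajectory, a per-time differential inequality integrated with the factor $e^{t}$, and standard discretization/sampling for the $o(1)$ and the running time), but the core mechanism of the proof is missing, and the place where you plan to put it is the wrong phase. The paper first reduces to an integral hint by sampling $Z = \RSet(z)$, using $\bE[f(\RSet(z)\cap OPT)] = F(z\wedge\characteristic_{OPT})$ and $\bE[f(\RSet(z)\cup OPT)] = F(z\vee\characteristic_{OPT})$ (your proposal never says how a fractional $z$ enters the analysis). Then, during $[0,t_s)$ the algorithm assigns \emph{negative} weight to the elements of $Z$, so the LP step never touches them and $y_u(t)=0$ for all $u\in Z$ up to time $t_s$; the phase-1 competitor is simply $\characteristic_{OPT\setminus Z}\in P$, and the $F(z\wedge\characteristic_{OPT})$ term enters only through the submodularity inequality $f(OPT\setminus Z)\ge f(OPT)-f(Z\cap OPT)$. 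The entire payoff of the hint is then harvested in phase 2: because the $Z$-coordinates satisfy the improved cap $y_u(t)\le 1-e^{-(t-t_s)}$, the Lov\'asz-extension argument gives $F(y(t)\vee\characteristic_{OPT})\ge e^{t_s-t}\cdot f(OPT)-(e^{t_s-t}-e^{-t})\cdot f(Z\cup OPT)$ for $t\ge t_s$, which is strictly stronger than the classical bound $e^{-t}\cdot f(OPT)$ that you plan to keep in your ``standard'' phase.

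This is not a cosmetic difference: your plan (classical phase 2 plus a cleverer phase-1 competitor built from $z$ and $\characteristic_{OPT}$) cannot reach the claimed coefficients. Consider the regime where both $F(z\wedge\characteristic_{OPT})$ and $F(z\vee\characteristic_{OPT})$ are negligible (possible for non-monotone $f$, e.g.\ a directed-cut-type instance where adding $Z$ destroys the value of $OPT$); there the theorem demands roughly $e^{t_s-1}(2-t_s-e^{-t_s})\cdot f(OPT)$, and the $z$-terms you are willing to ``pay'' buy you nothing. In your own integration formula $F(y(1))\ge\int_0^1 e^{\tau-1}g(\tau)\,d\tau$, the classical phase-2 choice $g(\tau)=e^{-\tau}f(OPT)$ contributes $e^{-1}(1-t_s)\cdot f(OPT)$, while any greedy-type certificate for phase 1 gives $g(\tau)\le f(OPT)$ (a step cannot be certified to close more than the full deficit $f(OPT)-F(y)$ once the $z$-terms are worthless), so phase 1 contributes at most $(e^{t_s-1}-e^{-1})\cdot f(OPT)$. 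The total, $(e^{t_s-1}-t_se^{-1})\cdot f(OPT)$, falls short of the required $(e^{t_s-1}(2-t_s)-e^{-1})\cdot f(OPT)$ by $(1-t_s)(e^{t_s-1}-e^{-1})\cdot f(OPT)>0$ for every $t_s\in(0,1)$. So no engineering of the phase-1 competitor can rescue a classical phase 2; the missing idea is that the hint is used to \emph{delay} the growth of the coordinates in the support of $z$, and the resulting gain appears in the phase-2 derivative bound, from which the coefficients in~\eqref{ineq:auxiliary2} actually arise.
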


Our main algorithm executes the algorithms suggested by Lemma~\ref{cor:old_component_tailored} followed by the algorithm suggested by Theorem~\ref{thm:auxiliary_algorithm}. Notice that the second of these algorithms has two parameters in addition to $f$ and $P$: a parameter $z$ which is set to be the output of the first algorithm, and a parameter $t_s$ which is set to be a constant to be determined later. After the two above algorithms terminate, our algorithm returns the output of the first algorithm with probability $p$, for a constant $p$ to be determined later, and with the remaining probability it returns the output of the second algorithm.\footnote{Clearly it is always better to return the better of the two solution instead of randomizing between them. However, doing so will require the algorithm to either have an oracle access to $F$ or estimate the values of the solutions using sampling (the later can be done using standard techniques---see, \eg,~\cite{CCPV11}). For the sake of simplicity, we chose here the easier to analyze approach of randomizing between the two solutions.} A formal description of our algorithm is given as Algorithm~\ref{alg:main_algorithm}. Observe that Lemma~\ref{cor:old_component_tailored} and Theorem~\ref{thm:auxiliary_algorithm} imply together that Algorithm~\ref{alg:main_algorithm} is a polynomial time algorithm which always outputs a vector in $P$.


\begin{algorithm}
\caption{\textsf{Main Algorithm}($f, P$)} \label{alg:main_algorithm}
\DontPrintSemicolon
Execute the algorithm suggested by Lemma~\ref{cor:old_component_tailored}, and let $x_1\in P$ be its output. \\
Execute the algorithm suggested by Theorem~\ref{thm:auxiliary_algorithm} with $z = x_1$, and let $x_2$ be its output.\\
\Return{with probability $p$ the solution $x_1$, and the solution $x_2$ otherwise.}
\end{algorithm}

%

To prove Theorem~\ref{thm:main_theorem}, it remains to analyze the quality of the solution produced by Algorithm~\ref{alg:main_algorithm}.

\begin{lemma}
When its parameters are set to $t_s=0.372$ and $p=0.23$, Algorithm~\ref{alg:main_algorithm} produces a solution whose expected value is at least $0.385 \cdot f(OPT)$.
\end{lemma}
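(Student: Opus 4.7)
The plan is to lower-bound $\bE[F(\mathrm{output})] = p\,F(x_1) + (1-p)\,\bE[F(x_2)]$ by combining the guarantees of the two sub-algorithms. Write $f = f(OPT)$, $A = F(x_1 \wedge \characteristic_{OPT})$, $B = F(x_1 \vee \characteristic_{OPT})$, and introduce the shorthands $\alpha = (1-p)\,e^{t_s-1}(1-e^{-t_s})$, $\beta = (1-p)\,e^{t_s-1}(2-t_s-2e^{-t_s})$, and $C_0 = (1-p)\,e^{t_s-1}(2-t_s-e^{-t_s})$. Invoking Theorem~\ref{thm:auxiliary_algorithm} with $z = x_1$ and substituting into the convex combination immediately yields
\[
\bE[F(\mathrm{output})] \geq p\,F(x_1) + C_0\,f - \alpha\,A - \beta\,B - o(1)\,f,
\]
and the remainder of the argument is about eliminating $A$ and $B$ on the right hand side.

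For this elimination I would assemble three upper bounds. The first is $A + B \leq 2\,F(x_1) + o(1)\,f$, obtained by applying Lemma~\ref{cor:old_component_tailored} with $y = \characteristic_{OPT}$. The key second bound is $A \leq F(x_1) + o(1)\,f$, obtained by applying the same lemma with $y = x_1 \wedge \characteristic_{OPT}$; this $y$ lies in $P$ because $P$ is down-closed and $y \leq x_1 \in P$, and since $x_1 \wedge y = y$ and $x_1 \vee y = x_1$ the lemma collapses to precisely this form. The third, from submodularity of the multilinear extension applied to $x_1$ and $\characteristic_{OPT}$, is $A + B \leq F(x_1) + f$.

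To finish, I would combine these three inequalities with non-negative multipliers $\lambda_1, \lambda_2, \lambda_3$ so as to exactly match the coefficients of $A$ and $B$ in the expression $\alpha A + \beta B$ we need to eliminate. Matching the coefficients forces $\lambda_3 = \alpha - \beta$ (non-negative because $t_s + e^{-t_s} \geq 1$ for every $t_s \geq 0$) and $\lambda_1 + \lambda_2 = \beta$; the remaining degree of freedom is fixed by demanding that the resulting coefficient of $F(x_1)$ in the bound vanish, which gives $\lambda_1 = p - \alpha$ and $\lambda_2 = \alpha + \beta - p$. These are non-negative precisely when $\alpha \leq p \leq \alpha + \beta$, a range which a direct numerical evaluation at $t_s = 0.372$ confirms contains $p = 0.23$. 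After this substitution the bound collapses to
\[
\bE[F(\mathrm{output})] \geq (C_0 + p - \alpha - \beta)\,f - o(1)\,f = \big(p + (1-p)(2e^{-1} - e^{t_s-1})\big)\,f - o(1)\,f,
\]
and numerically evaluating the constant at $t_s = 0.372$, $p = 0.23$ gives approximately $0.3856$, strictly above $0.385$ by a gap that absorbs the $o(1)$ term for all sufficiently large $n$.

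The main obstacle is the second application of Lemma~\ref{cor:old_component_tailored} with $y = x_1 \wedge \characteristic_{OPT}$: without the extra inequality $A \leq F(x_1) + o(1)\,f$, the LP that characterizes the worst case of the combined bound is minimized at $F(x_1) = f$, $A = 2f$, $B = 0$ and yields only about $0.360\,f$, well short of the target. A secondary subtlety is that the parameters are calibrated so that $p$ is essentially equal to $\alpha + \beta$, meaning $\lambda_2$ is tiny but strictly positive; the submodularity bound is needed only to absorb this tiny residual, but it is needed---without it one would require $p \geq \alpha + \beta$, which just barely fails for the stated parameters.
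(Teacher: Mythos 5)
Your proposal is correct and, at its core, takes the same route as the paper: you use exactly the paper's three ingredients---Lemma~\ref{cor:old_component_tailored} instantiated at $y=\characteristic_{OPT}$ and at $y=x_1\wedge\characteristic_{OPT}$, together with Theorem~\ref{thm:auxiliary_algorithm} applied with $z=x_1$---and combine them with non-negative multipliers chosen to cancel the $F(x_1\wedge\characteristic_{OPT})$ and $F(x_1\vee\characteristic_{OPT})$ terms; your multiplier formulation is just the dual view of the paper's split $p=p_1+p_2$ and its small non-convex program. The one genuine difference is your third inequality $F(x_1\wedge\characteristic_{OPT})+F(x_1\vee\characteristic_{OPT})\le F(x_1)+f(OPT)$, which the paper does not use, and your motivation for it is a sharp and correct observation: at exactly $t_s=0.372$, $p=0.23$ one has $\alpha+\beta\approx 0.2301>p$ (with your $\alpha=(1-p)e^{t_s-1}(1-e^{-t_s})$ and $\beta=(1-p)e^{t_s-1}(2-t_s-2e^{-t_s})$), so the paper's two non-negativity constraints cannot both hold with $p_1+p_2=0.23$; the reported $p_1=0.205$, $p_2=0.025$ are rounded values that miss the first constraint by about $10^{-4}$, and the paper's conclusion survives only because of the slack between $0.3856$ and $0.385$. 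Your extra inequality, used with the tiny multiplier $\lambda_2=\alpha+\beta-p$, makes the computation close exactly at the stated parameters at a cost of order $10^{-4}\cdot f(OPT)$, giving $p+(1-p)(2e^{-1}-e^{t_s-1})\approx 0.3856$ as you state. Two caveats: (i) the lattice submodularity of the multilinear extension, $F(x\wedge y)+F(x\vee y)\le F(x)+F(y)$, is true but is stated nowhere in the paper, so you should justify it (e.g., via the non-positive mixed second partial derivatives of $F$); alternatively you could avoid it altogether by observing that $F(x_1\wedge\characteristic_{OPT})\le f(OPT)$---every realization $\RSet(x_1)\cap OPT$ is the characteristic vector of a feasible set since $P$ is down-closed---and absorbing the tiny negative coefficient through that bound; (ii) like the paper, you must condition on the high-probability event that $x_1$ satisfies Inequality~\eqref{ineq:auxiliary1} and take expectations over $x_1$ before combining with the conditional guarantee on $x_2$; you gloss over this, but the handling is routine and identical to the paper's.
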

\begin{proof}
Let $\cE$ be the event that $x_1$, the output of the algorithm suggested by Lemma~\ref{cor:old_component_tailored}, satisfies Inequality~\eqref{ineq:auxiliary1}. Since $\cE$ is a high probability event, it is enough to prove that, conditioned on $\cE$, Algorithm~\ref{alg:main_algorithm} produces a solution whose expected value is at least $c \cdot f(OPT)$ for some constant $c > 0.385$. The rest of the proof of the lemma is devoted to proving the last claim. Throughout it, everything is implicitly conditioned on $\cE$.

As we are conditioning on $\cE$, we can plug $y = \characteristic_{OPT}$ and, respectively, $y = x_1 \wedge \characteristic_{OPT}$ into Inequality~\eqref{ineq:auxiliary1} to get
\begin{equation}
	F(x_1)
	\geq
	\frac{1}{2} F(x_1 \wedge \characteristic_{OPT}) + \frac{1}{2} F(x_1 \vee \characteristic_{OPT}) - o(1) \cdot f(OPT) \label{guarantee1}
\end{equation}
and
\begin{equation}
	F(x_1)
	\geq
	F(x_1 \wedge \characteristic_{OPT}) - o(1) \cdot f(OPT) \label{guarantee2}
	\enspace,
\end{equation}
where the last inequality follows by noticing that $x_1 \vee (x_1 \wedge \characteristic_{OPT})= x_1$.
Next, let $\bE[F(x_2) \mid x_1]$ denote the expected value of $F(x_2)$ conditioned on the given value of $x_1$.
Inequality \eqref{ineq:auxiliary2} guarantees that
\begin{align}\label{guarantee3}
	\bE[F(x_2) \mid x_1]
	\geq
	e^{t_s - 1} \cdot [(2 - t_s - e^{-t_s} - o(1)) \cdot f(OPT) &- (1 - e^{-t_s}) \cdot F(x_1 \wedge \characteristic_{OPT}) \\&- (2 - t_s - 2e^{-t_s}) \cdot F(x_1 \vee \characteristic_{OPT})]
	\enspace.
	\nonumber
\end{align}

Recall that Algorithm~\ref{alg:main_algorithm} returns $x_1$ with probability $p$, and $x_2$ otherwise. Hence, the expected value of its output is
\begin{equation} \label{eq:approximation_ratio}
	\bE[p \cdot F(x_1) + (1 - p) \cdot \bE[F(x_2) \mid x_1]]
	\enspace,
\end{equation}
where the expectation is over $x_1$.

\paragraph{Optimizing the constants.} We would like to derive from Inequalities~\eqref{guarantee1}, \eqref{guarantee2} and~\eqref{guarantee3} the best lower bound we can get on~\eqref{eq:approximation_ratio}. To this end, let $p_1$ and $p_2$ be two non-negative numbers such that $p_1 + p_2 = p$, and let $p_3 = 1 - p$. Using the above inequalities and this notation, \eqref{eq:approximation_ratio} can now be lower bounded by
\begin{alignat*}{3}
	&
	p_1 \cdot \left[\frac{1}{2} \bE[F(x_1 \wedge \characteristic_{OPT})] + \frac{1}{2} \bE[F(x_1 \vee \characteristic_{OPT})] - o(1) \cdot f(OPT)\right]\mspace{-200mu}\\
	{}+{}&
	p_2 \cdot \left[\bE[F(x_1 \wedge \characteristic_{OPT})] - o(1) \cdot f(OPT)\right]\\
	{}+{}&
	p_3 \cdot e^{t_s - 1} \cdot [(2 - t_s - e^{-t_s} - o(1)) \cdot f(OPT) &&- (1 - e^{-t_s}) \cdot \bE[F(x_1 \wedge \characteristic_{OPT})] \\&&&- (2 - t_s - 2e^{-t_s}) \cdot \bE[F(x_1 \vee \characteristic_{OPT})]]
	\enspace,
\end{alignat*}
which can be rewritten as
\begin{align*}
	& \left(\frac{p_1}{2} + p_2 - p_3\cdot e^{t_s-1}(1 - e^{-t_s})\right) \cdot \bE[F(x_1 \wedge \characteristic_{OPT})] \\
	{}+{} & \left(\frac{p_1}{2} - p_3\cdot e^{t_s-1}(2 - t_s - 2e^{-t_s})\right) \cdot \bE[F(x_1 \vee \characteristic_{OPT})] \\
	{}+{} & p_3 \cdot e^{t_s - 1}(2 - t_s - e^{-t_s}) \cdot f(OPT) - o(1) \cdot f(OPT)
	\enspace.
\end{align*}

To get the most out of this lower bound we need to maximize the coefficient of $f(OPT)$ while keeping the coefficients of $\bE[F(x_1 \wedge \characteristic_{OPT})]$ and $\bE[F(x_1 \vee \characteristic_{OPT})]$ non-negative (so that they can be ignored due to non-negativity of $f$). This objective is formalized by the following non-convex program.

\[\begin{array}{lll}
	\max & p_3 \cdot e^{t_s - 1}(2 - t_s - e^{-t_s})\\
	\text{s.t.} & p_1/2 + p_2 - p_3 \cdot e^{t_s - 1}(1 - e^{-t_s})  & \geq 0\\
	& p_1/2 - p_3 \cdot e^{t_s - 1}(2 - t_s - 2e^{-t_s}) & \geq 0\\
	& p_1 + p_2 + p_3 & = 1\\
	& p_1, p_2, p_3, t_s & \geq 0
\end{array}\]

Solving the program, we get that the best solution is approximately $p_1 = 0.205$, $p_2 = 0.025$, $p_3 = 0.770$ and $t_s = 0.372$, and the objective function value corresponding to this solution is at least $0.3856$. Hence, we have managed to lower bound \eqref{eq:approximation_ratio} (and thus, also the expected value of the output of Algorithm~\ref{alg:main_algorithm}) by $0.3856 \cdot f(OPT)$ for $p = 0.23$ and $t_s = 0.372$, which completes the proof of the lemma.
%
\end{proof}

\section{Aided Measured Continuous Greedy} \label{sec:auxiliary_algorithm}

In this section we present the algorithm used to prove Theorem~\ref{thm:auxiliary_algorithm}.
Proving the above theorem directly is made more involved by the fact that the vector $z$ might be fractional. Instead, we prove the following simplified version of Theorem~\ref{thm:auxiliary_algorithm} for integral values, and show that the simplified version implies the original one.

\begin{theorem} \label{thm:auxiliary_algorithm_simplified}
There exists a polynomial time algorithm that given a set $Z \subseteq \cN$ and a value $t_s \in [0, 1]$ outputs a vector $x \in P$ obeying
\begin{align*}
	\bE[F(x)]
	\geq
	e^{t_s - 1} \cdot [(2 - t_s - e^{-t_s} - o(1)) \cdot f(OPT) &- (1 - e^{-t_s}) \cdot f(Z \cap OPT) \\&- (2 - t_s - 2e^{-t_s}) \cdot f(Z \cup OPT)]
	\enspace.
\end{align*}
\end{theorem}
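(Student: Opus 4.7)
The plan is to design the ``Aided Measured Continuous Greedy'' algorithm as a two-phase continuous-time process maintaining a vector $y(t) \in P$ on $t \in [0,1]$, starting from $y(0) = \mathbf{0}$ and evolved by the measured rule $\frac{d y_u}{dt} = (1 - y_u(t)) \cdot x^*_u(t)$, where $x^*(t) \in P$ is the maximizer of a linear objective at time $t$. In the \emph{aid phase} $t \in [0, t_s]$, I would restrict $x^*$ to vectors supported outside $Z$ (i.e., $x^*_u = 0$ for $u \in Z$); this is feasible because $\characteristic_{OPT \setminus Z} \in P$ by down-closedness of $P$. In the \emph{standard phase} $t \in [t_s, 1]$, I would run the unrestricted measured steps of~\cite{FNS11}. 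The measured rule enforces the invariant $y_u(t) \leq 1 - e^{-t}$ throughout, and the aid-phase restriction additionally forces $y_u(t) = 0$ for every $u \in Z$ during $[0, t_s]$.

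The analysis proceeds by the standard differential-inequality technique for continuous greedy. By the chain rule and the maximality of $x^*(t)$, for any set $A$ whose characteristic vector is feasible for the LP solved at time $t$, submodularity gives
\begin{equation*}
\frac{d}{dt} F(y(t)) \;\geq\; F(y(t) \vee \characteristic_A) - F(y(t)).
\end{equation*}
I would take $A = OPT \setminus Z$ in the aid phase and $A = OPT$ in the standard phase. The reachable-value lemma of~\cite{FNS11} (that $F(y \vee \characteristic_A) \geq e^{-t} f(A)$ whenever $\|y\|_\infty \leq 1 - e^{-t}$), combined with a strengthened version exploiting $y_u = 0$ on $Z$ in the aid phase, converts these into first-order linear ODEs in $F(y(t))$. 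Solving them in sequence with integrating factor $e^t$, the contribution accumulated over $[0, t_s]$ propagates through $[t_s, 1]$ with a decay factor $e^{-(1-t_s)} = e^{t_s - 1}$, giving the prefactor of the theorem. Submodular identities such as $f(OPT \setminus Z) \geq f(OPT) - f(Z \cap OPT)$ and $f(OPT \setminus Z) + f(Z) \geq f(OPT \cup Z)$ would then be used to re-express the resulting bound in the variables $f(OPT)$, $f(Z \cap OPT)$, and $f(Z \cup OPT)$. As a sanity check, at $t_s = 0$ the bound collapses to $e^{-1} \cdot f(OPT)$, recovering the standard measured continuous greedy guarantee.

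The main obstacle will be matching the exact coefficients $(2 - t_s - e^{-t_s})$, $(1 - e^{-t_s})$, and $(2 - t_s - 2e^{-t_s})$ in the theorem. A naive combination of the restriction-based LP step with the inequality $f(OPT \setminus Z) \geq f(OPT) - f(Z \cap OPT)$ yields only coefficient $1 - t_s(1 - e^{-t_s})$ for $f(OPT)$ and no $f(Z \cup OPT)$ term, which is strictly weaker than the theorem. The extra improvement, along with the appearance of the $f(Z \cup OPT)$ penalty, must come from a refined reachable-value bound that takes advantage of $y_u = 0$ on $Z$ and of non-negativity of $f$ to produce terms involving $f(Z \cup OPT)$, and possibly from invoking the Lov\'{a}sz extension inequality of Lemma~\ref{lem:extensions_relation}. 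Finally, for a polynomial-time implementation I would discretize $[0, 1]$ into $\Theta(n^2)$ time steps and estimate each $\partial_u F$ by Monte Carlo sampling following~\cite{FNS11}, introducing cumulative error $o(1) \cdot f(OPT)$ that accounts for the $-o(1) \cdot f(OPT)$ slack in the theorem.
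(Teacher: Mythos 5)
Your algorithm and the skeleton of the analysis match the paper's: the two-phase measured continuous greedy with $x_u(t)=0$ on $Z$ during $[0,t_s]$, the derivative bound $\frac{d}{dt}F(y(t)) \geq F(y(t)\vee \characteristic_A)-F(y(t))$ with $A=OPT\setminus Z$ and then $A=OPT$, the two-stage ODE solution with the $e^{t_s-1}$ propagation factor, and the discretization/sampling step are all exactly what the paper does. However, there is a genuine gap, and you identify it yourself: you do not have the refined lower bound on $F(y(t)\vee\characteristic_A)$, and without it your computation gives only coefficient $1-t_s(1-e^{-t_s})$ for $f(OPT)$ with no $f(Z\cup OPT)$ term, which does not prove the stated theorem. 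The missing ingredient is precisely the paper's Lemma~\ref{lem:bound1}: for every $t$ and every $A\subseteq\cN$,
\[
F(y(t)\vee\characteristic_A)\;\geq\;\left(e^{-\max\{0,t-t_s\}}-e^{-t}\right)\max\{0,\,f(A)-f(A\cup Z)\}\;+\;e^{-t}\cdot f(A)\enspace.
\]
Its proof uses two coordinate bounds --- $y_u(t)\leq 1-e^{-t}$ for all $u$, and crucially $y_u(t)\leq 1-e^{-\max\{0,t-t_s\}}$ for $u\in Z$ (the $Z$-coordinates stay at $0$ until $t_s$ and only then start to grow) --- together with $F\geq\hat f$ (Lemma~\ref{lem:extensions_relation}): splitting the Lov\'asz integral over $\lambda$, the top range $[1-e^{-t},1)$ has $T_\lambda(y(t)\vee\characteristic_A)=A$ and yields $e^{-t}f(A)$, while the middle range $[1-e^{-\max\{0,t-t_s\}},1-e^{-t})$ has $T_\lambda=B(\lambda)\cup A$ with $B(\lambda)\subseteq\cN\setminus Z$, and submodularity plus non-negativity give $f(B(\lambda)\cup A)\geq f(A)-f(Z\cup A)$. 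This is where every $f(Z\cup OPT)$ term in the theorem comes from.

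Two further points where your sketch falls short of closing this gap. First, you only invoke a ``strengthened'' bound during the aid phase; the lemma above is equally essential during the \emph{standard} phase $t\in[t_s,1)$, where the lag of the $Z$-coordinates ($y_u(t)\leq 1-e^{-(t-t_s)}$ for $u\in Z$) upgrades the generic $e^{-t}f(OPT)$ term to $e^{t_s-t}f(OPT)-(e^{t_s-t}-e^{-t})f(Z\cup OPT)$, which is what pushes the $f(OPT)$ coefficient to $2-t_s-e^{-t_s}$ rather than the naive value you compute. Second, the submodular identity $f(OPT\setminus Z)+f(Z)\geq f(OPT\cup Z)$ you propose plays no role and cannot manufacture the $f(Z\cup OPT)$ penalties with the right coefficients; the only rearrangement needed at the end is $f(OPT\setminus Z)\geq f(OPT)-f(Z\cap OPT)$, applied after solving the two differential inequalities obtained from the refined lemma. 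So the proposal is a correct high-level plan but, as written, does not prove the claimed inequality; the refined reachable-value lemma is the heart of the argument and must be supplied.
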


Next is the promised proof that Theorem~\ref{thm:auxiliary_algorithm_simplified} implies Theorem~\ref{thm:auxiliary_algorithm}.

\begin{proof}[Proof of Theorem~\ref{thm:auxiliary_algorithm} given Theorem~\ref{thm:auxiliary_algorithm_simplified}]
Consider an algorithm $ALG$ that given the $z$ and $t_s$ arguments specified by Theorem~\ref{thm:auxiliary_algorithm} executes the algorithm guaranteed by Theorem~\ref{thm:auxiliary_algorithm_simplified} with the same value $t_s$ and with a random set $Z$ distributed like $\RSet(z)$. The output of $ALG$ is then the output produced by the algorithm guaranteed by Theorem~\ref{thm:auxiliary_algorithm_simplified}. Let us denote this output by $x$.

Theorem~\ref{thm:auxiliary_algorithm_simplified} guarantees that, for every given $Z$,
\begin{align*}
	\bE[F(x) \mid Z]
	\geq
	e^{t_s - 1} \cdot [(2 - t_s - e^{-t_s} - o(1)) \cdot f(OPT) &- (1 - e^{-t_s}) \cdot f(Z \cap OPT) \\&- (2 - t_s - 2e^{-t_s}) \cdot f(Z \cup OPT)]
	\enspace.
\end{align*}
To complete the proof we take the expectation over $Z$ over the two sides of the last inequality and observe that
\[
	\bE[f(Z \cap OPT)]
	=
	\bE[f(\RSet(z) \cap OPT)]
	=
	\bE[f(\RSet(z \wedge \characteristic_{OPT}))]
	=
	F(z \wedge \characteristic_{OPT})
\]
and
\[
	\bE[f(Z \cup OPT)]
	=
	\bE[f(\RSet(z) \cup OPT)]
	=
	\bE[f(\RSet(z \vee \characteristic_{OPT}))]
	=
	F(z \vee \characteristic_{OPT})
	\enspace.
	\qedhere
\]
\end{proof}


In the rest of this section we give a non-formal proof of Theorem~\ref{thm:auxiliary_algorithm_simplified}. This proof explains the main ideas necessary for proving the theorem, but uses some non-formal simplifications such as allowing a direct oracle access to the multilinear extension $F$ and giving the algorithm in the form of a continuous time algorithm (which cannot be implemented on a discrete computer). There are known techniques for getting rid of these simplifications (see, \eg,~\cite{CCPV11}), and a formal proof of Theorem~\ref{thm:auxiliary_algorithm_simplified} based on these techniques is given in Appendix~\ref{app:formal_proof_auxiliary}.

The algorithm we use for the non-formal proof of Theorem~\ref{thm:auxiliary_algorithm_simplified} is given as Algorithm~\ref{alg:aided_mcgreedy_continuous}. This algorithm starts with the empty solution $y(0) = \characteristic_\varnothing$ at time $0$, and grows this solution over time until it reaches the final solution $y(1)$ at time $1$. The way the solution grows varies over time. During the time range $[t_s, 1)$ the solution grows like in the Measured Continuous Greedy algorithm of~\cite{FNS11}. On the other hand, during the earlier time range of $[0, t_s)$ the algorithm pretends that the elements of $Z$ do not exist (by giving them negative marginal profits), and grows the solution in the way Measured Continuous Greedy would have grown it if it was given the ground set $\cN \setminus Z$.
The value $t_s$ is the time in which the algorithm switches between the two ways it uses to grow its solution, thus, the $s$ in the notation $t_s$ stands for ``switch''.


\begin{algorithm}
\caption{\textsf{Aided Measured Continuous Greedy (non-formal)}($f, P, Z, t_s$)} \label{alg:aided_mcgreedy_continuous}
\DontPrintSemicolon
Let $y(0) \gets \characteristic_\varnothing$.\\
\ForEach{$t \in [0, 1)$}
{
    For each $u \in \cN$ let $w_u(t) \gets F(y(t) \vee \characteristic_{u}) - F(y(t))$.\\
    Let $x(t) \gets \left\{\begin{array}{ll}\arg \max_{x\in P}\{\sum_{u\in \cN\setminus Z} w_u(t) \cdot x_u(t) - \sum_{u\in Z}x_u(t)\} & \mbox{if } t \in[0, t_s) \enspace, \\
     \arg \max_{x\in P}\left\{\sum_{u\in \cN} w_u(t) \cdot x_u(t)\right\} & \mbox{if } t \in[t_s,1) \enspace.\end{array}\right.$\\
	Increase $y(t)$ at a rate of $\frac{dy(t)}{dt} = (\characteristic_\cN - y(t)) \circ x(t)$.
}
\Return{$y(1)$.}
\end{algorithm}

We first note that algorithm outputs a vector in $P$.

\begin{observation} \label{obs:feasible}
$y(1) \in P$.
\end{observation}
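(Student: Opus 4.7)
The plan is to reduce the feasibility of $y(1)$ to the convexity and down-closedness of $P$, exploiting the particular form of the ODE governing $y(t)$.

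First I would solve the differential equation coordinate-wise. For each $u \in \cN$, the equation $\frac{dy_u(t)}{dt} = (1 - y_u(t)) \cdot x_u(t)$ with initial condition $y_u(0) = 0$ has the closed-form solution
\[
    y_u(t) = 1 - \exp\!\left(-\int_0^t x_u(\tau)\, d\tau\right).
\]
This immediately shows that $y_u(t) \in [0, 1]$ for all $t \in [0, 1]$, so $y(1) \in [0, 1]^\cN$.

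Next I would use the elementary inequality $1 - e^{-s} \leq s$ for $s \geq 0$ to derive the coordinate-wise bound
\[
    y_u(1) \leq \int_0^1 x_u(\tau)\, d\tau.
\]
The right-hand side, viewed as a vector in $\bR^\cN$, is a continuous convex combination of vectors $x(\tau) \in P$. Since $P$ is a (compact, convex) polytope, this integral lies in $P$; this follows either directly from Jensen's inequality applied to the indicator of the closed half-spaces defining $P$, or by approximating the integral by Riemann sums of convex combinations of points in $P$ and passing to the limit.

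Finally, having established that $y(1)$ is coordinate-wise dominated by a vector in $P$, I would invoke the down-closedness of $P$ (an assumption of Theorem~\ref{thm:auxiliary_algorithm_simplified}) to conclude $y(1) \in P$. The only mildly delicate step is the ``continuous convex combination'' claim, but this is routine and already implicit in the analyses of Measured Continuous Greedy in \cite{FNS11,CCPV11}, so no real obstacle arises.
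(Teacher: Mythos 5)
Your proof is correct, and it reaches the conclusion through the same two pillars as the paper (convexity of $P$ applied to an integral, plus down-closedness), but with the steps arranged differently. The paper's one-line argument applies down-closedness \emph{inside} the integral: since $y(t) \in [0,1]^\cN$, each integrand $(\characteristic_\cN - y(t)) \circ x(t)$ is coordinate-wise dominated by $x(t) \in P$ and hence lies in $P$, so $y(1) = \int_0^1 (\characteristic_\cN - y(t)) \circ x(t)\, dt$ is directly a convex combination of points of $P$. You instead solve the ODE in closed form, $y_u(t) = 1 - \exp(-\int_0^t x_u(\tau)\,d\tau)$, bound $y(1)$ coordinate-wise by $\int_0^1 x(\tau)\,d\tau \in P$, and apply down-closedness once at the end. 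Your route is slightly longer (the explicit exponential is not really needed: once you know $y_u(t) \in [0,1]$, the crude bound $\frac{dy_u(t)}{dt} = (1-y_u(t))x_u(t) \le x_u(t)$ already gives $y_u(1) \le \int_0^1 x_u(\tau)\,d\tau$), but it has two virtues: it makes explicit the fact $y(t) \in [0,1]^\cN$, which the paper's argument for this observation uses only implicitly, and it mirrors exactly the structure of the paper's formal discrete proof in the appendix (Corollary~\ref{cor:feasible_formal}), where $y(1) \le \sum_t \delta_t \cdot x(t)$ is established first and down-closedness is invoked last. Both arguments are sound.
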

\begin{proof}
Observe that $x(t)\in P$ at each time $t$, which implies that $(\characteristic_\cN - y(t)) \cdot x(t)$ is also in $P$ since $P$ is down-closed. Therefore, $y(1) = \int_0^1 (\characteristic_\cN - y(t)) \cdot x(t) dt$ is a convex combination of vectors in $P$, and thus, belongs to $P$.
\end{proof}

The following lemma lower bounds the increase in $F(y(t))$ as a function of $t$.

\begin{lemma} \label{lem:derivative_lower_bound_simple}
For every $t \in [0, 1)$,
\[
	\frac{dF(y(t))}{dt}
	\geq
	\begin{cases}
		F(y(t) \vee \characteristic_{OPT \setminus Z}) - F(y(t)) & \text{if $t \in [0, t_s)$} \enspace, \\
		F(y(t) \vee \characteristic_{OPT}) - F(y(t)) & \text{if $t \in [t_s, 1)$} \enspace.
	\end{cases}
\]
\end{lemma}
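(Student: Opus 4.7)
The proof has three ingredients: the chain rule applied to $F(y(t))$, the optimality of $x(t)$ against concrete feasible alternatives, and a telescoping submodularity argument that converts a sum of singleton marginals into the ``set'' marginal that appears on the right-hand side. My first step would be to compute
\[
    \frac{dF(y(t))}{dt}
    =
    \sum_{u \in \cN} \partial_u F(y(t)) \cdot \frac{dy_u(t)}{dt}
    =
    \sum_{u \in \cN} \partial_u F(y(t)) \cdot (1 - y_u(t)) \cdot x_u(t)
    =
    \sum_{u \in \cN} w_u(t) \cdot x_u(t)
    \enspace,
\]
where the last equality invokes the observation that $(1 - y_u) \partial_u F(y) = F(y \vee \characteristic_u) - F(y) = w_u$. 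So I only need to lower bound $\sum_u w_u(t) x_u(t)$ in each of the two time regimes.

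For $t \in [t_s, 1)$ the argument is almost immediate: by feasibility of $\characteristic_{OPT}$ and the choice of $x(t)$ as a maximizer of $\sum_u w_u(t) x_u$ over $P$, we get $\sum_u w_u(t) x_u(t) \geq \sum_{u \in OPT} w_u(t)$. Then a standard telescoping along any ordering of $OPT$ together with submodularity of $F$ (inherited from $f$) yields
\[
    \sum_{u \in OPT} w_u(t)
    \geq
    F(y(t) \vee \characteristic_{OPT}) - F(y(t))
    \enspace,
\]
since each increment $F(y(t) \vee \characteristic_{T + u}) - F(y(t) \vee \characteristic_T)$ is bounded above by $F(y(t) \vee \characteristic_u) - F(y(t)) = w_u(t)$.

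For the earlier regime $t \in [0, t_s)$ I would first dispose of the ``penalty'' term on $Z$. Because $P$ is down-closed, I can replace any maximizer of $\sum_{u \notin Z} w_u(t) x_u - \sum_{u \in Z} x_u$ over $P$ by the coordinate-wise smaller vector obtained by zeroing out the $Z$ coordinates; this stays in $P$ and can only increase the objective (the $w_u$ sum is unchanged while the penalty drops to $0$). Hence I may assume $x_u(t) = 0$ for every $u \in Z$. Next, the vector $\characteristic_{OPT \setminus Z}$ lies in $P$ (again by down-closedness, as $\characteristic_{OPT} \in P$) and zeros out its $Z$ coordinates, so the optimality of $x(t)$ gives
\[
    \sum_{u \in \cN \setminus Z} w_u(t) x_u(t)
    \geq
    \sum_{u \in \cN \setminus Z} w_u(t) \cdot \characteristic_{OPT \setminus Z}(u) - \sum_{u \in Z} \characteristic_{OPT \setminus Z}(u)
    =
    \sum_{u \in OPT \setminus Z} w_u(t)
    \enspace.
\]
Combined with $x_u(t) = 0$ on $Z$, this gives $\sum_u w_u(t) x_u(t) \geq \sum_{u \in OPT \setminus Z} w_u(t)$, and the same telescoping-submodularity step as before produces $F(y(t) \vee \characteristic_{OPT \setminus Z}) - F(y(t))$.

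\textbf{Main obstacle.} The one subtle point is the penalty handling in the second regime: the coefficients $w_u(t)$ for $u \in Z$ need not be nonnegative, so one cannot simply drop those terms from $\sum_u w_u(t) x_u(t)$. The remedy I would use, sketched above, is the down-closedness trick that lets me assume $x_u(t) = 0$ on $Z$ at the level of the $\arg\max$ itself, thereby making those terms vanish from both the derivative and the comparison with $\characteristic_{OPT \setminus Z}$. Everything else is a clean chain rule computation plus the standard submodular telescoping.
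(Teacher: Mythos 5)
Your proof is correct and follows essentially the same route as the paper's: chain rule to reduce the derivative to $x(t) \cdot w(t)$, comparison of $x(t)$ against the feasible benchmark $\characteristic_{OPT \setminus Z}$ (resp.\ $\characteristic_{OPT}$), and the submodularity/telescoping bound $\sum_{u \in S} w_u(t) \geq F(y(t) \vee \characteristic_S) - F(y(t))$. Your explicit observation that down-closedness forces $x_u(t) = 0$ on $Z$ during $[0, t_s)$ is exactly the point the paper uses (it states it explicitly only in the proof of Lemma~\ref{lem:bound1}), so you have merely made the paper's implicit step explicit.
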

\begin{proof}
By the chain rule,
\begin{align} \label{eq:chain_rule}
	\frac{dF(y(t))}{dt}
	={} &
	\sum_{u \in \cN} \left(\frac{dy_u(t)}{dt} \cdot \left.\frac{\partial F(y)}{\partial y_u}\right|_{y = y(t)}\right)
	=
	\sum_{u \in \cN} \left((1 - y_u(t)) \cdot x_u(t) \cdot \left.\frac{\partial F(y)}{\partial y_u}\right|_{y = y(t)}\right)\\ \nonumber
	={} &
	\sum_{u \in \cN} \left(x_u(t) \cdot [F(y(t) \vee \characteristic_{u}) - F(y(t))]\right)
	=
	\sum_{u \in \cN} x_u(t) \cdot w_u(t)
	=
	x(t) \cdot w(t)
	\enspace.
\end{align}

Consider first the case $t \in [0, t_s)$.
During this time period Algorithm~\ref{alg:aided_mcgreedy_continuous} chooses $x(t)$ as the vector in $P$ maximizing $\sum_{u\in \cN\setminus Z} w_u(t) \cdot x_u(t) - \sum_{u\in Z}x_u(t)$. Since $P$ is down-closed $x(t)= \characteristic_{OPT \setminus Z}$ is in $P$ and has value $\characteristic_{OPT \setminus Z} \cdot w(t)$ and thus, we have $x(t) \cdot w(t) \geq \characteristic_{OPT \setminus Z} \cdot w(t)$. Plugging this observation into Equality~\eqref{eq:chain_rule} yields

\begin{align*}
	\frac{dF(y(t))}{dt}
	={} &
	x(t) \cdot w(t)
	\geq
	\characteristic_{OPT \setminus Z} \cdot w(t)
	=
	\sum_{u \in OPT \setminus Z} [F(y(t) \vee \characteristic_{u}) - F(y(t))]\\
	\geq{} &
	F(y(t) \vee \characteristic_{OPT \setminus Z}) - F(y(t))
	\enspace,
\end{align*}
where the last inequality holds by the submodularity of $f$.

Similarity, when $t \in [t_s, 1)$ Algorithm~\ref{alg:aided_mcgreedy_continuous} chooses $x(t)$ as the vector in $P$ maximizing $x(t) \cdot w(t)$. Since $\characteristic_{OPT} \in P$, we get this time $x(t) \cdot w(t) \geq \characteristic_{OPT} \cdot w(t)$. Plugging this observation into Equality~\eqref{eq:chain_rule} yields
\begin{align*}
	\frac{dF(y(t))}{dt}
	={} &
	x(t) \cdot w(t)
	\geq
	\characteristic_{OPT} \cdot w(t)
	=
	\sum_{u \in OPT} [F(y(t) \vee \characteristic_{u}) - F(y(t))]\\ \nonumber
	\geq{} &
	F(y(t) \vee \characteristic_{OPT}) - F(y(t))
	\enspace,
\end{align*}
where the last inequality holds again by the submodularity of $f$.
\end{proof}

\begin{lemma} \label{lem:bound1}
For every time $t \in [0, 1)$ and set $A\subseteq \cN$ it holds that 
\[
	F(y(t) \vee \characteristic_{A})
	\geq
	\left(e^{-\max\{0,t-t_s\}}- e^{-t}\right)\max\left\{0, f(A)-f(A\cup Z)\right\} + e^{-t} \cdot f(A)
	\enspace.
\]
\end{lemma}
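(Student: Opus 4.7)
I would split the analysis at $t = t_s$ and handle the two phases separately.

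\textbf{Phase 1 ($t \in [0, t_s]$).} During this phase, the LP defining $x(t)$ assigns coefficient $-1$ to $x_u$ for every $u \in Z$; since $P$ is down-closed, we may assume $x_u(t) = 0$ and therefore $y_u(t) = 0$ for all $u \in Z$ throughout $[0, t_s]$. Integrating $\frac{d(1 - y_u)}{dt} = -(1 - y_u) x_u(t) \geq -(1 - y_u)$ also gives $y_u(t) \leq 1 - e^{-t}$ for every $u \in \cN$. I would establish two lower bounds on $F(y(t) \vee \characteristic_A)$ and take their maximum. First, applying the standard Feldman--Naor--Schwartz bound (for non-negative submodular $h$ and a random set $R$ with marginals at most $p$, $\bE[h(R)] \geq (1-p) h(\varnothing)$) to $h(T) = f(T \cup A)$ yields $F(y(t) \vee \characteristic_A) \geq e^{-t} f(A)$. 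Second, because $x_u(t) = 0$ on $Z$, both $\frac{dF(y \vee \characteristic_A)}{dt}$ and $\frac{dF(y \vee \characteristic_{A \cup Z})}{dt}$ are sums over $u \in \cN \setminus (A \cup Z)$; invoking the multilinear form of submodularity $F(y_1) + F(y_2) \geq F(y_1 \vee y_2) + F(y_1 \wedge y_2)$ with $y_1 = y \vee \characteristic_{A + u}$ and $y_2 = y \vee \characteristic_{A \cup Z}$ (whose join is $y \vee \characteristic_{A \cup Z + u}$ and whose meet is $y \vee \characteristic_A$, using $u \notin A \cup Z$) gives $F(y \vee \characteristic_{A+u}) - F(y \vee \characteristic_A) \geq F(y \vee \characteristic_{A \cup Z + u}) - F(y \vee \characteristic_{A \cup Z})$, hence $\frac{d}{dt}[F(y \vee \characteristic_A) - F(y \vee \characteristic_{A \cup Z})] \geq 0$. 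Integrating from $0$ and combining with the first bound applied to $A \cup Z$ yields $F(y(t) \vee \characteristic_A) \geq f(A) - (1 - e^{-t}) f(A \cup Z)$. Since $\max\{u, v\} = u + \max\{0, v - u\}$, the maximum of the two bounds equals $e^{-t} f(A) + (1 - e^{-t}) \max\{0, f(A) - f(A \cup Z)\}$, which is exactly the claim for $t \in [0, t_s]$.

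\textbf{Phase 2 ($t \in (t_s, 1]$).} Here I would prove an auxiliary ``shift'' lemma: for non-negative submodular $f$, a set $B \subseteq \cN$, and vectors $y \geq y'$ in $[0, 1]^\cN$ satisfying $1 - y_u \geq c(1 - y'_u)$ for every $u$ (some $c \in [0, 1]$), we have $F(y \vee \characteristic_B) \geq c \cdot F(y' \vee \characteristic_B)$. To prove the lemma, sample $R' \sim y'$ and then independently add each $u \notin R'$ to $\Delta$ with probability $(y_u - y'_u)/(1 - y'_u)$, which is at most $1 - c$ by hypothesis; setting $R = R' \cup \Delta$ gives $R \sim y$. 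Conditioning on $R'$ and applying the standard Feldman--Naor--Schwartz bound to $T \mapsto f(R' \cup B \cup T)$ and to the random set $\Delta \setminus B \subseteq \cN \setminus (R' \cup B)$ yields $\bE_\Delta[f(R \cup B) \mid R'] \geq c \cdot f(R' \cup B)$; averaging over $R'$ proves the shift lemma.

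I would then apply the shift lemma with $y = y(t)$, $y' = y(t_s)$, $B = A$, and $c = e^{-(t - t_s)}$. The hypothesis $1 - y_u(t) \geq e^{-(t - t_s)}(1 - y_u(t_s))$ follows by integrating $\frac{d(1 - y_u)}{dt} \geq -(1 - y_u)$ on $[t_s, t]$, and $y(t) \geq y(t_s)$ holds because $y_u$ is non-decreasing in $t$. The output is $F(y(t) \vee \characteristic_A) \geq e^{-(t-t_s)} F(y(t_s) \vee \characteristic_A)$. Plugging in the Phase 1 bound at $t = t_s$, namely $F(y(t_s) \vee \characteristic_A) \geq (1 - e^{-t_s}) M + e^{-t_s} f(A)$ with $M := \max\{0, f(A) - f(A \cup Z)\}$, and simplifying $e^{-(t-t_s)}[(1 - e^{-t_s}) M + e^{-t_s} f(A)] = (e^{-(t-t_s)} - e^{-t}) M + e^{-t} f(A)$, recovers precisely the claimed bound at time $t$.

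\textbf{Main obstacle.} I expect the bulk of the technical effort to go into the shift lemma, which requires carefully constructing the joint coupling $(R', \Delta)$ and verifying the marginal $(y_u - y'_u)/(1 - y'_u) \leq 1 - c$ so that Feldman--Naor--Schwartz applies inside the conditional expectation. Once the shift lemma is available, both phases reduce to bookkeeping with the submodularity derivative-comparison argument and the $\max\{u, v\} = u + \max\{0, v - u\}$ identity, together with the algorithm's design feature that $x_u(t) = 0$ on $Z$ during $[0, t_s)$.
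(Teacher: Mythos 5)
Your proof is correct, but it follows a genuinely different route from the paper. The paper proves the bound in one shot via the Lov\'{a}sz extension: it first derives the coordinate bounds $y_u(t) \leq 1-e^{-t}$ for all $u$ and $y_u(t) \leq 1-e^{-\max\{0,t-t_s\}}$ for $u \in Z$ (using the same observation as you, that the negative weights force $x_u(t)=0$ on $Z$ before $t_s$), then writes $F(y(t) \vee \characteristic_A) \geq \hat{f}(y(t) \vee \characteristic_A) = \int_0^1 f(T_\lambda(\cdot))\,d\lambda$ and splits the integral into three ranges of $\lambda$: the top range contributes $e^{-t} f(A)$ because there $T_\lambda = A$, the middle range contributes $(e^{-\max\{0,t-t_s\}}-e^{-t})\max\{0,f(A)-f(A\cup Z)\}$ because there $T_\lambda = B(\lambda)\cup A$ with $B(\lambda)\cap Z = \varnothing$ and submodularity gives $f(B(\lambda)\cup A) \geq f(A)-f(A\cup Z)$, and the bottom range is dropped by non-negativity. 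You instead avoid the Lov\'{a}sz extension entirely: in phase one you combine the standard sampling lemma ($\bE[h(R)] \geq (1-p)h(\varnothing)$) with a derivative-monotonicity argument showing $F(y\vee\characteristic_A)-F(y\vee\characteristic_{A\cup Z})$ is non-decreasing along the flow (which is valid, since only $u\notin A\cup Z$ contribute and $\partial_u F$ is antitone), and in phase two you prove a coupling-based ``shift lemma'' $F(y\vee\characteristic_B) \geq c\cdot F(y'\vee\characteristic_B)$ under $1-y_u \geq c(1-y'_u)$, which is correct as you argue and is a nice reusable generalization of the sampling lemma. What the paper's approach buys is that it is purely static---it uses only the pointwise coordinate bounds on $y(t)$, not the trajectory---so it carries over almost verbatim to the discretized algorithm in the appendix (where your phase-one derivative-coupling step would have to be redone as a telescoping argument with explicit error terms); what your approach buys is a proof that is self-contained and probabilistic, dispensing with Lemma~\ref{lem:extensions_relation}, and a modular shift lemma that isolates the decay factor $e^{-(t-t_s)}$ cleanly.
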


\begin{proof}
First, we note that for every time $t \in [0, 1]$ and element $u \in \cN$,
\begin{equation}\label{obs:bound}
	y_u(t) \leq
	\begin{cases}
		1 - e^{-t} & \text{if $u \not \in Z$} \enspace, \\
		1 - e^{-\max\{0,t-t_s\}} & \text{if $u \in Z$} \enspace.
	\end{cases}
\end{equation}

This follows for the following reason. Since $x(t)$ is always in $P\subseteq [0,1]^{\cN}$, 
$y_{u}(t)$ obeys the differential inequality
\[
	\frac{dy(t)}{dt}
	=
	(1 - y_u(t)) \cdot x(t) \leq (1-y_u(t))
	\enspace.
\]
Using the initial condition $y_u(0)=0$, the solution for this differential inequality is $y_u(t)\leq 1-e^{-t}$. To get the tighter bound for $u\in Z$, we note that at every time $t\in [0,t_s)$ Algorithm~\ref{alg:aided_mcgreedy_continuous} chooses as $x(t)$ a vector maximizing a linear function in $P$ which assigns a negative weight to elements of $Z$. Since $P$ is down-closed this maximum must have $x_u(t) = 0$ for every element $u \in Z$. This means that $y_u(t) = 0$ whenever $u \in Z$ and $t \in [0, t_s]$. Moreover, plugging the improved initial condition $y_u(t_s) = 0$ into the above differential inequality yields the promised tighter bound also for the range $(t_s, 1]$.

Next, let $\hat{f}$ be the Lov\'{a}sz extension of $f$. Then, by Lemma~\ref{lem:extensions_relation},
\begin{align}
	F(y(t) \vee \characteristic_{A})
	\geq{} & \hat{f}(y(t) \vee \characteristic_{A}) = \int_0^1 f(T_\lambda(y(t) \vee \characteristic_{A})) d\lambda  \nonumber\\
	\geq{} &
	\int_{1 - e^{-\max\{0,t-t_s\}}}^{1 - e^{-t}} f(T_\lambda(y(t) \vee \characteristic_{A})) d\lambda + \int_{1 - e^{-t}}^1 f(T_\lambda(y(t) \vee \characteristic_{A})) d\lambda \label{ineq10} \\
={} &
	\int_{1 - e^{-\max\{0,t-t_s\}}}^{1 - e^{-t}} f(T_\lambda(y(t) \vee \characteristic_{A})) d\lambda + e^{-t} \cdot f(A) \label{ineq11}\\
\geq & \left(e^{-\max\{0,t-t_s\}}- e^{-t}\right)\max\left\{0, f(A)-f(A\cup Z)\right\} + e^{-t} \cdot f(A) \label{ineq12}
	\enspace.
\end{align}
Inequality~\eqref{ineq10} follows by the non-negativity of $f$.
Equality~\eqref{ineq11} follows since, for $\lambda \in [1 - e^{-t}, 1)$, Inequality~\eqref{obs:bound} guarantees that $y_u(t) \leq \lambda$ for every $u \in \cN$, and thus, $T_\lambda(y(t) \vee \characteristic_{A}) = A$.
Finally Inequality~\eqref{ineq12} follows since, for $\lambda \in [1 - e^{-\max\{0,t-t_s\}}, 1 - e^{-t})$, Inequality~\eqref{obs:bound} guarantees that $y_u(t) \leq \lambda$ for every $u \in Z$, and thus, $T_\lambda(y(t) \vee \characteristic_{A}) = B(\lambda) \cup A$ for some $B(\lambda) \subseteq \cN \setminus Z$. By the non-negativity of $f$, $f(B(\lambda) \cup A)\geq 0$. Also, by the submodularity and non-negativity of $f$, for every such set $B(\lambda)$
\begin{align*}
	f(B(\lambda) \cup A)
	\geq{} &
	f(A) + f(B(\lambda) \cup Z \cup A) - f(Z \cup A)
\geq{}
	f(A) - f(Z \cup A)
	\enspace.
	\qedhere
\end{align*}
\end{proof}

Plugging the results of Lemma~\ref{lem:bound1} into the lower bound given by Lemma~\ref{lem:derivative_lower_bound_simple} on the improvement in $F(y(t))$ as a function of $t$ yields immediately the useful lower bound given by the next corollary.\footnote{Note that Corollary~\ref{cor:derivative_lower_bound} follows from a weaker version of Lemma~\ref{lem:bound1} which only guarantees $F(y(t) \vee \characteristic_{A}) \geq	(e^{-\max\{0,t-t_s\}}- e^{-t})\cdot[f(A)-f(A\cup Z)] + e^{-t} \cdot f(A)$. We proved the stronger version of the lemma above because it is useful in the formal proof of Theorem~\ref{thm:auxiliary_algorithm_simplified} given in Appendix~\ref{app:formal_proof_auxiliary}.}

\begin{corollary} \label{cor:derivative_lower_bound}
For every $t \in [0, 1)$,
\[
	\frac{dF(y(t))}{dt}
	\geq
	\begin{cases}
		f(OPT\setminus Z) - (1 - e^{-t}) \cdot f(Z \cup OPT) - F(y(t)) & \text{if $t \in [0, t_s)$} \enspace, \\
		e^{t_s-t} \cdot f(OPT) - (e^{t_s-t} - e^{-t}) \cdot f(Z \cup OPT) - F(y(t)) & \text{if $t \in [t_s, 1)$} \enspace.
	\end{cases}
\]
\end{corollary}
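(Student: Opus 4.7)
The plan is essentially a direct substitution, as hinted by the footnote: I would combine Lemma~\ref{lem:derivative_lower_bound_simple}, which lower bounds $\frac{dF(y(t))}{dt}$ by $F(y(t) \vee \characteristic_A) - F(y(t))$ for an appropriate set $A$, with the weaker form of Lemma~\ref{lem:bound1} obtained by replacing $\max\{0, f(A) - f(A \cup Z)\}$ with $f(A) - f(A \cup Z)$. This replacement preserves the inequality since $\max\{0, x\} \geq x$ and the coefficient $e^{-\max\{0,t-t_s\}} - e^{-t}$ is non-negative on $[0,1)$, giving
\[
	F(y(t) \vee \characteristic_{A}) \geq \bigl(e^{-\max\{0,t-t_s\}} - e^{-t}\bigr)\bigl(f(A) - f(A \cup Z)\bigr) + e^{-t} \cdot f(A).
\]

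For $t \in [0, t_s)$, I would instantiate this with $A = OPT \setminus Z$. Since $\max\{0, t - t_s\} = 0$ on this range, the coefficient of the bracket is $1 - e^{-t}$, and since $(OPT \setminus Z) \cup Z = Z \cup OPT$, the two contributions involving $f(OPT \setminus Z)$ telescope into
\[
	F(y(t) \vee \characteristic_{OPT \setminus Z}) \geq f(OPT \setminus Z) - (1 - e^{-t}) \cdot f(Z \cup OPT).
\]
Subtracting $F(y(t))$ from both sides and combining with Lemma~\ref{lem:derivative_lower_bound_simple} reproduces the first branch of the corollary.

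For $t \in [t_s, 1)$, I would instead instantiate the weaker Lemma~\ref{lem:bound1} with $A = OPT$. Now $\max\{0, t - t_s\} = t - t_s$, so the coefficient becomes $e^{t_s - t} - e^{-t}$, and a completely analogous telescoping collects the $f(OPT)$ terms into $e^{t_s - t} \cdot f(OPT)$, leaving $-(e^{t_s - t} - e^{-t}) \cdot f(Z \cup OPT)$. Subtracting $F(y(t))$ and applying Lemma~\ref{lem:derivative_lower_bound_simple} yields the second branch.

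There is no real obstacle here; the only step that requires a moment's thought is verifying that we are allowed to drop the $\max$ inside Lemma~\ref{lem:bound1}, after which the corollary follows from two short arithmetic simplifications. I would probably present the argument as a single short paragraph per range.
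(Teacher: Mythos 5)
Your proposal is correct and is exactly the paper's argument: the corollary is obtained by plugging the weakened form of Lemma~\ref{lem:bound1} (dropping the $\max$, which is valid since the coefficient $e^{-\max\{0,t-t_s\}}-e^{-t}$ is non-negative) with $A = OPT\setminus Z$ or $A = OPT$ into Lemma~\ref{lem:derivative_lower_bound_simple}, followed by the same algebraic simplification. Nothing is missing.
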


Using the last corollary we can complete the proof of Theorem~\ref{thm:auxiliary_algorithm_simplified}.

\begin{proof}[Proof of Theorem~\ref{thm:auxiliary_algorithm_simplified}]
We have already seen that $y(1)$---the output of Algorithm~\ref{alg:aided_mcgreedy_continuous}---belongs to $P$. It remains to show that 
\begin{align*}
	F(y(1))
	\geq
	e^{t_s - 1} \cdot [(2 - t_s - e^{-t_s}) \cdot f(OPT) &- (1 - e^{-t_s}) \cdot f(Z \cap OPT) \\&- (2 - t_s - 2e^{-t_s}) \cdot f(Z \cup OPT)]
	\enspace.
\end{align*}


%
Corollary~\ref{cor:derivative_lower_bound} describes a differential inequality for $F(y(t))$. Given the boundary condition $F(y(0)) \geq 0$, the solution for this differential inequality within the range $t \in [0, t_s]$ is
\[
	F(y(t))
	\geq
	(1 - e^{-t}) \cdot f(OPT\setminus Z) - (1 - e^{-t} - te^{-t}) \cdot f(Z \cup OPT)
	\enspace.
\]
Plugging $t = t_s$ into the last inequality, we get
\[
	F(y(t_s))
	\geq
	(1 - e^{-t_s}) \cdot f(OPT\setminus Z) - (1 - e^{-t_s} - t_se^{-t_s}) \cdot f(Z \cup OPT)
	\enspace.
\]

Let $v=(1 - e^{-t_s}) \cdot f(OPT\setminus Z) - (1 - e^{-t_s} - t_se^{-t_s}) \cdot f(Z \cup OPT)$ be the right hand side of the last inequality.
Next, we solve again the differential inequality given by Corollary~\ref{cor:derivative_lower_bound} for the range $t \in [t_s, 1]$ with the boundary condition $F(y(t_s)) \geq v$. The resulting solution is
\begin{align*}
	F(y(t))
	\geq{} &
	e^{-t}\left[\left(t-t_s\right)\left( e^{t_s} \cdot f(OPT) - (e^{t_s}-1) \cdot f(Z \cup OPT)\right)+ve^{t_s}\right]
	\enspace
\end{align*}

Plugging $t=1$ and the value of $v$ we get
\begin{align}
	F(y(1))
	\geq{} &
	e^{-1}\left[\left(1-t_s\right)\left( e^{t_s} \cdot f(OPT) - (e^{t_s}-1) \cdot f(Z \cup OPT)\right)+ve^{t_s}\right] \nonumber\\
	\geq{} &
	\frac{1-t_s}{e}\left( e^{t_s} \cdot f(OPT) - (e^{t_s}-1) \cdot f(Z \cup OPT)\right) \label{inqq1} \\
	& +e^{t_s-1} \cdot \{(1 - e^{-t_s}) \cdot [f(OPT)-f(OPT\cap Z)] - (1 - e^{-t_s} - t_se^{-t_s}) \cdot f(Z \cup OPT)\} \nonumber\\
	={} & \begin{aligned}[t]e^{t_s - 1} \cdot [(2 - t_s - e^{-t_s}) \cdot f(OPT) &- (1 - e^{-t_s}) \cdot f(Z \cap OPT) \\&- (2 - t_s - 2e^{-t_s}) \cdot f(Z \cup OPT)] 	\enspace, \end{aligned}\nonumber
\end{align}
where Inequality \eqref{inqq1} follows since, by the submodularity and non-negativity of $f$,
\[
	f(OPT\setminus Z)
	\geq
	f(OPT)-f(OPT \cap Z) + f(\varnothing) \geq f(OPT)-f(OPT \cap Z)
	\enspace.
	\qedhere
\]

\end{proof}


\bibliographystyle{plain}
\bibliography{Submodular}

\begin{thebibliography}{10}

\bibitem{AS99}
A.~A. Ageev and M.~I. Sviridenko.
\newblock An 0.828 approximation algorithm for the uncapacitated facility
  location problem.
\newblock {\em Discrete Appl. Math.}, 93:149--156, July 1999.

\bibitem{AS00}
Noga Alon and Joel~H. Spencer.
\newblock {\em The Probabilistic Method}.
\newblock Wiley Interscience, second edition, 2000.

\bibitem{ABG13}
Per Austrin, Siavosh Benabbas, and Konstantinos Georgiou.
\newblock Better balance by being biased: A 0.8776-approximation for max
  bisection.
\newblock In {\em SODA}, pages 277--294, 2013.

\bibitem{Bach13}
Francis Bach.
\newblock Learning with submodular functions: A convex optimization
  perspective.
\newblock {\em Foundations and Trends in Machine Learning}, 6(2-3):145--373,
  2013.

\bibitem{BJ01}
Y.~Y. Boykov and M.~P. Jolly.
\newblock {Interactive graph cuts for optimal boundary \& region segmentation
  of objects in N-D images}.
\newblock In {\em ICCV}, volume~1, pages 105--112, 2001.

\bibitem{BFNS12}
Niv Buchbinder, Moran Feldman, Joseph~(Seffi) Naor, and Roy Schwartz.
\newblock A tight linear time (1/2)-approximation for unconstrained submodular
  maximization.
\newblock In {\em FOCS}, pages 649--658, 2012.

\bibitem{BFNS14}
Niv Buchbinder, Moran Feldman, Joseph~(Seffi) Naor, and Roy Schwartz.
\newblock Submodular maximization with cardinality constraints.
\newblock In {\em SODA}, pages 1433--1452, 2014.

\bibitem{CCPV11}
Gruia C{\u{a}}linescu, Chandra Chekuri, Martin P{\'{a}}l, and Jan
  Vondr{\'{a}}k.
\newblock Maximizing a monotone submodular function subject to a matroid
  constraint.
\newblock {\em {SIAM} J. Comput.}, 40(6):1740--1766, 2011.

\bibitem{CE11}
Chandra Chekuri and Alina Ene.
\newblock Approximation algorithms for submodular multiway partition.
\newblock In {\em FOCS}, pages 807--816, 2011.

\bibitem{CK05}
Chandra Chekuri and Sanjeev Khanna.
\newblock A polynomial time approximation scheme for the multiple knapsack
  problem.
\newblock {\em SIAM J. Comput.}, 35(3):713--728, September 2005.

\bibitem{CVZ10}
Chandra Chekuri, Jan Vondr\'{a}k, and Rico Zenklusen.
\newblock Dependent randomized rounding via exchange properties of
  combinatorial structures.
\newblock In {\em FOCS}, pages 575--584, 2010.

\bibitem{CVZ11}
Chandra Chekuri, Jan Vondr\'{a}k, and Rico Zenklusen.
\newblock Submodular function maximization via the multilinear relaxation and
  contention resolution schemes.
\newblock In {\em STOC}, pages 783--792, 2011.

\bibitem{CVZ14}
Chandra Chekuri, Jan Vondr{\'{a}}k, and Rico Zenklusen.
\newblock Submodular function maximization via the multilinear relaxation and
  contention resolution schemes.
\newblock {\em {SIAM} J. Comput.}, 43(6):1831--1879, 2014.

\bibitem{CKR06}
Reuven Cohen, Liran Katzir, and Danny Raz.
\newblock An efficient approximation for the generalized assignment problem.
\newblock {\em Information Processing Letters}, 100(4):162--166, 2006.

\bibitem{CC84}
M.~Conforti and G.~Cornu\`{e}jols.
\newblock Submodular set functions, matroids and the greedy algorithm. tight
  worstcase bounds and some generalizations of the radoedmonds theorem.
\newblock {\em Disc. Appl. Math.}, 7(3):251--274, 1984.

\bibitem{CFN77a}
G.~Cornuejols, M.~L. Fisher, and G.~L. Nemhauser.
\newblock Location of bank accounts to optimize float: an analytic study of
  exact and approximate algorithms.
\newblock {\em Management Sciences}, 23:789--810, 1977.

\bibitem{CFN77b}
G.~Cornuejols, M.~L. Fisher, and G.~L. Nemhauser.
\newblock On the uncapacitated location problem.
\newblock {\em Annals of Discrete Mathematics}, 1:163--177, 1977.

\bibitem{EN16}
Alina Ene and Huy~L. Nguyen.
\newblock Constrained submodular maximization: Beyond 1/e.
\newblock In {\em FOCS}, 2016.

\bibitem{F98}
Uriel Feige.
\newblock A threshold of $\ln n$ for approximating set cover.
\newblock {\em J. ACM}, 45(4):634–--652, 1998.

\bibitem{FG95}
Uriel Feige and Michel~X. Goemans.
\newblock Aproximating the value of two prover proof systems, with applications
  to max 2sat and max dicut.
\newblock In {\em ISTCS}, pages 182--189, 1995.

\bibitem{FMV11}
Uriel Feige, Vahab~S. Mirrokni, and Jan Vondr\'{a}k.
\newblock Maximizing non-monotone submodular functions.
\newblock {\em SIAM Journal on Computing}, 40(4):1133--1153, 2011.

\bibitem{FV06}
Uriel Feige and Jan Vondr\'{a}k.
\newblock Approximation algorithms for allocation problems: Improving the
  factor of $1 - 1/e$.
\newblock In {\em FOCS}, pages 667--–676, 2006.

\bibitem{F13}
Moran Feldman.
\newblock {\em Maximization Problems with Submodular Objective Functions}.
\newblock PhD thesis, Technion -- Israel Institute of Technology, June 2013.

\bibitem{FNS11}
Moran Feldman, Joseph Naor, and Roy Schwartz.
\newblock A unified continuous greedy algorithm for submodular maximization.
\newblock In {\em FOCS}, pages 570--579, 2011.

\bibitem{FNSW11}
Moran Feldman, Joseph~(Seffi) Naor, Roy Schwartz, and Justin Ward.
\newblock Improved approximations for k-exchange systems.
\newblock In {\em ESA}, pages 784--798, 2011.

\bibitem{FNW78}
M.~L. Fisher, G.~L. Nemhauser, and L.~A. Wolsey.
\newblock An analysis of approximations for maximizing submodular set functions
  -- ii.
\newblock In {\em Polyhedral Combinatorics}, volume~8 of {\em Mathematical
  Programming Studies}, pages 73--87. Springer Berlin Heidelberg, 1978.

\bibitem{FGMS06}
Lisa Fleischer, Michel~X. Goemans, Vahab~S. Mirrokni, and Maxim Sviridenko.
\newblock Tight approximation algorithms for maximum general assignment
  problems.
\newblock In {\em SODA}, pages 611--–620, 2006.

\bibitem{FJ95}
Alan~M. Frieze and Mark Jerrum.
\newblock Improved approximation algorithms for max k-cut and max bisection.
\newblock In {\em IPCO}, pages 1--13, 1995.

\bibitem{GV11}
Shayan~Oveis Gharan and Jan Vondr\'{a}k.
\newblock Submodular maximization by simulated annealing.
\newblock In {\em SODA}, pages 1098--1117, 2011.

\bibitem{GW95}
Michel~X. Goemans and David~P. Williamson.
\newblock Improved approximation algorithms for maximum cut and satisfiability
  problems using semidefinite programming.
\newblock {\em Journal of the ACM}, 42(6):1115--1145, 1995.

\bibitem{HZ01}
Eran Halperin and Uri Zwick.
\newblock Combinatorial approximation algorithms for the maximum directed cut
  problem.
\newblock In {\em SODA}, pages 1--7, 2001.

\bibitem{HMS08}
Jason Hartline, Vahab Mirrokni, and Mukund Sundararajan.
\newblock Optimal marketing strategies over social networks.
\newblock In {\em WWW}, pages 189--198, 2008.

\bibitem{H01}
Johan H{\.{a}}stad.
\newblock Some optimal inapproximability results.
\newblock {\em J. ACM}, 48:798--859, July 2001.

\bibitem{HK78}
D.~Hausmann and B.~Korte.
\newblock K-greedy algorithms for independence systems.
\newblock {\em Oper. Res. Ser. A-B}, 22(1):219--228, 1978.

\bibitem{HKJ80}
D.~Hausmann, B.~Korte, and T.~Jenkyns.
\newblock Worst case analysis of greedy type algorithms for independence
  systems.
\newblock {\em Math. Prog. Study}, 12:120--131, 1980.

\bibitem{JB11}
S.~Jegelka and J.~Bilmes.
\newblock Submodularity beyond submodular energies: Coupling edges in graph
  cuts.
\newblock {\em 2012 IEEE Conference on Computer Vision and Pattern
  Recognition}, 0:1897--1904, 2011.

\bibitem{J76}
T.~Jenkyns.
\newblock The efficacy of the greedy algorithm.
\newblock {\em Cong. Num.}, 17:341--350, 1976.

\bibitem{K72}
Richard~M. Karp.
\newblock Reducibility among combinatorial problems.
\newblock In R.~E. Miller and J.~W. Thatcher, editors, {\em Complexity of
  Computer Computations}, pages 85--103. Plenum Press, 1972.

\bibitem{KKT03}
David Kempe, Jon Kleinberg, and \'{E}va Tardos.
\newblock Maximizing the spread of influence through a social network.
\newblock In {\em SIGKDD}, pages 137--146, 2003.

\bibitem{KKMO07}
Subhash Khot, Guy Kindler, Elchanan Mossel, and Ryan O'Donnell.
\newblock Optimal inapproximability results for max-cut and other 2-variable
  csps?
\newblock {\em SIAM J. Comput.}, 37:319--357, April 2007.

\bibitem{KMN99}
S.~Khuller, A.~Moss, and J.~Naor.
\newblock The budgeted maximum coverage problem.
\newblock {\em Information Processing Letters}, 70(1):39--45, 1999.

\bibitem{KH78}
B.~Korte and D.~Hausmann.
\newblock An analysis of the greedy heuristic for independence systems.
\newblock {\em Annals of Discrete Math.}, 2:65--74, 1978.

\bibitem{KSG08}
Andreas Krause, AjitSingh, and Carlos Guestrin.
\newblock Near-optimal sensor placements in gaussian processes: Theory,
  efficient algorithms and empirical studies.
\newblock {\em J. Mach. Learn. Res.}, 9:235--284, January 2008.

\bibitem{KG05}
Andreas Krause and Carlos Guestrin.
\newblock Near-optimal nonmyopic value of information in graphical models.
\newblock In {\em UAI}, page~5, 2005.

\bibitem{KLGVF08}
Andreas Krause, Jure Leskovec, Carlos Guestrin, Jeanne VanBriesen, and Christos
  Faloutsos.
\newblock Efficient sensor placement optimization for securing large water
  distribution networks.
\newblock {\em Journal of Water Resources Planning and Management},
  134(6):516--526, November 2008.

\bibitem{KST13}
Ariel Kulik, Hadas Shachnai, and Tami Tamir.
\newblock Approximations for monotone and nonmonotone submodular maximization
  with knapsack constraints.
\newblock {\em Math. Oper. Res.}, 38(4):729--739, 2013.

\bibitem{LMNS10}
Jon Lee, Vahab~S. Mirrokni, Viswanath Nagarajan, and Maxim Sviridenko.
\newblock Maximizing non-monotone submodular functions under matroid or
  knapsack constraints.
\newblock {\em SIAM Journal on Discrete Mathematics}, 23(4):2053–--2078,
  2010.

\bibitem{LSV09}
Jon Lee, Maxim Sviridenko, and Jan Vondr\'{a}k.
\newblock Submodular maximization over multiple matroids via generalized
  exchange properties.
\newblock In {\em APPROX}, pages 244--257, 2009.

\bibitem{LB10}
Hui Lin and Jeff Bilmes.
\newblock Multi-document summarization via budgeted maximization of submodular
  functions.
\newblock In {\em North American chapter of the Association for Computational
  Linguistics/Human Language Technology Conference (NAACL/HLT-2010)}, Los
  Angeles, CA, June 2010.

\bibitem{LB11}
Hui Lin and Jeff Bilmes.
\newblock A class of submodular functions for document summarization.
\newblock In {\em HLT}, pages 510--520, 2011.

\bibitem{L83}
L\'{a}szl\'{o} Lov\'{a}sz.
\newblock Submodular functions and convexity.
\newblock In A.~Bachem, M.~Gr\"{o}tschel, and B.~Korte, editors, {\em
  Mathematical Programming: the State of the Art}, pages 235--257. Springer,
  1983.

\bibitem{GLS81}
L.~Lov\'{a}sz M.~Gr\"{o}tschel and A.~Schrijver.
\newblock The ellipsoid method and its consequences in combinatorial
  optimization.
\newblock {\em Combinatoria}, 1(2):169--197, 1981.

\bibitem{NW78}
G.~L. Nemhauser and L.~A. Wolsey.
\newblock Best algorithms for approximating the maximum of a submodular set
  function.
\newblock {\em Mathematics of Operations Research}, 3(3):177--188, 1978.

\bibitem{NWF78}
G.~L. Nemhauser, L.~A. Wolsey, and M.~L. Fisher.
\newblock An analysis of approximations for maximizing submodular set
  functions—i.
\newblock {\em Mathematical Programming}, 14:265--294, 1978.

\bibitem{M04}
Maxim Sviridenko.
\newblock A note on maximizing a submodular set function subject to knapsack
  constraint.
\newblock {\em Operations Research Letters}, 32:41--43, 2004.

\bibitem{TSSW00}
Luca Trevisan, Gregory~B. Sorkin, Madhu Sudan, and David~P. Williamson.
\newblock Gadgets, approximation, and linear programming.
\newblock {\em SIAM J. Comput.}, 29:2074--2097, April 2000.

\bibitem{V13}
Jan Vondr{\'a}k.
\newblock Symmetry and approximability of submodular maximization problems.
\newblock {\em SIAM J. Comput.}, 42(1):265--304, 2013.

\end{thebibliography}

\appendix
\section{A Formal Proof of Theorem~\ref{thm:auxiliary_algorithm_simplified}} \label{app:formal_proof_auxiliary}

In this section we give a formal proof of Theorem~\ref{thm:auxiliary_algorithm_simplified}. This proof is based on the same ideas used in the non-formal proof of this theorem in Section~\ref{sec:auxiliary_algorithm}, but employs also additional known techniques in order to get rid of the issues that make the proof from Section~\ref{sec:auxiliary_algorithm} non-formal.

The algorithm we use to prove Theorem~\ref{thm:auxiliary_algorithm_simplified} is given as Algorithm~\ref{alg:aided_mcgreedy}. This algorithm is a discrete variant of Algorithm~\ref{alg:aided_mcgreedy_continuous}. While reading the algorithm, it is important to observe that the choice of the values $\bar{\delta}_1$ and $\bar{\delta}_2$ guarantees that the variable $t$ takes each one of the values $t_s$ and $1$ at some point, and thus, the vectors $y(t_s)$ and $y(1)$ are well defined.

\begin{algorithm}
\caption{\textsf{Aided Measured Continuous Greedy}($f, P, Z, t_s$)} \label{alg:aided_mcgreedy}
\DontPrintSemicolon
\tcp{Initialization}
Let $\bar{\delta}_1 \gets t_s \cdot n^{-4}$ and $\bar{\delta}_2 \gets (1 - t_s) \cdot n^{-4}$.\\
Let $t \gets 0$ and $y(t) \gets \characteristic_\varnothing$.\\

\BlankLine

\tcp{Growing $y(t)$}
\While{$t < 1$}
{
	\ForEach{$u \in \cN$}{Let $w_u(t)$ be an estimate of $\bE[f(u \mid \RSet(y(t))]$ obtained by averaging the values of $f(u \mid \RSet(y(t))$ for $r = \lceil 48n^6 \ln (2n) \rceil$ independent samples of $\RSet(y(t))$.}
	Let $x(t) \gets \left\{\begin{array}{ll}\arg \max_{x\in P}\{\sum_{u\in \cN\setminus Z} w_u(t) \cdot x_u(t) - \sum_{u\in Z}x_u(t)\} & \mbox{if } t \in[0, t_s) \enspace, \\
     \arg \max_{x\in P}\left\{\sum_{u\in \cN} w_u(t) \cdot x_u(t)\right\} & \mbox{if } t \in[t_s,1) \enspace.\end{array}\right.$\\
	Let $\delta_t$ be $\bar{\delta}_1$ when $t < t_s$ and $\bar{\delta}_2$ when $t \geq t_s$.\\
	Let $y(t + \delta_t) \gets y(t) + \delta_t (\characteristic_\cN - y(t)) \circ x(t)$.\\
	Update $t \gets t + \delta_t$.
}

\BlankLine

\Return{$y(1)$.}
\end{algorithm}

We begin the analysis of Algorithm~\ref{alg:aided_mcgreedy} by showing that $y(t)$ remains within the cube $[0, 1]^\cN$ throughout the execution of the algorithm. Without this observation, the algorithm is not well-defined.

\begin{observation}
For every value of $t$, $y(t) \in [0, 1]^\cN$.
\end{observation}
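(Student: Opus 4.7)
The plan is to prove this by induction on the discrete sequence of values that $t$ takes during the execution of Algorithm~\ref{alg:aided_mcgreedy}. Since $t$ is initialized to $0$ and incremented by $\bar{\delta}_1$ for the first $n^4$ iterations and then by $\bar{\delta}_2$ for the next $n^4$ iterations, the relevant set of values is finite and $y(t)$ is well-defined at each of them.

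The base case is immediate: $y(0) = \characteristic_\varnothing$ has all coordinates equal to $0$, hence $y(0) \in [0,1]^\cN$. For the inductive step, I would fix a value of $t$ for which $y(t) \in [0,1]^\cN$ and examine, coordinate by coordinate, the update rule
\[
    y_u(t + \delta_t) = y_u(t) + \delta_t \cdot (1 - y_u(t)) \cdot x_u(t).
\]
Since $x(t) \in P \subseteq [0,1]^\cN$, we have $x_u(t) \in [0,1]$, and by the inductive hypothesis $1 - y_u(t) \in [0,1]$; together with $\delta_t \geq 0$, this shows $y_u(t + \delta_t) \geq y_u(t) \geq 0$, giving the lower bound. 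For the upper bound, I would first observe that $\delta_t \in \{\bar{\delta}_1, \bar{\delta}_2\}$ with $\bar{\delta}_1 = t_s \cdot n^{-4} \leq 1$ and $\bar{\delta}_2 = (1 - t_s) n^{-4} \leq 1$, so $\delta_t \cdot x_u(t) \leq 1$, and then bound
\[
    y_u(t + \delta_t) \leq y_u(t) + (1 - y_u(t)) = 1.
\]

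There is no real obstacle here: the update is the standard ``measured'' continuous greedy update, which is specifically designed so that the factor $(1 - y_u(t))$ keeps the coordinates bounded by $1$ even when the step size $\delta_t$ is as large as $1$. The only thing that needs a brief check is that $\delta_t \leq 1$, which is immediate from the definitions of $\bar{\delta}_1$ and $\bar{\delta}_2$ together with $t_s \in [0,1]$ and $n \geq 1$.
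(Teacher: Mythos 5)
Your proof is correct and follows essentially the same route as the paper's: induction on the values taken by $t$, with a coordinate-wise argument that the update $y_u(t+\delta_t)=y_u(t)+\delta_t(1-y_u(t))x_u(t)$ preserves both bounds, the lower bound from nonnegativity of all factors and the upper bound from $\delta_t x_u(t)\leq 1$ (which the paper uses implicitly and you spell out). No issues.
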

\begin{proof}
We prove the observation by induction on $t$. Clearly the observation holds for $y(0) = \characteristic_\varnothing$. Assume the observation holds for some time $t$, then, for every $u \in \cN$,
\[
	y_u(t + \delta_t)
	=
	y_u(t) + \delta_t(1 - y_u(t)) \cdot x_u(t)
	\geq
	0
	\enspace,
\]
where the inequality holds since the induction hypothesis implies $1 - y_u(t) \in [0, 1]$. A similar argument also implies
\[
	y_u(t + \delta_t)
	=
	y_u(t) + \delta_t(1 - y_u(t)) \cdot x_u(t)
	\leq
	y_u(t) + (1 - y_u(t))
	=
	1
	\enspace.
	\qedhere
\]
\end{proof}

Using the last observation it is now possible to prove the following counterpart of Observation~\ref{obs:feasible}.

\begin{corollary} \label{cor:feasible_formal}
Algorithm~\ref{alg:aided_mcgreedy} always outputs a vector in $P$.
\end{corollary}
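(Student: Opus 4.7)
The plan is to mirror the structure of the continuous-time Observation~\ref{obs:feasible}, namely to exhibit $y(1)$ as (coordinate-wise dominated by) a convex combination of the vectors $x(t) \in P$, and then invoke down-closedness. The continuous argument uses the fact that $y(1) = \int_0^1 (\mathbf{1}_\cN - y(t)) \circ x(t)\, dt$ is a convex combination of the vectors $(\mathbf{1}_\cN - y(t)) \circ x(t) \leq x(t)$; I want the discrete analogue.

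First I would unroll the update rule. Since $y_u(t + \delta_t) = y_u(t) + \delta_t (1 - y_u(t)) x_u(t)$, one gets $1 - y_u(t+\delta_t) = (1 - y_u(t))(1 - \delta_t x_u(t))$, and by induction on the sequence of time steps $t_0 = 0, t_1, t_2, \ldots$ used by the while loop,
\[
    1 - y_u(1) = \prod_{i} \bigl(1 - \delta_{t_i} x_u(t_i)\bigr).
\]
Using the elementary inequality $1 - \prod_i (1 - a_i) \leq \sum_i a_i$ valid for $a_i \in [0, 1]$ (which applies here since $\delta_{t_i} x_u(t_i) \in [0,1]$), this yields the coordinate-wise bound
\[
    y(1) \;\leq\; \sum_{i} \delta_{t_i}\, x(t_i).
\]

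Next I would verify that the coefficients $\delta_{t_i}$ sum to exactly $1$. The choice $\bar\delta_1 = t_s \cdot n^{-4}$ makes the first phase of the while loop run for exactly $n^4$ iterations to bring $t$ from $0$ to $t_s$, and symmetrically $\bar\delta_2 = (1 - t_s) \cdot n^{-4}$ makes the second phase run for exactly $n^4$ iterations to bring $t$ from $t_s$ to $1$; summing the step sizes gives $t_s + (1 - t_s) = 1$. Hence $\sum_i \delta_{t_i} x(t_i)$ is a genuine convex combination of vectors $x(t_i) \in P$, and therefore lies in $P$.

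Finally, since $P$ is down-closed and $y(1)$ is coordinate-wise dominated by a vector in $P$, we conclude $y(1) \in P$, which is what the corollary asserts. There is no real obstacle here; the only subtlety is recognizing that one should bound $1 - y_u(1)$ as a product and then linearize via $1 - \prod(1 - a_i) \leq \sum a_i$ in order to turn the discrete trajectory into an explicit convex combination of the $x(t_i)$'s.
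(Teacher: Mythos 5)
Your proof is correct and takes essentially the same approach as the paper: both arguments dominate $y(1)$ coordinate-wise by $\sum_t \delta_t \cdot x(t)$, note that $\sum_t \delta_t = 1$ so this is a convex combination of vectors of $P$, and conclude via down-closedness. The only cosmetic difference is that the paper telescopes the additive updates to write $y(1) = \sum_t \delta_t(\characteristic_\cN - y(t)) \circ x(t)$ and drops the factors $1 - y_u(t) \leq 1$, whereas you pass through the product form of $1 - y_u(1)$ and the inequality $1 - \prod_i(1 - a_i) \leq \sum_i a_i$.
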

\begin{proof}
Let $T$ be the set of values $t$ takes during the execution of Algorithm~\ref{alg:aided_mcgreedy}. We observe that $\sum_{t \in T \setminus \{1\}} \delta_t = 1$, which implies that $\sum_{t \in T \setminus \{1\}} \delta_t \cdot x(t)$ is a convex combination of the vectors $\{x(t) : t \in T \setminus \{1\}\}$. As all these vectors belong to $P$, and $P$ is convex, any convex combination of them, including $\sum_{t \in T \setminus \{1\}} \delta_t \cdot x(t)$, must be in $P$.

Next, we rewrite the output of Algorithm~\ref{alg:aided_mcgreedy} as
\[
	y(1)
	=
	\sum_{t \in T \setminus \{1\}} \delta_t(\characteristic_\cN - y(t)) \circ x(t)
	\leq
	\sum_{t \in T \setminus \{1\}} \delta_t \cdot x(t)
	\enspace.
\]
By the above discussion the rightmost hand side of this inequality is a vector in $P$, which implies that $y(1) \in P$ since $P$ is down-closed.
\end{proof}

The next step towards showing that Algorithm~\ref{alg:aided_mcgreedy} proves Theorem~\ref{thm:auxiliary_algorithm_simplified} is analyzing its approximation ratio. We start this analysis by showing that with high probability all the estimations made by the algorithm are quite accurate. Let $\cA$ be the event that $|w_u(t) - \bE[f(u \mid \RSet(y(t)))]| \leq n^{-2} \cdot f(OPT)$ for every $u \in \cN$ and time $t$.

\begin{lemma}[The symmetric version of Theorem A.1.16 in~\cite{AS00}] \label{lem:concentration}
Let $X_i$, $1 \leq i \leq k$, be mutually independent with all $\bE[X_i] = 0$ and all $|X_i| \leq 1$. Set $S = X_1 + \dotsb + X_k$. Then, $\Pr[|S| > a] \leq 2e^{-a^2/2k}$.
\end{lemma}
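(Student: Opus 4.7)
The plan is to prove this via the standard Chernoff/Hoeffding argument, which consists of applying Markov's inequality to the exponential moment $e^{\lambda S}$ for a well-chosen $\lambda > 0$ and then optimizing. Concretely, I would first handle the one-sided tail $\Pr[S > a]$ and then double the bound by symmetry to cover $\Pr[S < -a]$.

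For the one-sided bound, the first step is to bound the moment generating function of each $X_i$. Because $|X_i| \leq 1$, I can write $X_i$ as the convex combination $\frac{1 + X_i}{2} \cdot 1 + \frac{1 - X_i}{2} \cdot (-1)$ of $\pm 1$; applying this convexity to the function $z \mapsto e^{\lambda z}$ on $[-1, 1]$ gives
\[
    e^{\lambda X_i} \;\leq\; \frac{1 + X_i}{2} e^{\lambda} + \frac{1 - X_i}{2} e^{-\lambda}.
\]
Taking expectations and using $\bE[X_i] = 0$ yields $\bE[e^{\lambda X_i}] \leq \cosh(\lambda)$. A standard Taylor series comparison (term-by-term: $(2k)! \geq 2^k k!$) then gives $\cosh(\lambda) \leq e^{\lambda^2/2}$.

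The next step is to combine the individual bounds using mutual independence: $\bE[e^{\lambda S}] = \prod_{i=1}^k \bE[e^{\lambda X_i}] \leq e^{k\lambda^2/2}$. Markov's inequality on the non-negative random variable $e^{\lambda S}$ then gives, for any $\lambda > 0$,
\[
    \Pr[S > a] \;=\; \Pr[e^{\lambda S} > e^{\lambda a}] \;\leq\; e^{-\lambda a} \cdot \bE[e^{\lambda S}] \;\leq\; e^{k \lambda^2 / 2 - \lambda a}.
\]
Optimizing the right-hand side by setting $\lambda = a/k$ produces the bound $\Pr[S > a] \leq e^{-a^2 / (2k)}$.

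Finally, to get the two-sided bound, I would observe that the variables $-X_i$ satisfy exactly the same hypotheses ($\bE[-X_i] = 0$ and $|{-X_i}| \leq 1$), so the same argument applied to $-S$ gives $\Pr[S < -a] = \Pr[-S > a] \leq e^{-a^2 / (2k)}$. A union bound on the two events yields $\Pr[|S| > a] \leq 2 e^{-a^2 / (2k)}$, as required. There is essentially no obstacle here; the only step requiring any care is the convexity bound for $\bE[e^{\lambda X_i}]$, which is the usual Hoeffding lemma specialized to the symmetric range $[-1, 1]$.
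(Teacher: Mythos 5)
Your proof is correct and complete: the exponential-moment bound $\bE[e^{\lambda X_i}] \leq \cosh(\lambda) \leq e^{\lambda^2/2}$, independence, Markov's inequality with $\lambda = a/k$, and the symmetric application to $-S$ together give exactly the stated bound (with the implicit and harmless assumption $a > 0$, under which your choice $\lambda = a/k$ is positive). The paper does not prove this lemma itself but cites it from Alon and Spencer, where the argument is essentially the same Hoeffding/Chernoff computation you carried out, so there is nothing further to reconcile.
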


\begin{corollary} \label{cor:A_prob}
$\Pr[\cA] \geq 1 - n^{-1}$.
\end{corollary}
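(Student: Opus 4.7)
The plan is the standard concentration-plus-union-bound argument: apply Lemma~\ref{lem:concentration} once for each pair $(u,t)$, then take a union bound over all such pairs. The parameters $r$, $\bar\delta_1$, $\bar\delta_2$ in Algorithm~\ref{alg:aided_mcgreedy} were chosen precisely to make this work.

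First, I would establish a uniform bound on each sample. The assumption made in Section~2 that $\characteristic_v \in P$ for every $v \in \cN$, combined with submodularity, gives $f(S) \leq n \cdot f(OPT)$ for every $S \subseteq \cN$. Hence $|f(u \mid S)| = |f(S+u) - f(S)| \leq 2n \cdot f(OPT)$ for all $u$ and $S$. This is the boundedness needed to invoke the concentration inequality.

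Next, fix a value of $t$ taken by the algorithm and an element $u \in \cN$, and let $\RSet_1,\dotsc,\RSet_r$ be the $r$ independent samples of $\RSet(y(t))$ used in forming $w_u(t)$. Define
\[
    X_i = \frac{f(u \mid \RSet_i) - \bE[f(u \mid \RSet(y(t)))]}{2n \cdot f(OPT)},
\]
so that the $X_i$ are independent, mean-zero, and bounded in absolute value by $1$. The event $|w_u(t) - \bE[f(u \mid \RSet(y(t)))]| > n^{-2} \cdot f(OPT)$ is equivalent to $|\sum_{i=1}^r X_i| > r/(2n^3)$. Applying Lemma~\ref{lem:concentration} with $k=r$ and $a = r/(2n^3)$ yields a failure probability of at most $2\exp(-r/(8n^6))$. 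Plugging in $r = \lceil 48 n^6 \ln(2n) \rceil$ makes $r/(8n^6) \geq 6 \ln(2n)$, so this probability is at most $2(2n)^{-6}$.

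Finally, I would union-bound over all relevant $(u,t)$ pairs. The algorithm takes exactly $t_s/\bar\delta_1 + (1-t_s)/\bar\delta_2 = 2n^4$ iterations, and for each iteration it estimates $w_u(t)$ for $n$ elements, giving at most $2n^5$ pairs. The total failure probability is therefore at most $2n^5 \cdot 2(2n)^{-6} \leq n^{-1}$, as required. There is no real conceptual obstacle here; the only thing to be careful about is bookkeeping the parameters so that the polynomial blow-up from the union bound is dominated by the exponential savings from the Chernoff-type concentration.
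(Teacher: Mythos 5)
Your proposal is correct and matches the paper's proof essentially step for step: the same normalization of the samples by $2n \cdot f(OPT)$, the same application of Lemma~\ref{lem:concentration} with $a = r/(2n^3)$ yielding failure probability $2e^{-rn^{-6}/8} \leq 2(2n)^{-6}$ per estimate, and the same union bound over the $n$ elements and $2n^4$ time steps. The only cosmetic slack is in the justification of $|X_i| \leq 1$: the bound $|f(u \mid S)| \leq 2n \cdot f(OPT)$ by itself only gives $|X_i| \leq 2$, so (as in the paper) one should instead note that each of the two terms in the numerator---the sampled marginal and its expectation---has absolute value at most $\max_{S \subseteq \cN} f(S) \leq n \cdot f(OPT)$.
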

\begin{proof}
Consider the calculation of $w_u(t)$ for a given $u \in \cN$ and time $t$. This calculation is done by averaging the value of $f(u \mid \RSet(y(t)))$ for $r$ independent samples of $\RSet(y(t))$. Let $Y_i$ denote the value of $f(u \mid \RSet(y(t)))$ obtained for the $i$-th sample, and let $X_i = \frac{Y_i - \bE[f(u \mid \RSet(y(t)))]}{2n \cdot f(OPT)}$. Then, by definition,
\[
	w_u(t)
	=
	\frac{\sum_{i = 1}^r Y_i}{r}
	=
	[2n \cdot f(OPT)] \cdot \frac{\sum_{i = 1}^r X_i}{r} + \bE[f(u \mid \RSet(y(t)))]
	\enspace.
\]

Since $Y_i$ is distributed like $f(u \mid \RSet(y(t)))$, the definition of $X_i$ guarantees that $\bE[X_i] = 0$ for every $1 \leq i \leq r$. Additionally, $|X_i| \leq 1$ for every such $i$ since the absolute values of both $Y_i$ and $\bE[f(u \mid \RSet(y(t)))]$ are upper bounded by $\max_{S \subseteq \cN} f(S) \leq n \cdot f(OPT)$ (the last inequality follows from our assumption that $\characteristic_u \in P$ for every element $u \in \cN$). Thus, by Lemma~\ref{lem:concentration},
\begin{align*}
	\Pr[|w_u(t) - \bE[f(u \mid \RSet(y(t)))]| > n^{-2} \cdot f(OPT)]
	={} &
	\Pr\left[\left|\sum_{i = 1}^r X_i\right| > \frac{r}{2n^3}\right]
	\leq
	2e^{-[rn^{-3}/2]^2/2r}\\
	={} &
	2e^{-rn^{-6}/8}
	\leq
	2e^{-6\ln (2n)}
	=
	2 \cdot \left(\frac{1}{2n}\right)^6
	\leq
	\frac{1}{2n^6}
	\enspace.
\end{align*}

Observe that Algorithm~\ref{alg:aided_mcgreedy} calculates $w_u(t)$ for every combination of element $u \in \cN$ and time $t < 1$. Since there are $n$ elements in $\cN$ and $2n^4$ times smaller than $1$, the union bound implies that the probability that for at least one such value $w_u(t)$ we have $|w_u(t) - \bE[f(u \mid \RSet(y(t)))]| > n^{-2} \cdot f(OPT)$ is upper bounded by
\[
	\frac{1}{2n^6} \cdot \left(n \cdot 2n^4\right)
	=
	\frac{1}{n}
	\enspace,
\]
which completes the proof of the corollary.
\end{proof}

Our next step is to give a lower bound on the increase in $F(y(t))$ as a function of $t$ given $\cA$. This lower bound is given by Corollary~\ref{cor:step_improvement}, which follows from the next two lemmata. The statement and proof of the corollary and the next lemma is easier with the following definition. Let $OPT'_t$ denote the set $OPT \setminus Z$ when $t < t_s$, and $OPT$ otherwise.

\begin{lemma} \label{lem:step_approximation}
Given $\cA$, for every time $t < 1$, $\sum_{u \in \cN} (1 - y_u(t)) \cdot x_u(t) \cdot \partial_u F(y(t))
\geq F(y(t) \vee \characteristic_{OPT'_t}) - F(y(t)) - O(n^{-1}) \cdot f(OPT)$.
\end{lemma}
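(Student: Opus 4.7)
}
The plan is to rewrite the left-hand side as a sum of expected marginals of $f$, relate those expected marginals to the estimates $w_u(t)$ using event $\cA$, invoke the optimality of $x(t)$ over $P$ with $\characteristic_{OPT'_t}$ as a competitor, and conclude via submodularity. Concretely, I would first apply the multilinear identity
\[
    (1-y_u(t)) \cdot \partial_u F(y(t))
    =
    F(y(t) \vee \characteristic_u) - F(y(t))
    =
    \bE[f(u \mid \RSet(y(t)))],
\]
so that the left-hand side becomes $\sum_{u \in \cN} x_u(t) \cdot \bE[f(u \mid \RSet(y(t)))]$. Conditioning on $\cA$ and using $x_u(t) \in [0,1]$ together with $|\cN| = n$, I would replace each $\bE[f(u \mid \RSet(y(t)))]$ by $w_u(t)$ at a total additive cost of at most $n \cdot n^{-2} \cdot f(OPT) = O(n^{-1}) \cdot f(OPT)$.

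Next I would split into the two time regimes. For $t \geq t_s$, since $\characteristic_{OPT} \in P$ by assumption, the optimality of $x(t)$ for the objective $\sum_u w_u(t) \cdot x_u$ over $P$ gives $\sum_u x_u(t) w_u(t) \geq \sum_{u \in OPT} w_u(t)$. Applying the concentration bound in the reverse direction picks up another $O(n^{-1}) \cdot f(OPT)$ error and yields a lower bound in terms of $\sum_{u \in OPT} \bE[f(u \mid \RSet(y(t)))]$. Submodularity of $f$ (telescoping marginals, as in the proof of Lemma~\ref{lem:derivative_lower_bound_simple}) then gives $\sum_{u \in OPT} \bE[f(u \mid \RSet(y(t)))] \geq F(y(t) \vee \characteristic_{OPT}) - F(y(t))$, which is exactly what we need since $OPT'_t = OPT$ in this regime.

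For $t < t_s$, the key observation is that because $P$ is down-closed and the linear objective assigns the strictly negative coefficient $-1$ to every $u \in Z$, the maximizer $x(t)$ must have $x_u(t) = 0$ for all $u \in Z$ — indeed, zeroing any positive $Z$-coordinate of a candidate maximizer stays in $P$ and strictly increases the objective. Hence $\sum_{u \in \cN} x_u(t) \cdot \bE[f(u \mid \RSet(y(t)))] = \sum_{u \in \cN \setminus Z} x_u(t) \cdot \bE[f(u \mid \RSet(y(t)))]$, and since $x_u(t) = 0$ on $Z$, the algorithm's penalty sum $\sum_{u \in Z} x_u(t)$ also vanishes. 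Using $\characteristic_{OPT \setminus Z} \in P$ (which holds because $\characteristic_{OPT} \in P$ and $P$ is down-closed), optimality of $x(t)$ gives $\sum_{u \in \cN \setminus Z} w_u(t) x_u(t) \geq \sum_{u \in OPT \setminus Z} w_u(t)$. Replacing the $w_u$'s by the true expected marginals again costs $O(n^{-1}) \cdot f(OPT)$, and submodularity yields $\sum_{u \in OPT \setminus Z} \bE[f(u \mid \RSet(y(t)))] \geq F(y(t) \vee \characteristic_{OPT \setminus Z}) - F(y(t))$, matching $OPT'_t = OPT \setminus Z$.

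The main obstacle will be the $t < t_s$ case: one must argue cleanly that the penalty term removes all $Z$-coordinates from $x(t)$ (so that the discrepancy between the algorithm's modified objective $\sum_{u \notin Z} w_u x_u - \sum_{u \in Z} x_u$ and the true quantity $\sum_u x_u \bE[f(u \mid \RSet(y(t)))]$ disappears), and this relies essentially on down-closedness of $P$ rather than on any property of $f$. Once that structural fact is in hand, the rest is a bookkeeping exercise combining concentration, optimality, and submodularity, each contributing at most $O(n^{-1}) \cdot f(OPT)$ of additive error.
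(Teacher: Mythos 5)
Your proposal is correct and follows essentially the same route as the paper's proof: the multilinear identity to express the left-hand side as $\sum_u x_u(t)\cdot \bE[f(u \mid \RSet(y(t)))]$, two applications of the event $\cA$ (each costing $O(n^{-1})\cdot f(OPT)$), optimality of $x(t)$ against the competitor $\characteristic_{OPT'_t} \in P$, and submodular telescoping; your explicit down-closedness argument that $x_u(t)=0$ on $Z$ for $t<t_s$ is exactly the fact the paper uses (it spells it out in the proof of Lemma~\ref{lem:max_y}) to make the penalized objective coincide with $w(t)\cdot x(t)$ in that regime.
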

\begin{proof}
Let us calculate the weight of $OPT'_t$ according to the weight function $w(t)$.
\begin{align*}
    w(t) \cdot \characteristic_{OPT'_t}
    ={} &
    \sum_{u \in OPT'_t} w_u(t)
    \geq
    \sum_{u \in OPT'_t} [\bE[f(u \mid \RSet(y(t)))] - n^{-2} \cdot f(OPT)] \\
    \geq{} &
    \mathbb{E}\left[\sum_{u \in OPT'_t} f(\RSet(y(t)) + u) - f(\RSet(y(t))) \right] - n^{-1} \cdot f(OPT)\\
    \geq{} &
    \mathbb{E}\left[f(\RSet(y(t)) \cup OPT'_t) - f(\RSet(y(t))) \right] - n^{-1} \cdot f(OPT)\\
    ={} &
    F(y(t) \vee \characteristic_{{OPT}'_t}) - F(y(t)) - n^{-1} \cdot f(OPT) \enspace,
\end{align*}
where the first inequality follows from the definition of $\cA$, and the last follows from the submodularity of $f$. Recall that $x(t)$ is the vector in $P$ maximizing some objective function (which depends on $t$). For $t < t_s$, the objective function maximized by $x(t)$ assigns the value $w(t) \cdot \characteristic_{OPT \setminus Z} = w(t) \cdot \characteristic_{OPT'_t}$ to the vector $\characteristic_{OPT'_t} \in P$. Similarly, for $t \geq t_s$, the objective function maximized by $x(t)$ assigns the value $w(t) \cdot \characteristic_{OPT} = w(t) \cdot \characteristic_{OPT'_t}$ to the vector $\characteristic_{OPT'_t} \in P$. Thus, the definition of $x(t)$ guarantees that in both cases we have
\[
    w(t) \cdot x(t)
		\geq
		w(t) \cdot \characteristic_{OPT'_t}
		\geq
		F(y(t) \vee \characteristic_{OPT'_t}) - F(y(t)) - n^{-1} \cdot f(OPT) \enspace.
\]
Hence,
\begin{align*}
    \sum_{u \in \cN} (1 - y_u(t)) \cdot x_u(t) \cdot &\partial_u F(y(t))
    =
    \sum_{u \in \cN} x_u(t) \cdot [F(y(t) \vee \characteristic_u) - F(y(t))]\\
		={} &
		\sum_{u \in \cN} x_u(t) \cdot \bE[f(u \mid \RSet(y(t)))]\\
		\geq{} &
		\sum_{u \in \cN} x_u(t) \cdot [w_u(t) - n^{-2} \cdot f(OPT)]
    =
    x(t) \cdot w(t) - n^{-1} \cdot f(OPT)\\
    \geq{} &
    F(y(t) \vee \characteristic_{OPT'_t}) - F(y(t)) - 2n^{-1} \cdot f(OPT)
    \enspace,
\end{align*}
where the first inequality holds by the definition of $\cA$ and the second equality holds since
\[
	F(y(t) \vee \characteristic_u) - F(y(t))
	=
	\bE[f(\RSet(y(t)) + u)] - \bE[f(\RSet(y(t)))]
	=
	\bE[f(u \mid \RSet(y(t)))]
	\enspace.
	\qedhere
\]
\end{proof}

\begin{lemma}[A rephrased version of Lemma~2.3.7 in~\cite{F13}] \label{lem:almost_linear}
Consider two vectors $x, x' \in [0, 1]^\cN$ such that $|x_u - x'_u| \leq \delta$ for every $u \in \cN$. Then, $F(x') - F(x) \geq \sum_{u \in \cN} (x'_u - x_u) \cdot \partial_u F(x) - O(n^3\delta^2) \cdot \max_{u \in N} f(\{u\})$.
\end{lemma}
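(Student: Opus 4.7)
The plan is to treat $F$ as a smooth polynomial in the coordinates $x_u$ and apply a second-order Taylor expansion. Since $F$ is multilinear (degree at most one in each variable), Taylor's theorem gives, for some $\xi$ on the line segment between $x$ and $x'$,
\[
    F(x') - F(x) - \sum_{u \in \cN} (x'_u - x_u)\,\partial_u F(x) \;=\; \tfrac{1}{2}\sum_{u, v \in \cN} (x'_u - x_u)(x'_v - x_v)\,\partial_u \partial_v F(\xi)\enspace.
\]
The diagonal terms $u = v$ vanish identically because $F$ is linear in each coordinate, so only the off-diagonal cross-partials contribute to the remainder.

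Next, I would analyze a single mixed partial. For distinct $u, v$,
\[
    \partial_u \partial_v F(\xi) \;=\; F(\xi^{11}_{uv}) - F(\xi^{10}_{uv}) - F(\xi^{01}_{uv}) + F(\xi^{00}_{uv}) \enspace,
\]
where $\xi^{ab}_{uv}$ denotes $\xi$ with the $u$- and $v$-coordinates replaced by $a$ and $b$. By the submodularity of $f$ (applied pointwise inside the expectation defining $F$) this expression is non-positive, and since $F$ is non-negative its absolute value is at most $F(\xi^{10}_{uv}) + F(\xi^{01}_{uv}) \leq 2\max_{S\subseteq \cN} f(S)$.

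To turn the right-hand side into something of the form $\max_u f(\{u\})$, I would invoke the standard telescoping bound that follows from submodularity and non-negativity: for any $S\subseteq\cN$, iterating $f(T + w) - f(T) \leq f(\{w\}) - f(\varnothing)$ along a chain from $\varnothing$ up to $S$ yields $f(S) \leq n \cdot \max_{u \in \cN} f(\{u\})$ (after the harmless normalization $f(\varnothing)=0$). Hence $|\partial_u \partial_v F(\xi)| \leq 2n\cdot \max_u f(\{u\})$. Since $|x'_u - x_u||x'_v - x_v| \leq \delta^2$ and the remainder involves fewer than $n^2$ ordered pairs, its absolute value is bounded by $\tfrac{1}{2}\cdot n^2 \cdot \delta^2 \cdot 2n \cdot \max_u f(\{u\}) = O(n^3\delta^2)\cdot \max_u f(\{u\})$, which delivers the claimed lower bound.

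This argument is essentially smooth analysis applied to a polynomial, and there is no real obstacle; the only point that requires any care is the bound $|\partial_u \partial_v F| = O(n) \max_u f(\{u\})$, where both submodularity (to get the right sign and the telescoping) and non-negativity (to bound $F \leq \max_S f(S)$ and to drop $f(\varnothing)$ terms after normalization) must be used simultaneously.
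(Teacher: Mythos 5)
Your argument is correct, and it is worth noting that the paper itself does not prove this lemma at all---it is imported as a black box from Lemma~2.3.7 of Feldman's thesis~\cite{F13}. Your route is a clean self-contained alternative: write the error as the exact second-order Taylor (Lagrange) remainder $\tfrac12\sum_{u\neq v}(x'_u-x_u)(x'_v-x_v)\partial_u\partial_v F(\xi)$ (the diagonal terms vanish by multilinearity), observe that each mixed partial equals $F(\xi^{11}_{uv})-F(\xi^{10}_{uv})-F(\xi^{01}_{uv})+F(\xi^{00}_{uv})$, which is non-positive by submodularity and of absolute value at most $2\max_{S\subseteq\cN}f(S)$ by non-negativity, and then telescope to get $\max_S f(S)\leq n\cdot\max_u f(\{u\})$; the count $n^2\cdot\delta^2\cdot O(n)\max_u f(\{u\})$ gives exactly the stated $O(n^3\delta^2)$ error. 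The one point you should fix is the throwaway remark that the normalization $f(\varnothing)=0$ is ``harmless'': replacing $f$ by $f-f(\varnothing)$ can destroy non-negativity, which your proof uses both to bound $|\partial_u\partial_v F(\xi)|$ by $F(\xi^{10}_{uv})+F(\xi^{01}_{uv})$ and implicitly in $F\leq\max_S f(S)$. Fortunately the normalization is unnecessary: for nonempty $S$ the telescoping gives $f(S)\leq(1-|S|)f(\varnothing)+\sum_{w\in S}f(\{w\})\leq\sum_{w\in S}f(\{w\})$ directly, since $f(\varnothing)\geq 0$ and $|S|\geq 1$, and for $S=\varnothing$ submodularity with two singletons gives $f(\varnothing)\leq f(\{u\})+f(\{v\})\leq 2\max_u f(\{u\})$. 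With that small repair the proof is complete and, arguably, more transparent than deferring to the cited thesis.
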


\begin{corollary} \label{cor:step_improvement}
Given $\cA$, for every time $t < 1$, $F(y(t + \delta_t)) - F(y(t)) \geq \delta_t[F(y(t) \vee \characteristic_{OPT'_t}) - F(y(t))] - O(n^{-1} \delta_t) \cdot f(OPT)$.
\end{corollary}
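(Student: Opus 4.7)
The plan is to derive the corollary by stitching together Lemma~\ref{lem:almost_linear} and Lemma~\ref{lem:step_approximation}: the former lets us replace the multilinear difference $F(y(t+\delta_t)) - F(y(t))$ by its linear approximation around $y(t)$ up to a quadratic error, and the latter provides exactly the lower bound on that linear approximation that we want.

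First I would instantiate Lemma~\ref{lem:almost_linear} with $x = y(t)$ and $x' = y(t+\delta_t) = y(t) + \delta_t(\characteristic_\cN - y(t)) \circ x(t)$. The key prerequisite is to check that the coordinates move by at most $\delta_t$, which is immediate since
\[
    |y_u(t+\delta_t) - y_u(t)| = \delta_t(1 - y_u(t)) \cdot x_u(t) \leq \delta_t,
\]
using that $y(t), x(t) \in [0,1]^\cN$. Applying the lemma yields
\[
    F(y(t+\delta_t)) - F(y(t)) \geq \delta_t \sum_{u \in \cN} (1 - y_u(t)) \cdot x_u(t) \cdot \partial_u F(y(t)) - O(n^3 \delta_t^2) \cdot \max_{u \in \cN} f(\{u\}).
\]

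Next I would lower bound the linear sum on the right using Lemma~\ref{lem:step_approximation} (which is where the event $\cA$ enters), obtaining
\[
    \delta_t \sum_{u \in \cN} (1 - y_u(t)) \cdot x_u(t) \cdot \partial_u F(y(t)) \geq \delta_t [F(y(t) \vee \characteristic_{OPT'_t}) - F(y(t))] - O(n^{-1}\delta_t) \cdot f(OPT).
\]

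Finally I would absorb the quadratic error term into $O(n^{-1}\delta_t) \cdot f(OPT)$. The step to be careful with is this bookkeeping: by construction $\delta_t \in \{\bar\delta_1, \bar\delta_2\} \leq n^{-4}$, so
\[
    n^3 \delta_t^2 \leq n^3 \cdot n^{-4} \cdot \delta_t = n^{-1} \delta_t,
\]
and the assumption $\characteristic_u \in P$ for every $u \in \cN$ (combined with the optimality of $OPT$ and the non-negativity of $f$) gives $\max_{u \in \cN} f(\{u\}) \leq f(OPT)$. Hence the error is $O(n^{-1}\delta_t) \cdot f(OPT)$, and combining the two displayed inequalities completes the proof. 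No real obstacle is expected here; the only subtlety is verifying that the choice of $\bar\delta_1, \bar\delta_2$ is small enough to make the quadratic error term asymptotically negligible compared to the linear gain, which the factor $n^{-4}$ in their definition is specifically tailored to guarantee.
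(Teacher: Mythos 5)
Your proposal is correct and follows essentially the same route as the paper's proof: apply Lemma~\ref{lem:almost_linear} with $x = y(t)$, $x' = y(t+\delta_t)$ after noting the coordinates move by at most $\delta_t$, then invoke Lemma~\ref{lem:step_approximation} for the linear term, and absorb the $O(n^3\delta_t^2)\cdot\max_{u}f(\{u\})$ error using $\delta_t \leq n^{-4}$ and $\max_u f(\{u\}) \leq f(OPT)$. No gaps.
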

\begin{proof}
Observe that for every $u \in \cN$, $|y_u(t + \delta_t) - y_u(t)| = |\delta_t(1 - y_u(t))x_u(t)| \leq \delta_t$. Hence, by Lemma~\ref{lem:almost_linear},
\begin{align}
	F(y(t + \delta_t)) - F(y(t))
	\geq{} &
	\sum_{u \in \cN} [y_u(t + \delta_t)) - y_u(t)] \cdot \partial_u F(y(t)) - O(n^3\delta_t^2) \cdot \max_{u \in N} f(\{u\}) \nonumber\\
	={} &
	\sum_{u \in \cN} \delta_t(1 - y_u(t)) \cdot x_u(t) \cdot \partial_u F(y(t)) - O(n^3\delta_t^2) \cdot \max_{u \in N} f(\{u\}) \label{eq:crude_improvement}
	\enspace.
\end{align}
Consider the rightmost hand side of the last inequality. By Lemma~\ref{lem:step_approximation}, the first term on this side can be bounded by
\begin{align*}
	\sum_{u \in \cN} \delta_t(1 - y_u(t)) \cdot x_u(t) \cdot \partial_u F(y(t))
	\geq{} &
	\delta_t \cdot [F(y(t) \vee \characteristic_{OPT'_t}) - F(y(t)) - O(n^{-1}) \cdot f(OPT)]\\
	={} &
	\delta_t \cdot [F(y(t) \vee \characteristic_{OPT'_t}) - F(y(t))] - O(n^{-1}\delta_t) \cdot f(OPT)
	\enspace.
\end{align*}
On the other hand, the second term of~\eqref{eq:crude_improvement} can be bounded by
\[
	O(n^3\delta_t^2) \cdot \max_{u \in N} f(\{u\})
	=
	O(n^{-1}\delta_t) \cdot f(OPT)
\]
since $\delta_t \leq n^{-4}$ by definition and $\max_{u \in N} f(\{u\}) \leq f(OPT)$ by our assumption that $\characteristic_u \in P$ for every $u \in \cN$.
\end{proof}

The lower bound given by the last corollary is in terms of $F(y(t) \vee \characteristic_{OPT'_t})$. To make this lower bound useful, we need to lower bound the term $F(y(t) \vee \characteristic_{OPT'_t})$. This is done by the following two lemma which corresponds to Lemma~\ref{lem:bound1}.

\begin{lemma} \label{lem:bound1_formal}[corresponds to Lemma~\ref{lem:bound1}]
For every time $t < 1$ and set $A\subseteq \cN$ it holds that
\begin{align*}
	F(y(t) \vee \characteristic_{A})
	\geq{} &
	\left(e^{-\max\{0,t-t_s\}}- e^{-t} - O(n^{-4})\right) \cdot \max\left\{0, f(A)-f(A\cup Z)\right\} \\&+ (e^{-t} - O(n^{-4})) \cdot f(A)
	\enspace.
\end{align*}
\end{lemma}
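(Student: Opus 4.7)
The plan is to follow the structure of the proof of Lemma~\ref{lem:bound1} essentially verbatim, but to replace the clean continuous bounds on $y_u(t)$ with discrete analogues that carry an additive $O(n^{-4})$ error. The two main ingredients are (i) a multiplicative recursion that tracks $1 - y_u(t)$ through the discrete updates, and (ii) the Lov\'{a}sz-extension integral argument from the continuous proof, applied on integration intervals shifted by the discretization error.

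First I would establish the discrete counterpart of Inequality~\eqref{obs:bound}, namely
\[
    1 - y_u(t) \geq
    \begin{cases}
        e^{-t} - O(n^{-4}) & \text{if } u \notin Z,\\
        e^{-\max\{0, t - t_s\}} - O(n^{-4}) & \text{if } u \in Z.
    \end{cases}
\]
The update rule of Algorithm~\ref{alg:aided_mcgreedy} gives $1 - y_u(t + \delta_t) = (1 - y_u(t))(1 - \delta_t x_u(t)) \geq (1 - y_u(t))(1 - \delta_t)$, since $x_u(t) \in [0,1]$. Iterating from $y_u(0) = 0$, an element $u \notin Z$ satisfies $1 - y_u(t) \geq (1 - \bar{\delta}_1)^{t/\bar{\delta}_1}$ for $t \leq t_s$ and $1 - y_u(t) \geq (1 - \bar{\delta}_1)^{t_s/\bar{\delta}_1}(1 - \bar{\delta}_2)^{(t - t_s)/\bar{\delta}_2}$ for $t \geq t_s$. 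Using $\ln(1 - x) \geq -x - x^2$ for $x \in [0, 1/2]$ gives $(1 - \delta)^{s/\delta} \geq e^{-s}(1 - s\delta) \geq e^{-s} - \delta$, so both factors contribute at most an $O(\bar{\delta}_1 + \bar{\delta}_2) = O(n^{-4})$ error. For an element $u \in Z$, the same argument used in the proof of Lemma~\ref{lem:bound1} shows $x_u(t) = 0$ throughout $[0, t_s)$ (since $x(t)$ maximizes a linear function over the down-closed $P$ with a negative weight on $u$), hence $y_u(t_s) = 0$, and iterating the recursion from $t_s$ yields the second bound.

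Second, I would copy the Lov\'{a}sz-extension integral argument from the proof of Lemma~\ref{lem:bound1}. By Lemma~\ref{lem:extensions_relation},
\[
    F(y(t) \vee \characteristic_A) \geq \hat{f}(y(t) \vee \characteristic_A) = \int_0^1 f(T_\lambda(y(t) \vee \characteristic_A))\, d\lambda.
\]
Split this integral into the two ranges $\lambda \in [1 - e^{-t} + O(n^{-4}), 1)$ and $\lambda \in [1 - e^{-\max\{0, t - t_s\}} + O(n^{-4}),\, 1 - e^{-t} + O(n^{-4}))$, discarding the remaining portion by nonnegativity of $f$. On the first range, the discrete bound above forces $y_u(t) \leq \lambda$ for every $u \in \cN$, so $T_\lambda(y(t) \vee \characteristic_A) = A$ and the integrand is $f(A)$, contributing $(e^{-t} - O(n^{-4})) \cdot f(A)$. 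On the second range, only $y_u(t) \leq \lambda$ is guaranteed for $u \in Z$, so $T_\lambda(y(t) \vee \characteristic_A) = A \cup B(\lambda)$ for some $B(\lambda) \subseteq \cN \setminus Z$. Submodularity and nonnegativity of $f$ yield, exactly as in the proof of Lemma~\ref{lem:bound1}, $f(A \cup B(\lambda)) \geq \max\{0, f(A) - f(A \cup Z)\}$, producing the promised $(e^{-\max\{0, t - t_s\}} - e^{-t} - O(n^{-4})) \cdot \max\{0, f(A) - f(A \cup Z)\}$ term.

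The only real work is the first paragraph above: carefully converting the multiplicative recursion into an $e^{-t} - O(n^{-4})$ bound and checking that the two different step sizes $\bar{\delta}_1, \bar{\delta}_2$ compose cleanly. Once those bounds are in hand, the Lov\'{a}sz-integral argument of Lemma~\ref{lem:bound1} transfers with only cosmetic changes to the endpoints of the integration ranges, and nothing new needs to be said about submodularity.
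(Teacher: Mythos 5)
Your proposal is correct and matches the paper's proof in essentially every respect: the paper also reduces the lemma to the discrete coordinate bounds $y_u(t) \leq 1 - e^{-t} + O(n^{-4})$ for $u \notin Z$ and $y_u(t) \leq 1 - e^{-\max\{0,t-t_s\}} + O(n^{-4})$ for $u \in Z$ (its Lemma~\ref{lem:max_y}), proved via the same multiplicative recursion on $1 - y_u(t)$ together with the observation that $x_u(t) = 0$ for $u \in Z$ while $t < t_s$, and then reruns the Lov\'{a}sz-extension integral argument of Lemma~\ref{lem:bound1} with the perturbed interval endpoints. The only (cosmetic) difference is that the paper handles both step sizes at once by an induction with a single $\eps = n^{-4}$ and the monotonicity of $(1-x)^{1/x}$, whereas you track $\bar{\delta}_1$ and $\bar{\delta}_2$ separately and compose the two factors; both yield the same $O(n^{-4})$ loss.
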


The proof of this lemma goes along the same lines as the proof of its corresponding lemma in Section~\ref{sec:auxiliary_algorithm}, except that the bounds on the coordinates of $y(t)$ used by the proof from Section~\ref{sec:auxiliary_algorithm} are replaced with the (slightly weaker) bounds given by the following lemma. 

\begin{lemma} \label{lem:max_y}
For every time $t$ and element $u \in \cN$,
\[
	y_u(t) \leq
	\begin{cases}
		1 - e^{-t} + O(n^{-4}) & \text{if $u \not \in Z$} \enspace, \\
		1 - e^{-\max\{0,t-t_s\}} + O(n^{-4}) & \text{if $u \in Z$} \enspace.
	\end{cases}
\]
\end{lemma}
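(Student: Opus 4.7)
The plan is to prove the bound by induction on the discrete time steps, mirroring the differential inequality argument used in the continuous Algorithm~\ref{alg:aided_mcgreedy_continuous} but tracking the second-order error that accumulates when one replaces differentials by finite differences. Writing the update rule coordinate-wise,
\[
    1 - y_u(t + \delta_t) = (1 - y_u(t)) \cdot (1 - \delta_t \cdot x_u(t)) \geq (1 - y_u(t))(1 - \delta_t),
\]
since $x_u(t) \in [0,1]$. Iterating this recursion from $y_u(0) = 0$ through the $k = t/\bar{\delta}_1$ steps of the first phase yields $1 - y_u(t) \geq (1 - \bar{\delta}_1)^{k}$ for $t \in [0, t_s]$, and then through the $k' = (t - t_s)/\bar{\delta}_2$ steps of the second phase yields $1 - y_u(t) \geq (1 - \bar{\delta}_1)^{n^4}(1 - \bar{\delta}_2)^{k'}$ for $t \in [t_s, 1]$.

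Next, I would convert each factor $(1 - \bar{\delta}_i)^{m}$ to an exponential using the standard estimate $\ln(1 - \delta) \geq -\delta - \delta^2$ valid for $\delta \in [0, 1/2]$. Since $\bar{\delta}_1, \bar{\delta}_2 \leq n^{-4}$ and the total accumulated exponent satisfies $m \cdot \bar{\delta}_i^2 \leq \bar{\delta}_i \leq n^{-4}$ in either phase, this gives
\[
    (1 - \bar{\delta}_i)^{m} \geq e^{-m \bar{\delta}_i - m \bar{\delta}_i^2} \geq e^{-m\bar{\delta}_i}\bigl(1 - O(n^{-4})\bigr).
\]
Applying this in both phases and using $k \bar{\delta}_1 = t$ (in the first phase) or $n^4 \bar{\delta}_1 + k' \bar{\delta}_2 = t$ (in the second phase), I get $1 - y_u(t) \geq e^{-t}\bigl(1 - O(n^{-4})\bigr)$, which rearranges to $y_u(t) \leq 1 - e^{-t} + O(n^{-4})$. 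This establishes the bound for $u \notin Z$.

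For the sharper bound on $u \in Z$, I would first observe, exactly as in the continuous proof, that during the first phase the objective $\sum_{u \in \cN \setminus Z} w_u(t) x_u(t) - \sum_{u \in Z} x_u(t)$ has a strictly negative coefficient on every coordinate $u \in Z$; since $P$ is down-closed, the maximizer $x(t)$ has $x_u(t) = 0$ for every $u \in Z$, and hence $y_u(t)$ remains $0$ throughout $[0, t_s]$. For $t \geq t_s$ I then restart the recursion from the improved initial condition $y_u(t_s) = 0$ and iterate only through the second-phase steps; the same exponential conversion as above produces $y_u(t) \leq 1 - e^{-(t - t_s)} + O(n^{-4})$, which is the desired bound written in the form $1 - e^{-\max\{0, t-t_s\}} + O(n^{-4})$.

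The only subtle point I foresee is bookkeeping the $O(n^{-4})$ error: one must verify that $m \bar{\delta}_i^2 = O(n^{-4})$ simultaneously in both phases, which holds because $m\bar{\delta}_i \leq 1$ and $\bar{\delta}_i \leq n^{-4}$ by construction, and that the multiplicative slack $(1 - O(n^{-4}))$ on $e^{-t}$ translates to an additive $O(n^{-4})$ slack in the bound on $y_u(t)$ (which it does because $e^{-t} \leq 1$). Everything else is a routine discretization of the continuous argument, so no additional ideas are needed.
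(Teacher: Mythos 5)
Your proof is correct and follows essentially the same route as the paper: both track the multiplicative recursion $1-y_u(t+\delta_t)\geq(1-y_u(t))(1-\delta_t)$ over the discrete steps, convert the resulting product to $e^{-t}$ up to an additive $O(n^{-4})$ error, and handle $u\in Z$ by noting that the negative weight on $Z$ together with down-closedness forces $x_u(t)=0$ (hence $y_u(t)=0$) throughout $[0,t_s]$, restarting the bound from $\tau=t_s$. The only difference is cosmetic: the paper uniformizes the two step sizes via the monotonicity of $(1-x)^{1/x}$ with $\eps=n^{-4}$, while you keep the phases separate and use $\ln(1-\delta)\geq-\delta-\delta^2$; both bookkeeping choices are equally valid.
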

\begin{proof}
Let $\eps = n^{-4}$, and observe that $\delta_t \leq \eps$ for every time $t$. Our first objective is to prove by induction on $t$ that, if $y_u(\tau) = 0$ for some time $\tau \in [0, 1]$, then $y_u(t) \leq 1 - (1 - \eps)^{(t - \tau) / \eps}$ for every time $t \in [\tau, 1]$. For $t = \tau$ the claim holds because $y_u(\tau) = 0 = 1 - (1 - \eps)^{(\tau - \tau)/\eps}$. Next, assume the claim holds for some $t$, and let us prove it for $t + \delta_t$.
\begin{align*}
    y_u(t + \delta_t)
    &=
    y_u(t) + \delta_t (1 - y_u(t)) \cdot x_u(t)
		\leq
		y_u(t) + \delta_t (1 - y_u(t))
    =
    y_u(t) (1 - \delta_t ) + \delta_t\\
    &\leq
    (1 - (1 - \eps)^{(t - \tau) / \eps}) (1 - \delta_t) + \delta_t
    =
    1 - (1 - \delta_t)(1 - \eps)^{(t - \tau) / \eps}\\
    &\leq
    1 - (1 - \eps)^{\delta_t / \eps}(1 - \eps)^{(t - \tau) / \eps}
		=
		1 - (1 - \eps)^{(t + \delta_t - \tau) / \eps}
		\enspace,
\end{align*}
where the last inequality holds since $(1 - x)^{1/x}$ is a decreasing function for $x \in (0, 1]$.

We complete the proof for the case $u \not \in Z$ by choosing $\tau = 0$ (clearly $y_u(0) = 0$) and observing that, for every time $t$,
\[
    1 - (1 - \eps)^{t / \eps}
    \leq
    1 - [e^{-1}(1 - \eps)]^t
    =
    1 - e^{-t}(1 - \eps)^t
    \leq
    1 - e^{-t}(1 - \eps)
    =
    1 - e^{-t} + O(\eps)
		\enspace.
\]

It remains to prove the lemma for the case $u \in Z$. Note that at every time $t\in [0,t_s)$ Algorithm~\ref{alg:aided_mcgreedy} chooses as $x(t)$ a vector maximizing a linear function in $P$ which assigns a negative weight to elements of $Z$. Since $P$ is down-closed this maximum must have $x_u(t) = 0$ for an element $u \in Z$. This means that $y_u(t) = 0$ for $t \in [0, t_s]$. In addition to proving the lemma for this time range, the last inequality also allows us to choose $\tau = t_s$, which gives, for $t \in [t_s, 1]$,
\[
	y_u(t)
	\geq
	1 - (1 - \eps)^{(t - t_s) / \eps}
	\geq
	1 - e^{t_s-t} + O(\eps)
	\enspace.
	\qedhere
\]
\end{proof}

Combining Corollary~\ref{cor:step_improvement} with Lemma~\ref{lem:bound1_formal} gives us the following corollary.

\begin{corollary} \label{cor:improvement}
Given $\cA$, for every time $t \in [0, t_s)$,
\begin{align*}
	F(y(t + \delta_t)) - F(y(t))
	\geq{} &
	\delta_t[f(OPT \setminus Z) - (1 - e^{-t}) \cdot f(Z \cup OPT) - F(y(t)))] \\&- O(n^{-1} \delta_t) \cdot f(OPT)
\end{align*}
and, for every time $t \in [t_s, 1)$,
\begin{align*}
	F(y(t + \delta_t)) - F(y(t))
	\geq{} &
	\delta_t[e^{-t} \cdot f(OPT) + (e^{t_s-t} - e^{-t}) \cdot \max\{f(OPT) - f(Z \cup OPT), 0\}\\ &-F(y(t))] - O(n^{-1} \delta_t) \cdot f(OPT)
	\enspace.
\end{align*}
\end{corollary}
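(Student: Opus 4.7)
The plan is to deduce Corollary~\ref{cor:improvement} as a fairly direct consequence of combining Corollary~\ref{cor:step_improvement} with Lemma~\ref{lem:bound1_formal}. Recall that Corollary~\ref{cor:step_improvement} already gives
\[
F(y(t+\delta_t)) - F(y(t)) \geq \delta_t[F(y(t) \vee \characteristic_{OPT'_t}) - F(y(t))] - O(n^{-1}\delta_t) \cdot f(OPT),
\]
where $OPT'_t = OPT \setminus Z$ when $t < t_s$ and $OPT'_t = OPT$ when $t \geq t_s$. So it suffices to produce the right lower bound on $F(y(t) \vee \characteristic_{OPT'_t})$ in each time regime. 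For both choices of $A = OPT'_t$, observe that $A \cup Z = OPT \cup Z$, so Lemma~\ref{lem:bound1_formal} specializes cleanly. Throughout, I will also use the routine crude bound $f(S) \leq n \cdot f(OPT)$ (which follows from submodularity and $\characteristic_u \in P$ for every $u \in \cN$) to absorb any $O(n^{-4}) \cdot f(\cdot)$ error into the $O(n^{-1}\delta_t) \cdot f(OPT)$ slack already present in Corollary~\ref{cor:step_improvement}.

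For the range $t \in [t_s, 1)$, Lemma~\ref{lem:bound1_formal} with $A = OPT$ and $e^{-\max\{0, t-t_s\}} = e^{t_s - t}$ yields directly
\[
F(y(t) \vee \characteristic_{OPT}) \geq e^{-t} f(OPT) + (e^{t_s - t} - e^{-t}) \cdot \max\{0, f(OPT) - f(Z \cup OPT)\} - O(n^{-4}) \cdot f(OPT),
\]
which, plugged into Corollary~\ref{cor:step_improvement}, gives exactly the desired inequality (after folding the $O(n^{-4})$ into the $O(n^{-1}\delta_t)\cdot f(OPT)$ error term).

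The case $t \in [0, t_s)$ is slightly more delicate, because the target expression in Corollary~\ref{cor:improvement} for this range does \emph{not} carry a $\max\{0, \cdot\}$; applying Lemma~\ref{lem:bound1_formal} with $A = OPT \setminus Z$ and $e^{-\max\{0, t - t_s\}} = 1$ gives
\[
F(y(t) \vee \characteristic_{OPT \setminus Z}) \geq (1 - e^{-t}) \cdot \max\{0, f(OPT \setminus Z) - f(Z \cup OPT)\} + e^{-t} \cdot f(OPT \setminus Z) - O(n^{-3}) \cdot f(OPT),
\]
so I will remove the $\max$ by a two-line case split. If $f(OPT \setminus Z) \geq f(Z \cup OPT)$, the max is active and the right-hand side telescopes to $f(OPT \setminus Z) - (1 - e^{-t}) f(Z \cup OPT)$ up to lower-order error. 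If instead $f(OPT \setminus Z) < f(Z \cup OPT)$, the bound reduces to $e^{-t} f(OPT \setminus Z)$, and a one-step algebraic check shows this exceeds $f(OPT \setminus Z) - (1 - e^{-t}) f(Z \cup OPT)$ precisely because $(1 - e^{-t})[f(Z \cup OPT) - f(OPT \setminus Z)] \geq 0$. Either way, the same lower bound holds.

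Putting things together, the proof itself is short: apply Lemma~\ref{lem:bound1_formal} in each of the two regimes, do the tiny case split in the first regime, and substitute into Corollary~\ref{cor:step_improvement}. The only place that requires any thought at all is that removal of the $\max$ in the $t < t_s$ case; otherwise the proof is an exercise in rearranging inequalities and absorbing $O(n^{-4})\cdot f(OPT)$ terms into the already-present $O(n^{-1}\delta_t)\cdot f(OPT)$ error.
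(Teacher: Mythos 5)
Your proposal is correct and matches the paper's own proof: both combine Corollary~\ref{cor:step_improvement} with Lemma~\ref{lem:bound1_formal} (taking $A = OPT\setminus Z$ for $t<t_s$ and $A=OPT$ for $t\geq t_s$), drop the $\max$ in the first regime, and absorb the $O(n^{-4})$ error terms into the $O(n^{-1}\delta_t)\cdot f(OPT)$ slack. The only difference is cosmetic: you make the removal of the $\max$ an explicit two-case check, whereas the paper does it implicitly in a single inequality.
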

\begin{proof}
For every time $t \in [0, t_s)$, Corollary~\ref{cor:step_improvement} and Lemma~\ref{lem:bound1_formal} imply together
\begin{align*}
	F(y(t + \delta_t)) - F(y(t))
	\geq{} &
	\delta_t[(1 - e^{-t} - O(n^{-4})) \cdot \max\left\{0, f(OPT \setminus Z)-f(OPT\cup Z)\right\} \\&+ (e^{-t} - O(n^{-4})) \cdot f(OPT \setminus Z)] - O(n^{-1} \delta_t) \cdot f(OPT)\\
	\geq{} &
	\delta_t[(1 - O(n^{-4})) \cdot f(OPT \setminus Z) - (1 - e^{-t}) \cdot f(Z \cup OPT) \\&- F(y(t)))] - O(n^{-1} \delta_t) \cdot f(OPT)
	\enspace.
\end{align*}
We observe that this inequality is identical to the inequality promised for this time range by the corollary, except that it has an extra term of $-\delta_t \cdot O(n^{-4}) \cdot f(OPT \setminus Z)$ on its right hand side. Since $f(OPT \setminus Z)$ is upper bounded by $f(OPT)$, due to the down-closeness of $P$, the absolute value of this extra term is at most
\[
	\delta_t \cdot O(n^{-4}) \cdot f(OPT)
	=
	O(n^{-1}\delta_t) \cdot f(OPT)
	\enspace,
\]
which completes the proof for the time range $t \in [0, t_s)$.

Consider now the time range $t \in [t_s, 1)$. For this time range Corollary~\ref{cor:step_improvement} and Lemma~\ref{lem:bound1_formal} imply together
\begin{align*}
	F(y(t + \delta_t)) - F(y(t))
	\geq{} &
	\delta_t[(e^{t_s-t}- e^{-t} - O(n^{-4})) \cdot \max\left\{0, f(OPT)-f(OPT\cup Z)\right\} \\&+ (e^{-t} - O(n^{-4})) \cdot f(OPT)] - O(n^{-1} \delta_t) \cdot f(OPT)
	\enspace.
\end{align*}
We observe again that this inequality is identical to the inequality promised for this time range by the corollary, except that it has extra terms of $-\delta_t \cdot O(n^{-4}) \cdot f(OPT)$ and $-\delta_t \cdot O(n^{-4}) \cdot \max\{0, f(OPT) - f(OPT \cup Z)\}$ on its right hand side. The corollary now follows since the absolute value of both these terms is upper bounded by $O(n^{-1}\delta_t) \cdot f(OPT)$.
\end{proof}

Corollary~\ref{cor:improvement} bounds the increase in $F(y(t))$ in terms of $F(y(t))$ itself. Thus, it gives a recursive formula which can be used to lower bound $F(y(t))$. Our remaining task is to solve this formula and get a closed-form lower bound on $F(y(t))$. Let $g(t)$ be defined as follows.	$g(0) = 0$ and for every time $t < 1$,
\begin{align*}
	g(t + &\delta_t)\\
	={} &
	\begin{cases}
		g(t) + \delta_t[f(OPT \setminus Z) - (1 - e^{-t}) \cdot f(Z \cup OPT) - g(t)] & \text{if $t < t_s$} \enspace,\\
		g(t) + \delta_t[e^{-t} \cdot f(OPT) + (e^{t_s-t} - e^{-t}) \cdot \max\{f(OPT) - f(Z \cup OPT), 0\} - g(t)] & \text{if $t \geq t_s$} \enspace.
	\end{cases}
\end{align*}
The next lemma shows that a lower bound on $g(t)$ yields a lower bound on $F(y(t))$.

\begin{lemma} \label{le:g_bound}
Given $\cA$, for every time $t$, $g(t) \leq F(y(t)) + O(n^{-1}) \cdot t \cdot f(OPT)$.
\end{lemma}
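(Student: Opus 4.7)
The plan is to prove the lemma by induction on the discrete sequence of values that $t$ assumes during the execution of Algorithm~\ref{alg:aided_mcgreedy}. The base case $t = 0$ is immediate: $g(0) = 0$ by definition, while $F(y(0)) = F(\characteristic_\varnothing) = f(\varnothing) \geq 0$ by non-negativity of $f$, so $g(0) \leq F(y(0)) + 0$.

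For the inductive step, suppose the bound holds at time $t$, and consider time $t + \delta_t$. The key observation is that both the recursion for $g$ and the lower bound in Corollary~\ref{cor:improvement} have the same shape:
\begin{align*}
	g(t + \delta_t) &= (1 - \delta_t) \cdot g(t) + \delta_t \cdot B(t), \\
	F(y(t + \delta_t)) &\geq (1 - \delta_t) \cdot F(y(t)) + \delta_t \cdot B(t) - O(n^{-1}\delta_t) \cdot f(OPT),
\end{align*}
where $B(t)$ denotes the same expression $f(OPT \setminus Z) - (1 - e^{-t}) \cdot f(Z \cup OPT)$ (for $t < t_s$) or $e^{-t} \cdot f(OPT) + (e^{t_s - t} - e^{-t}) \cdot \max\{f(OPT) - f(Z \cup OPT), 0\}$ (for $t \geq t_s$) on both sides. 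Subtracting the first line from the second and applying the induction hypothesis gives
\[
	F(y(t + \delta_t)) - g(t + \delta_t) \geq (1 - \delta_t) \cdot [F(y(t)) - g(t)] - O(n^{-1}\delta_t) \cdot f(OPT) \geq -(1 - \delta_t) \cdot O(n^{-1}) \cdot t \cdot f(OPT) - O(n^{-1}\delta_t) \cdot f(OPT),
\]
which rearranges to $g(t + \delta_t) \leq F(y(t + \delta_t)) + O(n^{-1}) \cdot (t + \delta_t) \cdot f(OPT)$ provided the implicit constant in $O(n^{-1} \cdot t \cdot f(OPT))$ is chosen at least as large as the one arising from the single-step error in Corollary~\ref{cor:improvement}.

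There is no real obstacle here: the argument is essentially just checking that a uniform per-step additive error of $O(n^{-1}\delta_t) \cdot f(OPT)$ accumulates, across all discrete steps whose $\delta_t$ sum to $t$, into a total error of $O(n^{-1}) \cdot t \cdot f(OPT)$. The only thing worth flagging is that for this telescoping to give the clean bound stated in the lemma, one should pick a single absolute constant hidden in the $O(\cdot)$ notation (large enough to dominate the per-step constant from Corollary~\ref{cor:improvement}) and verify that the contraction factor $(1 - \delta_t) \leq 1$ does not hurt the inductive step. Since this factor is at most $1$, the inductive closure is immediate.
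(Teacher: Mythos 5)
Your proof is correct and follows essentially the same route as the paper: induction over the discrete time steps, using Corollary~\ref{cor:improvement} to match the recursion defining $g$, with a single fixed constant chosen to dominate the per-step $O(n^{-1}\delta_t)\cdot f(OPT)$ error. Phrasing the inductive step in terms of the difference $F(y(t)) - g(t)$ is just a mild repackaging of the paper's direct manipulation of $g(t+\delta_t)$.
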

\begin{proof}
Let $c$ be the larger constant among the constants hiding behind the big $O$ notations in Corollary~\ref{cor:improvement}. We prove by induction on $t$ that $g(t) \leq F(y(t)) + (ct/n) \cdot f(OPT)$. For $t = 0$, this clearly holds since $g(0) = 0 \leq F(y(0))$. Assume now that the claim holds for some $t$, and let us prove it for $t + \delta_t$. There are two cases to consider. If $t < t_s$, then the induction hypothesis and Corollary~\ref{cor:improvement} imply, for a large enough $n$,
{\allowdisplaybreaks
\begin{align*}
    g(t + \delta_t)
    ={} &
    g(t) + \delta_t[f(OPT \setminus Z) - (1 - e^{-t}) \cdot f(Z \cup OPT) - g(t)]\\
    ={} &
    (1 - \delta_t)g(t) + \delta_t[f(OPT \setminus Z) - (1 - e^{-t}) \cdot f(Z \cup OPT)]\\
    \leq{} &
    (1 - \delta_t) [F(y(t)) + (ct/n) \cdot f(OPT)] + \delta_t[f(OPT \setminus Z) - (1 - e^{-t}) \cdot f(Z \cup OPT)]\\
    ={} &
    F(y(t)) + \delta_t[f(OPT \setminus Z) - (1 - e^{-t}) \cdot f(Z \cup OPT) - F(y(t))] \\&+ (ct/n) \cdot (1 - \delta_t) \cdot f(OPT)\\
    \leq{} &
    F(y(t + \delta_t)) + (c\delta_t/n) \cdot f(OPT) + (ct/n) \cdot (1 - \delta_t) \cdot f(OPT)\\
    \leq{} &
    F(y(t + \delta_t)) + [c(t + \delta_t)/n] \cdot f(OPT)
		\enspace.
\end{align*}
}

Similarly, if $t \geq t_s$, then we get
{\allowdisplaybreaks
\begin{align*}
    g(t + \delta_t)
    ={} &
    g(t) + \delta_t[e^{-t} \cdot f(OPT) + (e^{t_s-t} - e^{-t}) \cdot \max\{f(OPT) - f(Z \cup OPT), 0\} - g(t)]\\
    ={} &
    (1 - \delta_t)g(t) + \delta_t[e^{-t} \cdot f(OPT) + (e^{t_s-t} - e^{-t}) \cdot \max\{f(OPT) - f(Z \cup OPT), 0\}]\\
    \leq{} &
    (1 - \delta_t) [F(y(t)) + (ct/n) \cdot f(OPT)] \\&+ \delta_t[e^{-t} \cdot f(OPT) + (e^{t_s-t} - e^{-t}) \cdot \max\{f(OPT) - f(Z \cup OPT), 0\}]\\
    ={} &
    F(y(t)) + \delta_t[e^{-t} \cdot f(OPT) + (e^{t_s-t} - e^{-t}) \cdot \max\{f(OPT) - f(Z \cup OPT), 0\} - F(y(t))] \\&+ (ct/n) \cdot (1 - \delta_t) \cdot f(OPT)\\
    \leq{} &
    F(y(t + \delta_t)) + (c\delta_t/n) \cdot f(OPT) + (ct/n) \cdot (1 - \delta_t) \cdot f(OPT)\\
    \leq{} &
    F(y(t + \delta_t)) + [c(t + \delta_t)/n] \cdot f(OPT)
		\enspace.
		\qedhere
\end{align*}
}
\end{proof}

It remains to find a closed-form expression that lower bounds $g(t)$ (and thus, also $F(y(t))$). Let $h_1(t) \colon [0,t_s] \to \bR$ and $h_2(t) \colon [t_s, 1] \to \bR$ be defined as follows.
\[
	h_1(t)
	=
	(1 - e^{-t}) \cdot f(OPT \setminus Z) - (1 - e^{-t} - te^{-t}) \cdot f(Z \cup OPT)
	\enspace,
\]
and
\[
	h_2(t)
	=
	e^{-t} \cdot \{(t - t_s) \cdot [f(OPT) + (e^{t_s} - 1) \cdot \max\{f(OPT) - f(OPT \cup Z), 0\}] + e^{t_s} \cdot h_1(t_s)\}
	\enspace.
\]

\begin{lemma} \label{lem:h_1_bound}
For every time $t \leq t_s$, $h_1(t) \leq g(t)$.
\end{lemma}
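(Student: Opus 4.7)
The plan is to argue by induction on the discrete values that $t$ assumes during the first phase of Algorithm~\ref{alg:aided_mcgreedy}, namely $t \in \{0, \bar\delta_1, 2\bar\delta_1, \dots, t_s\}$. The base case is immediate since $h_1(0) = 0 = g(0)$. For the inductive step, assume $h_1(t) \leq g(t)$ for some $t < t_s$. Writing $a = f(OPT \setminus Z)$ and $b = f(Z \cup OPT)$, the recursion for $g$ can be rearranged as
\[
    g(t + \bar\delta_1) = (1 - \bar\delta_1) g(t) + \bar\delta_1 [a - (1 - e^{-t}) b],
\]
and since $\bar\delta_1 \in (0,1)$, the induction hypothesis yields
\[
    g(t + \bar\delta_1) \geq h_1(t) + \bar\delta_1 [a - (1 - e^{-t}) b - h_1(t)].
\]
So it suffices to verify that the bracketed quantity equals $h_1'(t)$, and then to show that $h_1(t + \bar\delta_1) - h_1(t) \leq \bar\delta_1 \cdot h_1'(t)$.

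The first verification is a direct computation: one differentiates $h_1$ and observes that $h_1$ solves the ODE $h_1'(t) = a - (1 - e^{-t}) b - h_1(t)$ with $h_1(0) = 0$ (this is exactly the continuous analogue of the recursion defining $g$). Explicitly, $h_1'(t) = e^{-t} a - t e^{-t} b$. The second inequality is the statement that the forward Euler step overestimates the true increment, which follows once we know $h_1'$ is non-increasing on $[0, t_s]$: for then $\int_t^{t+\bar\delta_1} h_1'(s)\, ds \leq \bar\delta_1 \cdot h_1'(t)$, and the left-hand side equals $h_1(t + \bar\delta_1) - h_1(t)$ by the fundamental theorem of calculus.

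To see that $h_1'$ is non-increasing, compute $h_1''(t) = e^{-t}[-a + (t - 1) b]$. For $t \leq t_s \leq 1$ we have $t - 1 \leq 0$, and the non-negativity of $f$ (combined with the down-closedness of $P$ used to ensure $a, b \geq 0$) gives $-a + (t-1) b \leq 0$, so $h_1''(t) \leq 0$ on $[0, t_s]$. This is the one place where we use the specific range $t \leq t_s \leq 1$; the concavity of $h_1$ is essentially what makes the continuous solution an upper bound for the discrete Euler iterates.

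The main obstacle I anticipate is purely notational: one has to carry through the substitutions carefully and avoid sign errors when verifying that $h_1$ satisfies the intended ODE, and one has to confirm $1 - \bar\delta_1 \geq 0$ so that multiplication by $(1 - \bar\delta_1)$ preserves the induction hypothesis. Both are routine given $\bar\delta_1 = t_s \cdot n^{-4}$ and $n$ larger than any constant.
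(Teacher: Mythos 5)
Your proposal is correct and follows essentially the same route as the paper: induction over the discrete time steps, rewriting the recursion for $g$, applying the induction hypothesis (using $1-\delta_t\geq 0$), and bounding $h_1(t+\delta_t)-h_1(t)$ by $\delta_t\cdot h_1'(t)$ after checking that $h_1$ solves the corresponding ODE. The only cosmetic difference is that you justify $h_1'$ being non-increasing via the sign of $h_1''$ (concavity), whereas the paper notes directly that $e^{-\tau}$ is decreasing and $\tau e^{-\tau}$ is increasing on $[0,1]$ together with $f\geq 0$; these are equivalent observations.
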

\begin{proof}
The proof is by induction on $t$. For $t = 0$, $g(0) = 0 = (1 - e^0) \cdot f(OPT \setminus Z) - (1 - e^0 - 0 \cdot e^0) \cdot f(Z \cup OPT) = h_1(0)$. Assume now that the lemma holds for some $t < t_s$, and let us prove it holds also for $t + \delta_t$. By the induction hypothesis,
{\allowdisplaybreaks
\begin{align*}
    h_1(t + \delta_t)
    ={} &
    h_1(t) + \int_t^{t + \delta_t} h'(\tau) d\tau\\
    ={} &
    h_1(t) + \int_t^{t + \delta_t} \{e^{-\tau} \cdot f(OPT \setminus Z) - \tau e^{-\tau} \cdot f(Z \cup OPT)\} d\tau\\
    \leq{} &
    h_1(t) + \delta_t \cdot \{e^{-t} \cdot f(OPT \setminus Z) - t e^{-t} \cdot f(Z \cup OPT)\} d\tau\\
    ={} &
    (1 - \delta_t) h_1(t) + \delta_t \cdot \{f(OPT \setminus Z) - (1 - e^{-t}) \cdot f(Z \cup OPT)\}\\
    \leq{} &
    (1 - \delta_t) g(t) + \delta_t \cdot \{f(OPT \setminus Z) - (1 - e^{-t}) \cdot f(Z \cup OPT)\}
    =
    g(t + \delta_t) \enspace,
\end{align*}
}%
where the first inequality holds since $e^{-\tau}$ is a decreasing function of $\tau$ and $\tau e^{-\tau}$ is an increasing function of $\tau$ in the range $\tau \in [0, 1]$.
\end{proof}

\begin{lemma} \label{lem:h_2_bound}
For every time $t_s \leq t \leq 1$, $h_2(t) \leq g(t)$.
\end{lemma}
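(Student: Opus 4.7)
The plan is to prove the lemma by induction on $t$, following the same template used for Lemma~\ref{lem:h_1_bound}. For the base case $t = t_s$ we have $h_2(t_s) = e^{-t_s}[0 + e^{t_s} h_1(t_s)] = h_1(t_s)$, so the bound $h_2(t_s) \leq g(t_s)$ is exactly the conclusion of Lemma~\ref{lem:h_1_bound} evaluated at $t = t_s$.

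For the inductive step I would set $C = f(OPT) + (e^{t_s}-1) \max\{f(OPT) - f(OPT \cup Z), 0\}$, so that the recurrence defining $g$ on $[t_s, 1)$ reads $g(t+\delta_t) = (1 - \delta_t)\, g(t) + \delta_t\, e^{-t} C$, and $h_2$ becomes $h_2(t) = e^{-t}[(t - t_s)\, C + e^{t_s} h_1(t_s)]$, the unique solution of the continuous ODE $h_2'(t) + h_2(t) = e^{-t} C$ with $h_2(t_s) = h_1(t_s)$. Given the induction hypothesis $h_2(t) \leq g(t)$ and $1 - \delta_t \geq 0$ (which holds for large enough $n$), it then suffices to establish
\[
	h_2(t + \delta_t) \;\leq\; (1 - \delta_t)\, h_2(t) + \delta_t\, e^{-t} C \;=\; h_2(t) + \delta_t\, h_2'(t),
\]
which is precisely the statement that $h_2$ lies below its tangent line at $t$; in particular, it follows from concavity of $h_2$ on $[t_s, 1]$.

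The crux is thus to prove $h_2'' \leq 0$ throughout $[t_s, 1]$. Differentiating the explicit formula twice yields
\[
	h_2''(t) = e^{-t}\bigl[\, e^{t_s} h_1(t_s) - (2 - t + t_s)\, C \,\bigr],
\]
and since $2 - t + t_s \geq 1 + t_s$ on $[t_s, 1]$, it is enough to prove $e^{t_s} h_1(t_s) \leq (1 + t_s)\, C$. I would first use $e^{t_s} - 1 - t_s \geq 0$ and $f(OPT \cup Z) \geq 0$ to discard a non-positive term from the expression for $e^{t_s}h_1(t_s)$, giving $e^{t_s} h_1(t_s) \leq (e^{t_s} - 1) f(OPT \setminus Z)$. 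Next, since $P$ is down-closed and $\characteristic_{OPT} \in P$, the vector $\characteristic_{OPT \setminus Z}$ also lies in $P$, so by optimality of $OPT$ we get $f(OPT \setminus Z) \leq f(OPT)$. Combined with $C \geq f(OPT)$, the desired inequality reduces to the elementary bound $e^{t_s} - 1 \leq 1 + t_s$, i.e.\ $e^{t_s} \leq 2 + t_s$, which is easily checked on $[0, 1]$ (the function $2 + t_s - e^{t_s}$ is decreasing and still positive at $t_s = 1$).

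The main obstacle is establishing the concavity of $h_2$; once it is in hand, the induction closes mechanically. What makes this step delicate is that the concavity is not a purely analytic property of the defining formula---it genuinely requires the submodular-optimality bound $f(OPT \setminus Z) \leq f(OPT)$ supplied by down-closedness of $P$. Without this structural input, the Forward-Euler step for $g$ would not uniformly dominate the exact solution $h_2$, and the resulting discretization error would prevent a clean induction.
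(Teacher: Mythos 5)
Your proof is correct, and it follows the paper's overall skeleton---induction over the discrete times, base case $h_2(t_s)=h_1(t_s)\leq g(t_s)$ from Lemma~\ref{lem:h_1_bound}, and the identity $(1-\delta_t)h_2(t)+\delta_t e^{-t}C=h_2(t)+\delta_t h_2'(t)$ followed by the induction hypothesis---but you justify the per-step inequality by a genuinely different analytic argument. The paper bounds $\int_t^{t+\delta_t}h_2'(\tau)\,d\tau$ by writing $h_2'(\tau)=e^{-\tau}\{(1-\tau+t_s)A-e^{t_s}h_1(t_s)\}$ as a product of a positive decreasing factor and a non-negative decreasing factor, where non-negativity of the second factor amounts to $t_s A\geq e^{t_s}h_1(t_s)$ and is proved by a two-case analysis on the sign of $f(OPT)-f(Z\cup OPT)$. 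You instead differentiate once more and use the tangent-line (concavity) bound, which only requires the weaker inequality $e^{t_s}h_1(t_s)\leq(1+t_s)C$ (weaker because $C=A\geq 0$ and $2-t+t_s\geq 1+t_s$ on $[t_s,1]$), and this admits your short, case-free chain $e^{t_s}h_1(t_s)\leq(e^{t_s}-1)f(OPT\setminus Z)\leq(e^{t_s}-1)f(OPT)\leq(e^{t_s}-1)C\leq(1+t_s)C$, using $e^{t_s}-1-t_s\geq 0$, non-negativity of $f$, $f(OPT\setminus Z)\leq f(OPT)$ (down-closedness of $P$ plus optimality of $OPT$---the same fact the paper invokes, though note it is not a submodularity bound), and $e^{t_s}\leq 2+t_s$ on $[0,1]$. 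All the computational steps check out ($h_2''(t)=e^{-t}[e^{t_s}h_1(t_s)-(2-t+t_s)C]$, $1-\delta_t\geq 0$, and $h_2(t)+\delta_t h_2'(t)=(1-\delta_t)h_2(t)+\delta_t e^{-t}C$ via the ODE), so your induction closes; what your route buys is the elimination of the case analysis, while the paper's route incidentally also yields $h_2'\geq 0$ on $[t_s,1]$, a fact that is not needed elsewhere.
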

\begin{proof}
The proof is by induction on $t$. For $t = t_s$, by Lemma~\ref{lem:h_1_bound}, $h_2(t_s) = h_1(t_s) \leq g(t_s)$. Assume now that the lemma holds for some $t_s \leq t < 1$, and let us prove it holds also for $t + \delta_t$.

To avoid repeating complex expressions, let us denote $A = f(OPT) + (e^{t_s} - 1) \cdot \max\{f(OPT) - f(Z \cup OPT), 0\}$. Notice that $A$ is independent of $t$. Moreover, using this notation we can rewrite $h_2(t)$ as $h_2(t) = e^{-t} \cdot \{(t - t_s) \cdot A + e^{t_s} \cdot h_1(t_s)\}$. Thus, for every $\tau \in (t_s, 1)$,
\[
	h'_2(\tau)
	=
	-e^{-\tau} \cdot \{(\tau - t_s) \cdot A + e^{t_s} \cdot h_1(t_s)\} + e^{-\tau} \cdot A
	=
	e^{-\tau} \cdot \{(1 - \tau + t_s) \cdot A - e^{t_s} \cdot h_1(t_s)\}
	\enspace.
\]

The definition of $A$ and the non-negativity of $f$ imply immediately that $A \geq 0$. We would like to prove also that $t_s \cdot A - e^{t_s} \cdot h_1(t_s) \geq 0$. There are two cases to consider. First, if $f(OPT) \geq f(Z \cup OPT)$, then
{\allowdisplaybreaks
\begin{align*}
	t_s \cdot A - e^{t_s} \cdot h_1(t_s)
	={} &
	t_s \cdot f(OPT) + t_s(e^{t_s} - 1) \cdot \max\{f(OPT) - f(Z \cup OPT), 0\} \\&- (e^{t_s} - 1) \cdot f(OPT \setminus Z) + (e^{t_s} - 1 - t_s) \cdot f(Z \cup OPT)\\
	\geq{} &
	t_se^{t_s} \cdot f(OPT) - t_s(e^{t_s} - 1) \cdot f(Z \cup OPT) \\&- (e^{t_s} - 1) \cdot f(OPT) + (e^{t_s} - 1 - t_s) \cdot f(Z \cup OPT)\\
	={} &
	(t_se^{t_s} - e^{t_s} + 1) \cdot [f(OPT) - f(Z \cup OPT)]
	\geq
	0
	\enspace.
\end{align*}
}%
where the inequality uses the fact that $f(OPT) \geq f(OPT \setminus Z)$ because of the down-closure of $P$. On the other hand, if $f(OPT) < f(Z \cup OPT)$, then
\begin{align*}
	t_s \cdot A - e^{t_s} \cdot h_1(t_s)
	={} &
	t_s \cdot f(OPT) - (e^{t_s} - 1) \cdot f(OPT \setminus Z) + (e^{t_s} - 1 - t_s) \cdot f(Z \cup OPT)\\
	\geq{} &
	t_s \cdot f(OPT) -(e^{t_s} - 1) \cdot f(OPT) + (e^{t_s} - 1 - t_s) \cdot f(OPT)
	=
	0
	\enspace.
\end{align*}

Using the above observations and the induction hypothesis, we can now get
\begin{align*}
    h_2(t + \delta_t)
    ={} &
    h_2(t) + \int_t^{t + \delta_t} h'(\tau) d\tau
    =
    h_2(t) + \int_t^{t + \delta_t} e^{-\tau} \cdot \{(1 - \tau + t_s) \cdot A - e^{t_s} \cdot h_1(t_s)\} d\tau\\
    \leq{} &
    h_2(t) + \delta_t \cdot e^{-t} \cdot \{(1 - t + t_s) \cdot A - e^{t_s} \cdot h_1(t_s)\}
    =
    (1 - \delta_t) h_2(t) + \delta_t \cdot e^{-t} \cdot A\\
    \leq{} &
    (1 - \delta_t) g(t) + \delta_t \cdot e^{-t} \cdot A
    =
    g(t + \delta_t) \enspace.
		\qedhere
\end{align*}
\end{proof}

The last two lemmata give us the promised closed-form lower bound on $g(t)$, which can be used to lower bound the approximation ratio of Algorithm~\ref{alg:aided_mcgreedy}.

\begin{corollary} \label{cor:approximation_ratio}
$\bE[F(y(1))] \geq e^{t_s - 1} \cdot [(2 - t_s - e^{-t_s} - O(n^{-1})) \cdot f(OPT) - (1 - e^{-t_s}) \cdot f(Z \cap OPT) - (2 - t_s - 2e^{-t_s}) \cdot f(Z \cup OPT)]$.
\end{corollary}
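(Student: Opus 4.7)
The plan is to chain together the lemmas already established, evaluate at $t = 1$, and then take the expectation. Conditioned on the high probability event $\cA$, Lemmas~\ref{lem:h_2_bound} and~\ref{le:g_bound} combine at $t = 1$ to give
\[
    h_2(1) \leq g(1) \leq F(y(1)) + O(n^{-1}) \cdot f(OPT) \enspace.
\]
So the core task reduces to simplifying $h_2(1)$ into the clean form appearing in the corollary.

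Next I would substitute the definitions of $h_2$ and $h_1$ at the appropriate points and handle the two possibilities for $\max\{f(OPT) - f(Z \cup OPT), 0\}$ separately. Define $A = f(OPT) + (e^{t_s} - 1) \cdot \max\{f(OPT) - f(Z \cup OPT), 0\}$, so that $h_2(1) = e^{-1}\{(1 - t_s) A + e^{t_s} h_1(t_s)\}$. In the case $f(OPT) \geq f(Z \cup OPT)$, expanding and collecting terms gives
\[
    h_2(1) = e^{t_s - 1}\{(1 - t_s) f(OPT) + (1 - e^{-t_s}) f(OPT \setminus Z) - (2 - t_s - 2e^{-t_s}) f(Z \cup OPT)\},
\]
and substituting $f(OPT \setminus Z) \geq f(OPT) - f(OPT \cap Z)$ (which follows from submodularity applied to $OPT \cap Z$ and $OPT \setminus Z$, together with $f(\varnothing) \geq 0$) yields exactly the desired expression.

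In the complementary case $f(OPT) < f(Z \cup OPT)$ we have $A = f(OPT)$, and a direct expansion gives
\[
    h_2(1) = e^{t_s - 1}\{(1 - t_s) e^{-t_s} f(OPT) + (1 - e^{-t_s}) f(OPT \setminus Z) - (1 - e^{-t_s} - t_s e^{-t_s}) f(Z \cup OPT)\}.
\]
Subtracting the target expression and factoring out $e^{t_s - 1}(1 - e^{-t_s})$, the residual becomes $(1 - t_s)[f(Z \cup OPT) - f(OPT)] + [f(OPT \cap Z) + f(OPT \setminus Z) - f(OPT)]$; both brackets are non-negative (the first by the case assumption, the second by submodularity), so the desired inequality holds here too.

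Finally, taking expectations, $\bE[F(y(1))] \geq \Pr[\cA] \cdot \bE[F(y(1)) \mid \cA] \geq (1 - n^{-1})(B - O(n^{-1}) f(OPT))$, where $B$ denotes the right-hand side of the target bound (without the $O(n^{-1})$ slack); using the trivial bound $B \leq O(f(OPT))$ coming from $\characteristic_u \in P$, the extra $n^{-1} B$ loss is absorbed into the $O(n^{-1}) \cdot f(OPT)$ correction on the $f(OPT)$ coefficient. I expect the main technical obstacle to be the algebraic bookkeeping for the two cases of the $\max$, in particular verifying that the coefficient of $f(Z \cup OPT)$ comes out the same in both cases only after exploiting submodularity of $f$ at $OPT \cap Z$ and $OPT \setminus Z$.
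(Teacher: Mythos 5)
Your proposal is correct and follows essentially the same route as the paper: chain Lemma~\ref{le:g_bound} with Lemma~\ref{lem:h_2_bound} at $t = 1$, use the submodularity bound $f(OPT \setminus Z) \geq f(OPT) - f(Z \cap OPT)$ to simplify $h_2(1)$, and absorb the conditioning on $\cA$ via the law of total expectation and $\Pr[\cA] \geq 1 - n^{-1}$. The only difference is cosmetic: your case analysis on $\max\{f(OPT) - f(Z \cup OPT), 0\}$ (which checks out algebraically) is avoidable, since the coefficient $(1 - t_s)(e^{t_s} - 1) \geq 0$ lets one simply replace the max by $f(OPT) - f(Z \cup OPT)$ in all cases, as the paper does; also, the bound $B \leq O(f(OPT))$ you invoke at the end follows from non-negativity of $f$ and of the coefficients $(1 - e^{-t_s})$ and $(2 - t_s - 2e^{-t_s})$, not from $\characteristic_u \in P$.
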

\begin{proof}
By Lemma~\ref{le:g_bound}, given $\cA$,
\[
	F(y(1))
	\geq
	g(1) - O(n^{-1}) \cdot f(OPT)
	\enspace.
\]
By Lemma~\ref{lem:h_2_bound},
{\allowdisplaybreaks
\begin{align*}
	g(1)
	\geq{} &
	h_2(1)\\
	={} &
	e^{-1} \cdot \{(1 - t_s) \cdot [f(OPT) + (e^{t_s} - 1) \cdot \max\{f(OPT) - f(Z \cup OPT), 0\}] \\
	&+ (e^{t_s} - 1) \cdot f(OPT \setminus Z) - (e^{t_s} - 1 - t_s) \cdot f(Z \cup OPT)\}\\
	\geq{} &
	e^{-1} \cdot \{(1 - t_s) \cdot [e^{t_s} \cdot f(OPT) - (e^{t_s} - 1) \cdot f(Z \cup OPT)] \\
	&+ (e^{t_s} - 1) \cdot [f(OPT) - f(Z \cap OPT)] - (e^{t_s} - 1 - t_s) \cdot f(Z \cup OPT)\}\\
	={} &
	e^{t_s - 1} \cdot [(2 - t_s - e^{-t_s}) \cdot f(OPT) - (1 - e^{-t_s}) \cdot f(Z \cap OPT) \\&- (2 - t_s - 2e^{-t_s}) \cdot f(Z \cup OPT)]
	\enspace,
\end{align*}
}
where the second inequality holds since the submodularity and non-negativity of $f$ imply
\[
	f(OPT \setminus Z)
	\geq
	f(OPT) + f(\varnothing) - f(Z \cap OPT)
	\geq
	f(OPT) - f(Z \cap OPT)
	\enspace.
\]

Combining the above observations we get that, given $\cA$,
\begin{align*}
	F(y(1))
	\geq{} &
	e^{t_s - 1} \cdot [(2 - t_s - e^{-t_s} - O(n^{-1})) \cdot f(OPT) - (1 - e^{-t_s}) \cdot f(Z \cap OPT) \\&- (2 - t_s - 2e^{-t_s}) \cdot f(Z \cup OPT)]
	\enspace.
\end{align*}
Since $F(y(1))$ is always non-negative, this implies, by the law of total expectation,
\begin{align*}
	\bE[F(y(1))]
	\geq{} &
	\Pr[\cA] \cdot \{e^{t_s - 1} \cdot [(2 - t_s - e^{-t_s} - O(n^{-1})) \cdot f(OPT) - (1 - e^{-t_s}) \cdot f(Z \cap OPT) \\&- (2 - t_s - 2e^{-t_s}) \cdot f(Z \cup OPT)]\}\\
	\geq{} &
	\{e^{t_s - 1} \cdot [(2 - t_s - e^{-t_s} - O(n^{-1})) \cdot f(OPT) - (1 - e^{-t_s}) \cdot f(Z \cap OPT) \\&- (2 - t_s - 2e^{-t_s}) \cdot f(Z \cup OPT)]\} \\
	& - \frac{1}{n}\cdot e^{t_s - 1} \cdot (2 - t_s - e^{-t_s} - O(n^{-1})) \cdot f(OPT)\\
	={} &
	e^{t_s - 1} \cdot [(2 - t_s - e^{-t_s} - O(n^{-1})) \cdot f(OPT) - (1 - e^{-t_s}) \cdot f(Z \cap OPT) \\&- (2 - t_s - 2e^{-t_s}) \cdot f(Z \cup OPT)]
	\enspace,
\end{align*}
where the second inequality holds since $\Pr[\cA] \geq 1 - n^{-1}$ by Corollary~\ref{cor:A_prob}.
\end{proof}

Theorem~\ref{thm:auxiliary_algorithm_simplified} now follows immediately by combining Corollaries~\ref{cor:feasible_formal} and~\ref{cor:approximation_ratio}. 

\end{document}